\newtheorem{theorem}{Theorem}
\newtheorem{lemma}{Lemma}
\newtheorem{definition}{Definition}
\newcommand{\R}{\mathbb{R}}
\renewcommand{\P}{\mathcal{P}}
\newcommand{\tcomp}{\Pi}
\newcommand{\T}{\mathcal{T}}
\renewcommand{\L}{\mathcal{L}}
\newcommand{\tp}[2]{p(#1, #2)}
\newcommand{\pt}[2]{pt(#1, #2)}
\newcommand{\setp}{P}
\newcommand{\W}{\mathcal{W}}
\newcommand{\F}{\mathcal{F}}
\newcommand{\Conv}{\textup{CH}}
\title{A Simple Sweep Line Algorithm for Counting Triangulations and Pseudo-triangulations}
\author{Victor Alvarez\thanks{Fachrichtung Informatik, Universit\"{a}t des Saarlandes, {\tt{alvarez@cs.uni-saarland.de}}. Partially Supported by CONACYT-DAAD of M\'{e}xico.} \and Karl Bringmann\thanks{Max-Planck-Institut f\"ur Informatik, {\tt{kbringma@mpi-inf.mpg.de}}.} \and Saurabh Ray\thanks{Max-Planck-Institut f\"ur Informatik. {\tt{saurabh@mpi-inf.mpg.de}}.}}
\date{\today}
\begin{document}
\maketitle

\begin{abstract}
Let $\setp\subset\R^{2}$ be a set of $n$ points. In~\cite{DBLP:conf/compgeom/Aichholzer99} and~\cite{DBLP:conf/wads/AichholzerRSS03} an algorithm for counting triangulations and pseudo-tri\-an\-gu\-la\-tions of $\setp$, respectively, is shown. Both algorithms are based on the divide-and-conquer paradigm, and both work by finding sub-structures on triangulations and pseudo-triangulations that allow the problems to be split. These sub-structures are called \emph{triangulation paths} for triangulations, or T-paths for short, and \emph{zig-zag paths} for pseudo-triangulations, or PT-paths for short. Those two algorithms have turned out to be very difficult to analyze, to the point that no good analysis of their running time has been presented so far. The interesting thing about those algorithms, besides their simplicity, is that they experimentally indicate that counting can be done significantly faster than enumeration.

In this paper we show two new algorithms, one to compute the number of triangulations of $\setp$, and one to compute the number of pseudo-triangulations of $\setp$. They are also based on T-paths and PT-paths respectively, but use the sweep line paradigm and not divide-and-conquer. The important thing about our algorithms is that they admit a good analysis of their running times. We will show that our algorithms run in time $O^{*}(t(\setp))$ and $O^{*}(pt(\setp))$ respectively, where $t(\setp)$ and $pt(\setp)$ is the largest number of T-paths and PT-paths, respectively, that the algorithms encounter during their execution. Moreover, we show that $t(\setp) = O^{*}(9^{n})$, which is the first non-trivial bound on $t(\setp)$ to be known. 

While the algorithm for counting triangulations of~\cite{DBLP:conf/compgeom/AlvarezBCR12} is faster in the worst case, $O^{*}\left(3.1414^{n}\right)$, than our algorithm, $O^{*}\left(9^{n}\right)$, there are sets of points where the number of T-paths is $O(2^{n})$. In such cases our algorithm may be faster. Furthermore, it is not clear whether the algorithm presented in \cite{DBLP:conf/compgeom/AlvarezBCR12} can be modified to count pseudo-triangulations so that its running time remains $O^{*}(c^n)$ for some small constant $c\in\R$. Therefore, for counting pseudo-triangulations (and possibly other similar structures) our approach seems better.
\end{abstract}

\section{Introduction}

Let $\setp\subset\R^{2}$ be a set of $n$ points. A triangulation of $\setp$ is a crossing-free structure (straight-edge plane graph) on $\setp$ such that the boundary of its outer face coincides with the convex hull,  $\Conv(\setp)$, of $\setp$, and where \emph{all} bounded faces are empty triangles. A pseudo-triangle is an empty simple polygon having \emph{exactly} three convex vertices, that is, the internal angle at those vertices is strictly less than $\pi$. An example can be seen to the left in Figure~\ref{intro:figs:3}. A pseudo-triangulation of $\setp$ is a crossing-free structure on $\setp$ such that the boundary of its outer face coincides with $\Conv(\setp)$, and where \emph{all} bounded faces are pseudo-triangles. A pseudo-triangulation can be seen to the right in Figure~\ref{intro:figs:3}.

\begin{figure}[!htb]
	\begin{center}	
		\includegraphics[height=4cm]{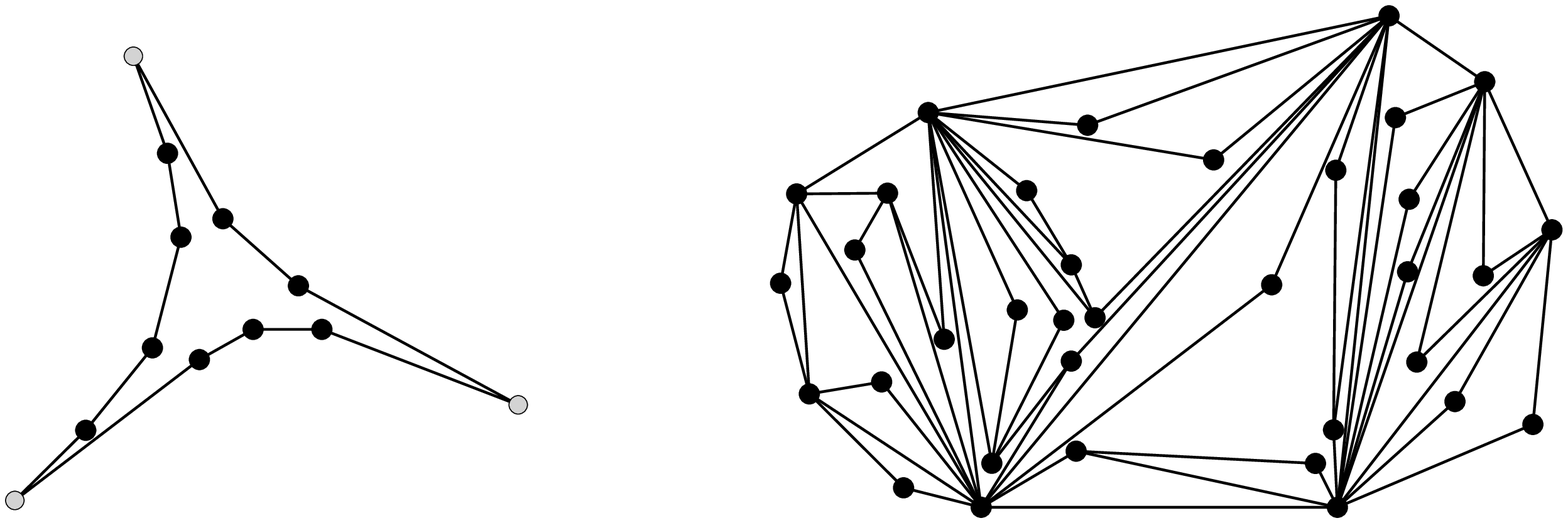}
	\end{center}
	\caption{A pseudo-triangle to the left. The three gray vertices are the three convex vertices. A pseudo-triangulation of $\setp$ can be seen to the right.}
	\label{intro:figs:3}
\end{figure}

While triangulations require essentially no introduction due to their many applications, pseudo-triangulations are way less known. Pseudo-triangulations were originally used in~\cite{DBLP:journals/dcg/PocchiolaV96} for sweeping complexes, and in~\cite{DBLP:journals/algorithmica/ChazelleEGGHSS94,DBLP:journals/jal/GoodrichT97} for ray-shooting. However, it was until a paper of Ileana Streinu appeared, see~\cite{DBLP:conf/focs/Streinu00}, that pseudo-triangulations really took off as a main research topic, due to their structural richness. In the same paper,~\cite{DBLP:conf/focs/Streinu00}, a particular kind of pseudo-triangulations was introduced, the so-called \emph{pointed} pseudo-triangulations. In a pointed pseudo-triangulation \emph{every} vertex is incident to an angle larger than $\pi$, and its characterization is very rich. The following is just a subset of equivalences found in~\cite{DBLP:conf/focs/Streinu00}:

\begin{theorem}[I. Streinu]\label{c-tri:theorems:pointed-pt}
	Let $G$ be a straight-edge plane graph on a set of points $\setp$. The following properties are equivalent:
	\begin{itemize}
		\item $G$ is a pointed pseudo-triangulation.
		\item $G$ is a pseudo-triangulation having the \emph{minimum} number of edges, and thus also the \emph{minimum} number of pseudo-triangles.
		\item The set of edges of $G$ forms a \emph{maximal}, by inclusion, planar and pointed set of edges, \emph{i.e.}, a set of edges whose union is crossing-free, and in which every vertex is incident with an angle larger than $\pi$.
	\end{itemize}
\end{theorem}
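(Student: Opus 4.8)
\noindent\emph{Proof plan.}\ The engine behind the whole equivalence is a single computation: Euler's formula combined with a count of ``reflex corners'' (face angles larger than~$\pi$), so the plan is to set that up first. Let $G$ be a pseudo-triangulation of $\setp$, write $n=|\setp|$, let $h$ be the number of vertices on $\Conv(\setp)$, and let $e$ and $f$ be the numbers of edges and of bounded faces (pseudo-triangles) of $G$. Euler's formula gives $e=n+f-1$. Each hull vertex contributes exactly one reflex corner, namely the exterior wedge of the outer face; each interior vertex contributes at most one, since the bounded faces around it have angles summing to $2\pi$; and a pseudo-triangle with $k$ vertices has exactly $k-3$ reflex corners. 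Summing over the bounded faces, and using that each interior edge borders two of them while each hull edge borders one, the number of reflex corners inside bounded faces is $\big(\sum_F k_F\big)-3f=(2e-h)-3f$; matching this with the per-vertex count shows that the number of pointed vertices of $G$ equals $h+(2e-h-3f)=2e-3f=2n-2-f$. Hence $f\ge n-2$ and $e\ge 2n-3$, with equality in either inequality precisely when $G$ is pointed. Running the same bookkeeping on an arbitrary pointed plane straight-line graph on $\setp$ (where every bounded face merely has at least three convex corners) gives $e\le 2n-3$, with equality iff the outer boundary equals $\Conv(\setp)$ and every bounded face is a pseudo-triangle.

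Given this, $(1)\Leftrightarrow(2)$ is immediate: a pointed pseudo-triangulation realizes $e=2n-3$ and $f=n-2$, which by the identity above is the smallest possible number of edges (equivalently of pseudo-triangles) among all pseudo-triangulations of $\setp$, and conversely any pseudo-triangulation attaining this value is pointed. That the value is attained at all — so the wording of $(2)$ is not vacuous — will follow from $(3)\Rightarrow(1)$ below applied to any inclusion-maximal pointed plane graph, which exists since $\setp$ is finite. For $(1)\Rightarrow(3)$, a pointed pseudo-triangulation $G$ is plane and pointed by definition and has $e=2n-3$ edges, which is the maximum number of edges of a pointed plane straight-line graph on $\setp$ by the bound just proved; therefore no edge can be inserted into $G$ without destroying planarity or pointedness, i.e.\ $G$ is maximal.

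The substantive direction is $(3)\Rightarrow(1)$. Let $G$ be an inclusion-maximal plane pointed set of edges on $\setp$. First I would check that $G$ contains every edge of $\Conv(\setp)$: a missing hull edge crosses nothing, and its insertion preserves pointedness because its endpoints are hull vertices and keep their exterior wedges. Next I would show that every bounded face $F$ of $G$ is a pseudo-triangle. Suppose not; treating $F$ as a simple polygon (non-simple, ``pinched'' faces must be handled separately, e.g.\ by splitting at the pinch vertex), it then has at least four convex corners. Choose two cyclically consecutive convex corners $x,y$ of $F$ and consider the shortest path $\gamma$ inside $\overline{F}$ from $x$ to $y$. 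If $\gamma$ is the segment $xy$, then $xy$ is a chord of $F$, and adding it keeps $G$ pointed: both $x$ and $y$ are convex corners of $F$, so each already witnesses its required reflex angle in some other incident region (its exterior wedge if it is a hull vertex, another bounded face otherwise, since $G$ is pointed). If $\gamma$ bends, then after discarding any initial edges of $\gamma$ that coincide with edges of $F$ one is left with a proper chord of $F$ that is an edge of $\gamma$ incident to a reflex corner $r$ of $F$, and the fact that a geodesic ``hugs'' $r$ forces the angle at $r$ on the far side of that chord to stay larger than $\pi$; so that chord can be added too. Either way maximality is contradicted, so all bounded faces are pseudo-triangles and, together with $G\supseteq\partial\Conv(\setp)$, the equality case of the counting bound gives $e=2n-3$: $G$ is a pointed pseudo-triangulation.

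The hard part is precisely this last face-splitting step: one must guarantee that every non-pseudo-triangular face admits a chord whose insertion keeps \emph{every} vertex pointed, and the geodesic-based choice of chord is what makes this work, with degenerate faces and possible collinearities (angles equal to $\pi$) needing some extra care. Everything else reduces to the two applications of ``Euler plus reflex-corner counting'' from the first paragraph.
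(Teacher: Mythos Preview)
The paper does \emph{not} prove this theorem at all: it is quoted as a background result of Streinu, with a reference to \cite{DBLP:conf/focs/Streinu00}, and no proof is given in the text. So there is nothing in the paper to compare your proposal against.

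That said, your outline is essentially the standard proof one finds in the literature (Streinu's paper and the pseudo-triangulation survey~\cite{cg-surveys}). The Euler-plus-reflex-corner bookkeeping giving $e=2n-3$ and $f=n-2$ exactly for pointed pseudo-triangulations is correct, and the geodesic-diagonal argument for $(3)\Rightarrow(1)$ is the right idea: in a face with at least four convex corners, the geodesic between two consecutive convex corners yields a chord whose addition preserves pointedness. Two places deserve a bit more care than your sketch gives them. First, in the counting step you should also argue that a maximal pointed plane graph is connected and uses every point of $\setp$ as a vertex, so that Euler's formula applies in the form you use and the per-vertex reflex-corner count is well defined; this is easy (an isolated or disconnected point can be joined by a segment without destroying pointedness), but it is a step. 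Second, in the geodesic step your treatment of the endpoint $x$ (or $y$) is slightly glib: you say that since $x$ is a convex corner of $F$ its witnessing reflex angle must lie elsewhere, but one must also check that the chord you add does not destroy that witnessing reflex angle at $x$; the usual fix is to take the geodesic between two \emph{non-adjacent} convex corners, so that the first edge of the geodesic leaving $x$ stays strictly inside $F$ and the large angle at $x$ survives intact on one side. With those two points tightened, your argument is complete.
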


Pointed pseudo-triangulations have found interesting applications in robot arm motion planning, see~\cite{DBLP:conf/focs/Streinu00}, and have been the subject of extensive research, see the survey on pseudo-triangulations in~\cite{cg-surveys}, which is an excellent reference for most known results to date on pseudo-triangulations.

In this work we will be concerned only with pointed pseudo-triangulations, so we will drop the ``pointed'' part and we will only call them pseudo-triangulations. So, unless otherwise stated, our pseudo-triangulations are \emph{always} pointed. No confusion shall arise.

Knowing what triangulations and pseudo-triangulations are, we can talk about the classes $\F_{T}(\setp)$ and $\F_{PT}(\setp)$ of \emph{all} triangulations and \emph{all} pseudo-triangulations of a given set of $n$ points $\setp$ respectively, and ask about their sizes, how large are they? We can actually think about two flavors of this question: (\oldstylenums{1}) What is the \emph{largest} or \emph{smallest} they can get over all sets $\setp\subset\R^{2}$ of $n$ points? or (\oldstylenums{2}) Given $\setp$, what is the \emph{exact} size of a desired class? 

The first question mentioned above requires usually heavy mathematical machinery since the number of \emph{combinatorially different} configurations of $n$ points is too large to be explored by computer, see~\cite{DBLP:journals/dcg/GoodmanP86}. Thus, the first question is of rather theoretical flavor and it has actually spawned a large amount of research over almost 30 years, which started with the seminal work of Ajtai, Chv\'{a}tal, Newborn and Szemer\'{e}di, where they showed that the number of \emph{all} crossing-free structures on any set of $n$ points on the plane can be at most $10^{13n}$, see~\cite{Ajtai19829}. This bound implies that the size of \emph{each} class of crossing-free structures on $\setp$ can be upper-bounded by $c^{n}$, with $c\in\R$ depending on the particular class. Since then research has focused on fine-tuning $c$. For example, in the case of triangulations, the most popular in recent years, it is currently known that $2.4\leq c\leq 30$, see~\cite{DBLP:journals/combinatorics/SharirS11} for the upper bound and~\cite{DBLP:journals/jct/SharirSW11} for the lower bound. Thus \emph{every} set $\setp$ of $n$ points on the plane fulfills $|\F_{T}(\setp)| = \Omega(2.4^{n})$ and $|\F_{T}(\setp)| = O(30^{n})$. For the class of pseudo-triangulations not much is known. For example, it is known that $c$ attains its minimum value for sets of points in convex position, \emph{i.e.}, $c\geq 4$, see~\cite{DBLP:journals/comgeo/AichholzerAKS04}. It is also known that $|\F_{PT}(\setp)|\leq 3^{i}|\F_{T}(\setp)|$, where $i$ is the number of interior points of $\setp$, see~\cite{DBLP:conf/cccg/RandallRSS01}.

As for the second question mentioned before, we always assume that we are given a set $\setp$ of $n$ points on the plane and we are interested in computing the exact values of $|\F_{T}(\setp)|, |\F_{PT}(\setp)|$, for example, the set of 32 red points presented in Figure~\ref{c-tri:figs:0}, representing the State Capitals of Mexico, spans \emph{exactly} $6\ 887\ 011\ 250\ 368\ 237\ 767\approx 3.8787^{32}$ triangulations.

\begin{figure}[!htb]
	\begin{center}
		\includegraphics[height=6cm]{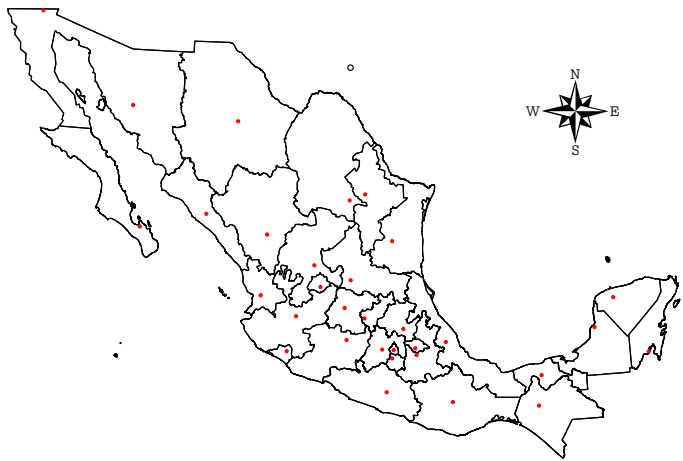}
	\end{center}
	\caption{A set of 32 points representing the State Capitals of Mexico.}
	\label{c-tri:figs:0}
\end{figure}

The second question is thus of empirical flavor, and therefore algorithmic, since no closed-form formula is known, in general, for $|\F_{T}(\setp)|, |\F_{PT}(\setp)|$. It is then important to come up with methods (algorithms) that can compute their sizes efficiently. A first approach would be to produce \emph{all} elements of the desired class, using methods for enumeration, see for example~\cite{DBLP:journals/dam/AvisF96,DBLP:conf/cccg/Bespamyatnikh02a,DBLP:journals/comgeo/Bereg05,DBLP:journals/dcg/KatohT09}, and then simply count the number of elements. This has the obvious disadvantage that the total time spent will be, at best, linear in the number of elements counted, which, by the first part, is always exponential in the size of the input. Thus, the crucial question is whether $|\F_{T}(\setp)|, |\F_{PT}(\setp)|$ can be computed faster, say, for starters, in time \emph{sub-linear} in the number of elements counted. Currently this is only known for the super class of \emph{all} crossing-free structures on the given set $\setp$ of $n$ points, see~\cite{DBLP:conf/birthday/RazenW11}. For the particular class $\F_{T}(\setp)$ a new algorithm that counts the triangulations of $\setp$ in time $n^{O(k)}$ was recently shown in~\cite{DBLP:conf/compgeom/AlvarezBCR12}, where $k$ is the number of onion layers of $\setp$. This algorithm runs in polynomial time whenever $k$ is constant, and thus it is faster than enumeration by an exponential speed-up. The authors of~\cite{DBLP:conf/compgeom/AlvarezBCR12} showed that even when $k = \Theta(n)$, their algorithm has worst-case running time of $O^{*}\left(3.1414^{n}\right)$. While that algorithm is faster in the worst case than the algorithm presented in this paper, see Theorem~\ref{c-tri:theorems:our-t-paths} on page~\pageref{c-tri:theorems:our-t-paths}, there are sets of points where the number of T-paths is $O(2^{n})$. In such cases our algorithm may be faster. Furthermore, our algorithm can easily be modified to count pseudo-triangulations - and the running time remains $O(\text{poly}(n)\cdot pt(\setp))$, where $pt(\setp)$ denotes the largest number of PT-paths with respect to a segment, see Theorem~\ref{c-tri:theorems:pt-paths} on page~\pageref{c-tri:theorems:pt-paths}. It is not clear whether the algorithm presented in \cite{DBLP:conf/compgeom/AlvarezBCR12} can be modified to count pseudo-triangulations so that its running time remains $O(c^n)$ for some small constant $c$. Therefore, for counting pseudo-triangulations (and possibly other similar structures) our approach seems better. There are also other algorithms that seem to count faster than enumeration, for $\F_{T}(\setp)$ and $\F_{PT}(\setp)$, see~\cite{DBLP:conf/compgeom/Aichholzer99,ray-seidel,DBLP:conf/wads/AichholzerRSS03}, but where no theoretical runtime guarantees are known.

\section{Our contribution}\label{c-tri:t-paths:sections:contribution}

In this paper we are fully devoted to the second question, namely, the algorithmic version of the problem of counting triangulations and pseudo-triangulations. We will only be concerned about algorithms with provable running times. 

\subsection{The result on counting triangulations}

In order to state our results we will require some definitions, which for clarity we state first:

\begin{definition}[Separating line]\label{c-tri:def:sep-line}
	Let $\setp$ be a non-empty set of points on the plane, and let $l$ be a straight line such that $l\cap\setp = \emptyset$ but $l\cap \Conv(\setp)\neq\emptyset$, then $l$ will be called a \emph{separating line} w.r.t.~$\setp$.
\end{definition}

\begin{definition}[T-path]\label{c-tri:def:t-paths}
	Given a non-empty set of points $\setp$ on the plane, a triangulation $T$ of $\setp$, and a separating line $l$ w.r.t.~$\setp$, a T-path of $T$ w.r.t.~$l$, denoted by $\tp{l}{T}$, is defined as follows: (\oldstylenums{1}) $\tp{l}{T}$ is a chain of edges of $T$ where every edge of $\tp{l}{T}$ intersects $l$. (\oldstylenums{2}) Starting and ending edges of $\tp{l}{T}$ are two edges of $\Conv(\setp)$ intersected by $l$. (\oldstylenums{3}) The area bounded by two consecutive edges of $\tp{l}{T}$ and $l$ must be empty of points of $\setp$. See to the left in Figure~\ref{c-tri:figs:1} for an example of a T-path $\tp{l}{T}$.
\end{definition}

\begin{figure}[!htb]
	\begin{center}	
		\includegraphics[height=4cm]{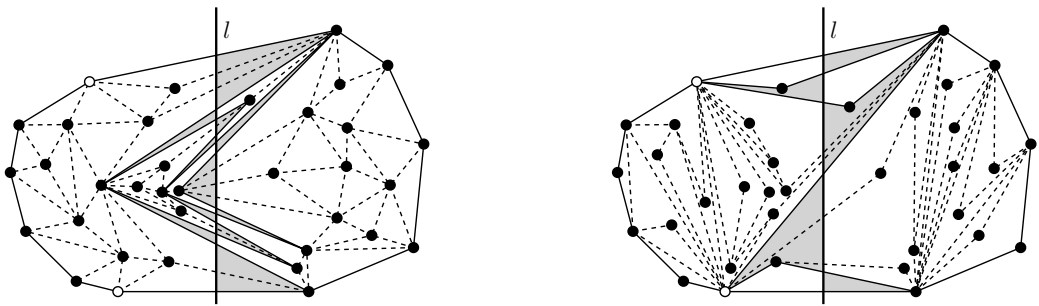}
	\end{center}
	\caption{To the left a T-path $\tp{l}{T}$, shown in solid lines, of a triangulation $T$ with vertex set $\setp$. To the right a PT-path $\pt{l}{S}$, shown also in solid lines, of a pseudo-triangulation $S$ with vertex set $\setp$. The gray areas are the areas bounded by two consecutive edges of the paths and line $l$, which are empty of points of $\setp$.}
	\label{c-tri:figs:1}
\end{figure}

T-paths were originally introduced by Oswin Aichholzer in 1999 in~\cite{DBLP:conf/compgeom/Aichholzer99}. What makes them relevant is the following theorem, also presented in~\cite{DBLP:conf/compgeom/Aichholzer99}:

\begin{theorem}[O. Aichholzer]\label{c-tri:theorems:t-pathsOswin}
	Let $\setp$ be a set of points and $l$ a separating line w.r.t.~$\setp$. Then the following holds: (\oldstylenums{1}) For every triangulation $T$ of $\setp$ there always exists a T-path $\tp{l}{T}$. (\oldstylenums{2}) $\tp{l}{T}$ is unique for $T$. (\oldstylenums{3}) If $T$ and $T^{\prime}$ are two triangulations of $\setp$, then $\tp{l}{T}$ and $\tp{l}{T^{\prime}}$ are either equal, or properly intersect each other, \emph{i.e.}, there are intersection points lying in the strict interior of their edges.
\end{theorem}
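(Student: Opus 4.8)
The plan is to prove the three parts of Theorem~\ref{c-tri:theorems:t-pathsOswin} essentially by walking along the line $l$ and tracking which face of the triangulation $l$ currently occupies. Orient $l$ and record the ordered sequence of faces $f_0, f_1, \dots, f_k$ that $l$ passes through, where $f_0$ and $f_k$ are the outer face (before entering and after leaving $\Conv(\setp)$) and $f_1, \dots, f_{k-1}$ are bounded triangles. Every time $l$ crosses from one face to the next it crosses exactly one edge $e_i$ of $T$ (genericity: $l$ misses all vertices since $l \cap \setp = \emptyset$; we may also perturb so $l$ is not parallel to any edge, or argue directly). This gives a sequence of edges $e_1, \dots, e_k$, each crossing $l$, with $e_1, e_k$ on $\Conv(\setp)$, and consecutive edges $e_i, e_{i+1}$ sharing the triangle $f_i$.

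\textbf{Part (1): existence.} First I would argue that the sequence $e_1,\dots,e_k$ just constructed is a chain of edges: $e_i$ and $e_{i+1}$ are two sides of the triangle $f_i$, hence share a vertex, so consecutively adjacent edges form a connected polygonal chain. Condition (1) of the T-path definition (every edge crosses $l$) holds by construction. Condition (2) holds because the first and last edges crossed are the two hull edges through which $l$ enters and exits $\Conv(\setp)$. For condition (3), observe that the region bounded by two consecutive edges $e_i, e_{i+1}$ and the segment of $l$ between their crossing points is exactly the part of the triangle $f_i$ on one side of $l$; since $f_i$ is an \emph{empty} triangle, this region contains no point of $\setp$ in its interior. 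Hence $\tp{l}{T} := (e_1,\dots,e_k)$ is a valid T-path. The one subtlety to handle carefully is that the ``chain'' in the definition should not revisit a vertex in a degenerate way; I would note that because the $f_i$ are distinct faces swept in order along $l$, the crossing points $e_i \cap l$ are monotone along $l$, which rules out the chain doubling back.

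\textbf{Part (2): uniqueness.} Given \emph{any} T-path $Q = (g_1, \dots, g_m)$ of $T$ w.r.t.~$l$, I want to show $Q$ equals the canonical path above. Since the crossing points $g_j \cap l$ are distinct points on $l$ (edges of a plane graph meet $l$ in distinct points) and consecutive ones bound an empty region with $l$, walk along $l$: I claim the $g_j$, sorted by their crossing point along $l$, must be precisely the faces-sequence edges. The key local fact: if $g_j$ crosses $l$ and the region between $g_j$ and the next chain edge together with $l$ is empty, then the triangle of $T$ incident to $g_j$ on that side must be exactly the next swept face $f$, and its third side is the next edge crossed by $l$ — because any other edge of $T$ inside that region would violate emptiness, and $T$ being a triangulation means the region must be triangulated. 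Formally I would do an induction on $j$ starting from $g_1 = e_1$ (the hull edge where $l$ enters is forced, as $l$ enters $\Conv(\setp)$ through a unique hull edge), showing $g_j = e_j$ for all $j$, and in particular $m = k$.

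\textbf{Part (3): equality or proper crossing.} Let $T, T'$ be triangulations with T-paths $p = \tp{l}{T}$ and $p' = \tp{l}{T'}$. Suppose they are not equal; I must show they properly cross, i.e.~some edge of $p$ has an interior intersection point with some edge of $p'$. Both paths start at the same hull edge $e_1$ (the unique hull edge where $l$ enters $\Conv(\setp)$) and end at the same hull edge $e_k$. Consider the first index $j$ at which they diverge, so the first $j-1$ edges coincide but the $j$-th edges $a$ (of $p$) and $a'$ (of $p'$) differ; both emanate from the common vertex $v = a_{j-1}\cap a_{j}$ shared with the previous (identical) edge, and both cross $l$. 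The emptiness of the gray regions on the near side of $l$ forces $a$ and $a'$ to be the ``first'' edge hit when rotating around $v$ within the slab between $l$ and the previous edge; since $a \ne a'$, one of them, say $a$, lies strictly between $l$ and $a'$ near $v$, so $a$ pokes into the empty region bounded by $a'$ and $l$. But downstream $p$ must get back to the common endpoint $e_k$ on the far side, so the chain $p$ must eventually cross back over $p'$. Tracking the two chains as monotone-in-$l$ polygonal paths sharing two endpoints, a standard Jordan-curve / parity argument gives an interior crossing. I would make this precise by considering the closed curve formed by $p$, a piece of $l$, and $p'$, and arguing it cannot be simple, yielding the proper intersection; the intersection cannot be at a shared vertex because the first divergence already separates them there, and can be pushed to be transversal by genericity.

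\textbf{Main obstacle.} I expect Part~(3) to be the delicate one: turning ``the two monotone chains share endpoints but differ'' into a \emph{proper} (interior, transversal) crossing requires care about degenerate incidences — two edges meeting only at a common vertex should not count — and a clean way to set up the Jordan-curve argument along $l$. Parts (1) and (2) are essentially bookkeeping on the sweep of $l$ through the faces of $T$, with the only fussiness being genericity (no vertex on $l$, handling edges parallel to $l$) and making ``chain'' precise.
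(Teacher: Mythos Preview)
The paper does not contain a proof of this theorem: it is stated with attribution to Aichholzer and cited from~\cite{DBLP:conf/compgeom/Aichholzer99}, and the authors use it as a black box throughout. So there is no ``paper's own proof'' to compare against.

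That said, your plan for parts~(\oldstylenums{1}) and~(\oldstylenums{2}) is essentially the standard argument (walk $l$ through the faces of $T$; the edges crossed form the T-path, and uniqueness follows because any T-path must cross $l$ in monotone order and the emptiness of wedges forces each next edge). One point you gloss over in~(\oldstylenums{2}) is why the crossings of an \emph{arbitrary} T-path with $l$ are monotone along $l$; this is not part of Definition~\ref{c-tri:def:t-paths} and needs to be derived from the wedge-emptiness condition before your induction can start.

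For part~(\oldstylenums{3}) your sketch has a small gap that you should make precise. When you say the $j$-th edges $a$ and $a'$ both emanate from the \emph{same} vertex $v$ of the common $(j{-}1)$-st edge, this is true, but only because the canonical T-path has wedges whose apices alternate sides of $l$, so the side of the shared vertex is determined by the parity of $j$; you should say this explicitly rather than leave it implicit. After that, the ``one path pokes into the empty wedge of the other and must later cross out'' argument is the right idea, but note that the crossing you obtain could in principle be at a later shared vertex rather than in the interior of an edge; ruling that out (so that the intersection is \emph{proper}) is exactly the delicate point you flag, and your Jordan-curve sketch does not yet handle it. A cleaner route is to compare the two monotone sequences of crossing points on $l$: if the paths differ, some edge of $p'$ must cross the interior of some wedge of $p$, and since that wedge is empty of points of $\setp$ the crossing cannot be at a vertex.
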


Moreover, in the same paper, Aichholzer designed an algorithm to compute $|\F_{T}(\setp)|$ based on T-paths and the divide-and-conquer paradigm. His algorithm experimentally exhibited a running time sub-linear in the number of triangulations counted, that is, that algorithm was apparently faster than enumeration. A formal proof of this fact is, however, hard to obtain since it is not clear how to show that a single T-path appears in many triangulations, even on average. Nonetheless, the running time of Aichholzer's algorithm can be bounded by the number of sub-problems that it generates. Since the algorithm is based on the divide-and-conquer paradigm, we can describe its running-time recurrence by $R(n) = 2t(n)\cdot R(n/2)$, where $t(i)$ denotes the number of T-paths encountered by the algorithm when $i$ points are considered. If we can show that $t(i)\leq a^{i}$, for some positive constant $a$, we have that $R(n)\leq 2\cdot a^{n}\cdot R(n/2)$, which gets solved to $O\left(a^{2n}\right)$. It is important to note here that $t = t(n)$ can become exponentially large, for example, Aichholzer showed that the convex polygon on $n$ vertices has roughly $O\left(2^{n}\right)$ T-paths, and in~\cite{DBLP:journals/comgeo/DumitrescuGPW01} a configuration is shown that has $\Omega\left(2^{2n - \Theta(\log(n))}\right)$ T-paths, which is essentially $4^{n}$, so the quadratic term in the running time of Aichholzer's algorithm becomes really expensive. The first contribution of ours that will be shown is the following theorem:

\begin{theorem}\label{c-tri:theorems:our-t-paths}
	Let $\setp$ be a given set of $n$ points on the plane. Then the exact value of $|\F_{T}(\setp)|$ can be computed in $O\left(n^{3}\cdot t\right)$ time, and $O(t)$ space, where $t$ is the largest number of T-paths the algorithm encounters when run on $\setp$. Moreover $t = O(9^{n})$.
\end{theorem}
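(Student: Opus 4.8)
The plan is to design a sweep line algorithm that processes the points of $\setp$ in some fixed order — say left to right — and, at each step, maintains a data structure indexed by partial combinatorial objects that together suffice to reconstruct the count of triangulations. The key structural fact to exploit is Theorem~\ref{c-tri:theorems:t-pathsOswin}: fixing a separating line $l$, every triangulation $T$ has a \emph{unique} T-path $\tp{l}{T}$, and two T-paths from different triangulations either coincide or cross. This means that if we sweep a separating line $l$ continuously from left to right, the T-path with respect to the current $l$ acts as a ``frontier'' that separates the already-processed part of a triangulation from the yet-to-be-processed part, and the number of triangulations sharing a given frontier T-path can be accumulated. So the algorithm would, for each position of the sweep line between two consecutive points, enumerate all possible T-paths $\pi$ across $l$ and store a table entry $N[\pi]$ counting the number of ways to triangulate the region strictly to the left of $\pi$ (consistently with $\pi$ being a genuine T-path of the completion). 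When the sweep line advances past the next point $p$, we update each entry: a new T-path $\pi'$ to the right of $p$ is obtained from one or more old T-paths $\pi$ to the left of $p$ by ``absorbing'' $p$ — i.e.\ re-triangulating the local region around $p$ between $\pi$ and $\pi'$ — and $N[\pi'] = \sum_{\pi} (\text{number of local triangulations between }\pi\text{ and }\pi') \cdot N[\pi]$. At the very end, when $l$ has passed all points, the unique T-path is the right portion of $\Conv(\setp)$ and $N$ of that entry is $|\F_{T}(\setp)|$.

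For the running-time and space bounds, first I would argue that the number of distinct T-paths the algorithm ever stores, over all positions of the sweep line, is at most $O(t)$ where $t$ is as in the statement — this essentially requires identifying ``the T-paths the algorithm encounters'' with the T-paths across the separating lines used by the sweep. Next I would bound the per-point work: when advancing past $p$, each old T-path $\pi$ gives rise to polynomially many (in fact $O(n)$, by a local-geometry argument about how $p$ can be inserted into the gap $\pi$ cuts) new T-paths, and the local count between $\pi$ and $\pi'$ is itself computable in $\mathrm{poly}(n)$ time (it is the number of triangulations of a small polygonal region, bounded because the region is ``between two crossing/nested paths and a line,'' and here one can use a standard polygon-triangulation-counting dynamic program or even a direct combinatorial formula). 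Multiplying the $O(n)$ sweep events by the $O(t)$ stored paths by the $O(n)$ successors by the $O(n)$-ish local bookkeeping gives $O(n^3 \cdot t)$; the space is $O(t)$ since we only need the current table. I would make sure the ``consecutive edges bound an empty region'' condition of Definition~\ref{c-tri:def:t-paths} is maintained as an invariant during updates, so that we never store non-T-paths and never overcount.

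For the bound $t = O(9^n)$, the plan is a direct combinatorial encoding argument, independent of the algorithm: I want to show that any single separating line $l$ admits at most $O(9^n)$ T-paths, since $t$ is a max of such quantities over the $O(n)$ lines the algorithm uses. A T-path $\tp{l}{T}$ is a monotone (with respect to $l$) chain of edges, each edge joining a point on one side of $l$ to a point on the other side, with the empty-region condition between consecutive edges. I would encode such a chain by the sequence of its vertices in the order they appear along the chain; since consecutive vertices alternate sides of $l$ only in a weak sense (the chain zig-zags across $l$), and crucially the emptiness condition forces a strong ``staircase'' monotonicity, each vertex can be charged to a constant number of ``choices'' relative to its predecessor. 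A clean way to get a concrete base: split $\setp$ by $l$ into $\setp_+$ and $\setp_-$ with $|\setp_+| + |\setp_-| = n$; argue that the T-path, restricted to each side, is an $x$-monotone-like polygonal chain on that side's points so that each subset of points on a side can host at most $3^{|\cdot|}$-many such chains (a Catalan-type / staircase count giving base $3$), and then the two sides combine to give $3^{|\setp_+|}\cdot 3^{|\setp_-|} = 3^n$, times a further $3^n$ slack for the interleaving of the two sub-chains along $l$, yielding $9^n$. The main obstacle I anticipate is precisely pinning down this per-side count: one must carefully use the empty-region condition (not just planarity) to bound how a T-path can wind through the points on one side of $l$, and getting the clean constant $3$ rather than something larger will require the right ordering/potential argument rather than a crude union bound. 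If the per-side argument only yields a larger constant, one falls back to proving the bound for the specific family of separating lines used by the sweep, where additional structure (the lines are ``between consecutive points'') may tighten the count.
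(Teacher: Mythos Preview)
Your high-level sweep line plan matches the paper's, but there are two concrete gaps.

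\textbf{Algorithmic part.} First, the transition weight ``number of local triangulations between $\pi$ and $\pi'$'' is always $1$, not a quantity you need to compute by a polygon-triangulation DP. The reason is the paper's Theorem~\ref{theorems:1}: the tuple $\{\tp{l_1}{T},\ldots,\tp{l_{n-1}}{T}\}$ determines $T$ uniquely, so a compatible pair $(\pi,\pi')$ at consecutive lines corresponds to at most one triangulation locally. Without this uniqueness lemma you cannot even argue your local count is well-defined (it could a priori depend on how the triangulation looks far away). Second, your claim that each $\pi$ has $O(n)$ successors is wrong: the local change near $p=p_{i+1}$ replaces a pattern $(a,b,p,c,d)$ by $(a,b,b',p,c',c,d)$, and the two new vertices $b',c'$ are chosen \emph{independently}, giving $O(n^{2})$ successors (Lemma~\ref{lemmas:local}). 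Your final $O(n^{3}t)$ happens to survive because the per-successor bookkeeping is $O(1)$ after an $O(n^{2})$ visibility precomputation, not $O(n)$ as you budgeted; but the accounting in your outline is off.

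\textbf{The $O(9^{n})$ bound.} Here your plan would fail. The restriction of a T-path to one side of $l$ is \emph{not} an $x$-monotone or staircase chain: the same point can appear several times in the left-side vertex sequence $N^{-}$ (the paper's example has $N^{-}=1,3,5,7,5,3$), so a ``each point contributes a constant number of choices, base $3$'' argument does not go through, and there is no separate ``$3^{n}$ interleaving slack'' to invoke. The paper's actual argument is substantially harder: fix a $k$-element subset $\setp_{k}$ of the left-side points that all appear in $N^{-}$, set up a bivariate recurrence (splitting at the last occurrence of the leftmost point) $f_{k}=g_{k}+\sum f_{i}g_{k-i}$, $g_{k}=h_{k}+f_{k-1}+\sum f_{i}g_{k-i}$, solve it with generating functions to get $F(x)=\frac{1-\sqrt{1-8x}}{3+\sqrt{1-8x}}$ and hence $f_{k}\lesssim 8^{k}$, and only then sum over subsets to obtain $\sum_{k}\binom{a}{k}8^{k}=9^{a}$ on each side. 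The split-by-side idea you propose is the right starting point, but the per-side count is an $8^{k}$-per-subset generating-function computation, not a $3^{k}$ monotone-chain bound.
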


Thus the running time of our algorithm for computing $|\F_{T}(\setp)|$, based on T-paths, can really be seen as an asymptotic improvement over Aichholzer's algorithm. As for the upper bound on $t$, ours is the first non-trivial bound on it to be known, however, we suspect that the real value should be closer to $4^{n}$. 

Now, no configuration of points is known having as many T-paths as triangulations. Hence, our T-path-based algorithm could potentially count triangulations \emph{asymptotically} faster than enumeration algorithms. No similar result was known before, which makes ours worth mentioning. On the negative side, the bound for the running time of our algorithm is very precise, it depends on the \emph{largest}\footnote{Since T-paths are referenced by a line, different lines might generate different numbers of T-paths.} number of T-paths the algorithm encounters when run on $\setp$, and this number can get very large, sometimes at least $\Omega(4^{n})$. 

\subsection{The result on counting pseudo-triangulations}

Pseudo-triangulations have been the subject of extensive research from the counting point of view, see~\cite{DBLP:conf/cccg/RandallRSS01,DBLP:journals/comgeo/Bereg05} and references therein. As of today it is not known whether, for \emph{any} set of points, the number of pointed pseudo-triangulations is at least as large as its number of triangulations. Observe that if we remove the pointedness condition, the answer is trivially ``yes''.

In~\cite{DBLP:conf/wads/AichholzerRSS03} the concept of \emph{zig-zag path of a pseudo-triangulation} was introduced. This concept is for pseudo-triangulations what T-paths are for triangulations. For simplicity and consistency we will call such zig-zag paths simply \emph{PT-paths}. 

\begin{definition}[PT-path]\label{c-tri:def:pt-paths}
	Given a planar set of points $\setp$, a pseudo-triangulation $S$ of $\setp$, and a separating line $l$ w.r.t.~$\setp$, a PT-path of $S$ w.r.t.~$l$, denoted by $\pt{l}{S}$, is defined as follows: (\oldstylenums{1}) $\pt{l}{S}$ is a chain of edges of $S$ whose starting and ending edges are two edges of $\Conv(\setp)$ intersected by $l$, and whose intersections with $l$ are linearly ordered along $l$. (\oldstylenums{2}) The area bounded by $\pt{l}{S}$, between two consecutive intersections with $l$, and line $l$ is an empty pseudo-triangle. (\oldstylenums{3}) The reflex vertices of the empty pseudo-triangles of (\oldstylenums{2}) are pointed in $S$. See to the right in Figure~\ref{c-tri:figs:1} for an example of a PT-path $\pt{l}{S}$.
\end{definition}

As for T-paths, an equivalent of Theorem~\ref{c-tri:theorems:t-pathsOswin} for PT-paths was proven in~\cite{DBLP:conf/wads/AichholzerRSS03}:

\begin{theorem}[O. Aichholzer, G. Rote, B. Speckmann, I. Streinu]\label{c-tri:theorems:pt-pathsOswin}
	The PT-path $\pt{l}{S}$ of a pseudo-triangulation $S$ w.r.t.~separating line $l$ always exists and is \emph{unique}.
\end{theorem}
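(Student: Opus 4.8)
\emph{Proof plan.} I would follow the template of the proof of Theorem~\ref{c-tri:theorems:t-pathsOswin}, replacing triangles by pseudo-triangles. Fix $l$ and assume, without loss of generality, that $l$ is horizontal. Then $l\cap\Conv(\setp)$ has its two endpoints in the relative interiors of two edges of $\Conv(\setp)$, and condition~(\oldstylenums{1}) of Definition~\ref{c-tri:def:pt-paths} forces these two hull edges to be the first and the last edge of every PT-path of $S$ w.r.t.\ $l$. The plan is to prove existence by an explicit left-to-right sweep along $l$ that builds a PT-path one pseudo-triangle at a time, and uniqueness by showing that, at each step of this sweep, the next piece is the \emph{only} legal choice.

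The sweep maintains a partial PT-path whose last edge crosses $l$ at a point $q$ and whose last region (as in condition~(\oldstylenums{2})) lies on a side $\varepsilon$ of $l$; it begins with the left hull edge, $q$ the left endpoint of $l\cap\Conv(\setp)$, and no region yet. The step is: from $q$, follow the last edge into the half-plane $H$ opposite $\varepsilon$ and then walk along edges of $S$, staying in $H$, so as to cut off together with a sub-segment of $l$ an \emph{empty pseudo-triangle} $R$ --- a region bounded by a chain of edges of $S$ in $H$ and a segment of $l$, having exactly three convex vertices (its two intersections with $l$ and one ``tip''), no point of $\setp$ in its interior, and all of whose reflex vertices are pointed in $S$; append that chain, move $q$ to the far intersection of $R$ with $l$, flip $\varepsilon$, and repeat; stop when the right hull edge can be appended. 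To see that $R$ is available and the step well posed, I would use that, because $l$ separates and carries no point of $\setp$ and because of condition~(\oldstylenums{3}), each such $R$ coincides near each of its vertices with a single face of $S$, so that its reflex vertices are interior vertices of reflex chains of faces of $S$ and are therefore automatically pointed (recall $S$ is pointed). Since each step moves $q$ strictly rightward and $S$ is finite, the sweep terminates, and one checks it can only stop at the right hull edge; the accumulated chain then satisfies~(\oldstylenums{1})--(\oldstylenums{3}) and is the desired PT-path.

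The heart of the proof, and the step I expect to be the main obstacle, is a \emph{local lemma}: from a given $q$, on the required side of $l$, there is exactly one empty pseudo-triangle $R$ as above that links onto the partial path. Proving this requires a case analysis of how a horizontal line traverses a pseudo-triangle: each reflex chain of a pseudo-triangle $F$ is convex as seen from outside $F$, so $l$ meets it in at most two points and $l\cap F$ consists of at most three sub-segments, and $F$ is contained in the triangle of its three convex corners. Using these facts one must show which piece of which incident face completes an empty pseudo-triangle (for instance, the ``corner'' piece cut off near the convex vertex shared by an entry and an exit reflex chain is such a piece, whereas the complementary piece, having too many convex vertices, is not, and the troublesome case in which $l$ enters and leaves one face through the \emph{same} reflex chain is resolved by passing to the neighbouring face), and must show that precisely one candidate on the required side survives and links correctly. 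Carrying out this classification cleanly, and checking that the selected chains assemble into a single connected chain with alternating sides, is the technical core; the rest is bookkeeping.

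For uniqueness, suppose $P$ and $P'$ are both PT-paths of $S$ w.r.t.\ $l$; they agree on their first edge. Let $q$ be the rightmost point up to which they coincide. Just past $q$ each of them cuts off, on the same side of $l$, an empty pseudo-triangle starting along the same edge through $q$; by the local lemma there is only one such region, so $P$ and $P'$ continue identically past $q$, forcing $P=P'$. (An alternative route to existence is to refine $S$ to a triangulation $T$ of $\setp$ by triangulating each pseudo-triangular face, take the unique T-path $\tp{l}{T}$ from Theorem~\ref{c-tri:theorems:t-pathsOswin}, and obtain $\pt{l}{S}$ by replacing each maximal run of edges of $\tp{l}{T}$ that are diagonals of a single face of $S$ by the appropriate boundary chain of that face; but showing this is independent of the chosen triangulation again comes down to the local lemma.)
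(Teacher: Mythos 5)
You should first note that the paper does not prove Theorem~\ref{c-tri:theorems:pt-pathsOswin} at all: it is imported from~\cite{DBLP:conf/wads/AichholzerRSS03}, and \S~\ref{c-tri:sections:pt-paths} only recalls that paper's construction via \emph{good} (signpost) edges of $E_{l}$. Measured against that proof, your proposal has a genuine gap: the entire content of existence \emph{and} uniqueness is concentrated in your ``local lemma'' (from a given crossing point $q$ and prescribed side of $l$ there is \emph{exactly one} empty pseudo-triangle that links onto the partial path), and you explicitly leave its case analysis open. That lemma is not bookkeeping; it is where pointedness of $S$ must do real work, since (as the paper remarks right after the theorem) uniqueness fails for non-pointed pseudo-triangulations, whereas in your sketch pointedness is only used to discharge condition~(\oldstylenums{3}) of Definition~\ref{c-tri:def:pt-paths}, which is automatic for pointed $S$. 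It is also a strictly stronger statement than the theorem: a priori two regions could link on locally with only one of them extendable to a full PT-path, in which case both your greedy sweep and your uniqueness argument (``by the local lemma there is only one such region'') collapse. So the key idea is missing, not merely deferred.

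There is also a concrete false step in the local picture on which your intended case analysis rests. You claim that each region $R$ ``coincides near each of its vertices with a single face of $S$'', and your classification asks ``which piece of which incident face completes an empty pseudo-triangle''. But the edges of $E_{l}$ that are \emph{not} good are precisely the ones deleted in the construction of $\pt{l}{S}$, and such an edge crosses $l$ strictly between two consecutive crossings of the path; it therefore enters the interior of $R$ through the $l$-segment and, since $R$ contains no point of $\setp$ in its interior, cannot cross the bounding chain, and cannot meet $l$ twice, it must terminate at a vertex of the chain, with its interior passing through the interior of $R$. Hence $R$ is in general a piece of a face of $S^{*}$ (the graph obtained from $S$ by deleting the non-good crossed edges), i.e.\ a union of pieces of \emph{several} pseudo-triangles of $S$ glued along removed edges --- not a corner piece of one face of $S$, and near such a chain vertex $R$ meets two faces of $S$. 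The actual proof in~\cite{DBLP:conf/wads/AichholzerRSS03} copes with this by first characterizing the crossings every PT-path must use (the good edges, via the supporting-line criterion recalled in \S~\ref{c-tri:sections:pt-paths}) and only then showing that the connecting chains through the faces of $S^{*}$ are forced; your face-by-face classification, and likewise your alternative route via refining $S$ to a triangulation, would have to be redone at that level of generality.
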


The previous theorem does not necessarily hold if we remove the pointedness condition, that is, a non-pointed pseudo-triangulation might contain more than one PT-path for the same reference line $l$. Nonetheless, for such cases one can still define a ``canonical'' PT-path.

Again, as for T-paths, divide-and-conquer algorithms that use PT-paths can be devised to count the elements of $\F_{PT}(\setp)$, one such algorithm was already present in~\cite{DBLP:conf/wads/AichholzerRSS03}. Those algorithms, as for T-paths, end up having running times of the sort $O\left(t^{2}\right)$, where $t = t(n)$ is the largest number of PT-paths of $\setp$, w.r.t.~to some separating line $l$, that the algorithm encounters.

The result on pseudo-triangulation that we will prove is the following:

\begin{theorem}\label{c-tri:theorems:pt-paths}
	Let $\setp$ be a given set of $n$ points on the plane. Then the exact value of $|\F_{PT}(\setp)|$ can be computed in $O\left(n^{7}\cdot t\right)$ time, and $O(t)$ space, where $t$ is the largest number of PT-paths the algorithm encounters when run on $\setp$.
\end{theorem}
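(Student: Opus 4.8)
The plan is to mirror the sweep line algorithm for triangulations (Theorem~\ref{c-tri:theorems:our-t-paths}), replacing T-paths by PT-paths throughout, and then carefully account for the extra combinatorial work that pseudo-triangles (as opposed to ordinary triangles) introduce. First I would set up the sweep: process the points of $\setp$ from left to right, maintaining a vertical separating line $l$ that advances one point at a time, so that at each of the $n$ stages we consider the current PT-path $\pt{l}{S}$ of a pseudo-triangulation $S$ with respect to $l$. By Theorem~\ref{c-tri:theorems:pt-pathsOswin}, $\pt{l}{S}$ exists and is unique, so the PT-path is a well-defined function of $S$; hence summing over all PT-paths, weighted by the number of pseudo-triangulations inducing them, recovers $|\F_{PT}(\setp)|$. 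The key structural fact to exploit is that the PT-path together with $l$ separates the plane into a "left part" and a "right part," and a pseudo-triangulation of $\setp$ is obtained by independently choosing a valid completion on each side of the path; therefore the count factors as a sum over PT-paths of (left completions) $\times$ (right completions), and these factors satisfy a recurrence as $l$ sweeps past one more point.

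Next I would define the dynamic programming. The state is a PT-path (equivalently, its sequence of edges crossing $l$, each edge being an ordered pair of points of $\setp$), and the value stored is the number of ways to complete the portion of the pseudo-triangulation strictly to the left of the path. When $l$ advances past the next point $q$, each old PT-path must be updated to a new PT-path on the larger point set: the edges of the path incident to the region now containing $q$ must be rerouted, and $q$ must be "absorbed" into the structure on the left. This is exactly where pseudo-triangles cost more than triangles: when a point is swept over, it is no longer one of a bounded number of triangular faces that must be re-triangulated, but rather it becomes a (possibly reflex) vertex of a pseudo-triangle; one must enumerate, for each way the old path is locally modified, the ways to form valid empty pseudo-triangles between consecutive path edges and $l$, subject to the pointedness constraint on reflex vertices (condition~(\oldstylenums{3}) of Definition~\ref{c-tri:def:pt-paths}). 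I would show that the local modification at a single sweep step touches only $O(1)$ edges of the path but the enumeration of the intermediate pseudo-triangular pieces involves choosing their convex/reflex chains, which is bounded by a polynomial — this is the source of the $n^7$ as opposed to the $n^3$ for triangulations.

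The correctness argument has two halves: (i) every pseudo-triangulation is counted exactly once, which follows from uniqueness of PT-paths (Theorem~\ref{c-tri:theorems:pt-pathsOswin}) applied at each intermediate position of $l$ together with the fact that the "left completion" determines the left part bijectively; and (ii) the recurrence correctly updates left-completion counts, which follows from the independence of the two sides of a PT-path and a careful case analysis of how a PT-path changes when one point crosses $l$. For the running time, at any moment the number of distinct PT-paths the algorithm holds is at most $t$ by definition, each PT-path is stored in $O(n)$ space (giving $O(t)$ space overall, since we only need two consecutive sweep positions — but $O(nt)$ naively, which we absorb into $O(t)$ by a more careful representation, or simply state $O(t)$ as in the theorem with the understanding that a PT-path is the unit of storage), and processing one sweep step costs, per PT-path, a polynomial $p(n)$ for enumerating local updates and the associated pseudo-triangle completions; over $n$ sweep steps this is $O(n \cdot p(n) \cdot t)$, and the bookkeeping gives $p(n) = O(n^6)$, hence $O(n^7 \cdot t)$.

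\textbf{Main obstacle.} The hard part will be step (ii) combined with the pseudo-triangle enumeration: unlike the triangulation case, where "the region between two consecutive path edges and $l$ is an empty triangle" pins down that region completely once its three vertices are named, here an empty pseudo-triangle bounded by part of the path and a segment of $l$ can have arbitrarily many reflex vertices along its two non-$l$ chains, so enumerating how a sweep step creates or destroys such pseudo-triangles — while maintaining the global pointedness invariant and ensuring no pseudo-triangulation is double-counted or missed — requires a delicate, geometry-heavy case analysis. I expect this analysis, rather than the recurrence skeleton or the amortized counting, to be the crux of the proof.
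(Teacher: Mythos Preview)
Your high-level plan --- sweep, DP state = PT-path, value = number of compatible left-side extensions, update via local changes around the swept point, final answer read off the unique PT-path at $l_{n-1}$ --- matches the paper's approach, and you correctly identify the case analysis of local changes as the crux. But there is a genuine gap in the part you mark as the obstacle, and it is more than just a ``delicate case analysis''.

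Two concrete issues. First, your claim that ``the local modification at a single sweep step touches only $O(1)$ edges of the path'' is inherited from the T-path case and is false here: when $l$ passes over $p$, the portion of a PT-path that must be replaced is a pair of reflex chains of unbounded length, not a fixed-size pattern like $(a,b,p,c,d)$. Second, and more importantly, you give no mechanism for \emph{enumerating} the possible new chains. The paper's machinery is specific: for each candidate apex $\alpha$ (a point to the right of $l_{i+1}$ with a visibility ray to the interval $\mathcal{I}$ of $l_{i+1}$ seen by $p$), one computes the visibility cone $\angle_\alpha$ to $\mathcal{I}$; the points of $\setp$ inside $\angle_\alpha$ partition it into $O(n)$ homotopy classes; for each class one takes a representative ray $\rho$, forms paths $\rho_\downarrow,\rho_\uparrow$ along $l_{i+1}$ and the old path to the convex corners of the adjacent pseudo-triangles, and then computes \emph{shortest homotopic paths} $\widetilde{\rho_\downarrow},\widetilde{\rho_\uparrow}$ (each in $O(n^2)$ time) to obtain the actual new adjacencies. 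In the case where $p$ is not a vertex of $\pi$, one must do this for \emph{pairs} $\alpha,\beta$ and combine, which is what drives the per-path cost to $O(n^6)$ and hence the total to $O(n^7\cdot t)$. Without the visibility-cone/homotopy-class/geodesic-path framework you have no way to (a) generate all and only the compatible successor PT-paths, (b) certify that $\pi\not\leftrightarrow\pi'$ implies incompatibility (this is what makes $\lambda(\pi')$ correct), or (c) justify the $O(n^6)$ bound you simply assert. Your phrase ``choosing their convex/reflex chains'' does not yet name any of this; once you see that the new chains are forced to be shortest paths in a fixed homotopy class, the enumeration becomes finite and boundable, which is the missing idea.

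A smaller point: for PT-paths, ``compatible'' must mean non-crossing \emph{and} pointed union, not just non-crossing; the paper needs this both for the bijection between compatible tuples and pseudo-triangulations and for the local-change construction to produce valid successors. Your proposal mentions pointedness only in passing, but it is essential to the correctness of the DP transition, not merely to the final readout.
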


Thus again, our result gives a significant improvement over known algorithms for counting pseudo-triangulations. This time, however, we are not able to show an upper bound on the largest number of PT-paths that can be constructed w.r.t.~a given line.

The rest of the paper is organized as follows: In~\S~\ref{c-tri:sections:t-paths} we prove Theorem~\ref{c-tri:theorems:our-t-paths} and in~\S~\ref{c-tri:sections:pt-paths} we prove Theorem~\ref{c-tri:theorems:pt-paths}. We close the paper in~\S~\ref{c-tri:sections:conclusionsT-ST} with discussions and conclusions.

\section{Counting triangulations}\label{c-tri:sections:t-paths}

Let $T$ be a triangulation of $\setp$ and let $l$ be a separating line w.r.t.~$\setp$. Without loss of generality we will assume that $l$ is vertical. Let $e$ be an edge of $T$ properly intersecting $l$. If $e$ is not an edge of $\Conv(\setp)$, we will say that $e$ is \emph{flippable} iff the union $Q$ of the two triangles of $T$ sharing $e$ forms a convex quadrilateral. If $Q$ is non-convex, or $e$ is an edge of $\Conv(\setp)$, we will simply say that $e$ is \emph{non-flippable}. Also, for $Q$, we will call the two vertices that are not vertices of $e$, the \emph{opposite vertices} of $e$. Finally, we will say that $e$ is \emph{good} with respect to $l$ iff $e$ is flippable and its opposite vertices lie on different sides of $l$. 

Now, let $\tp{l}{T}$ be a T-path of $T$. The region between two consecutive edges $e = ab, e^{\prime}=bd$ of $\tp{l}{T}$, and delimited by $l$, defines a wedge $W = abd$ with apex at vertex $b$, see Figure~\ref{c-tri:sections:t-paths:figs:60}. 

\begin{figure}[!htb]
	\begin{center}
		\includegraphics[height=4cm]{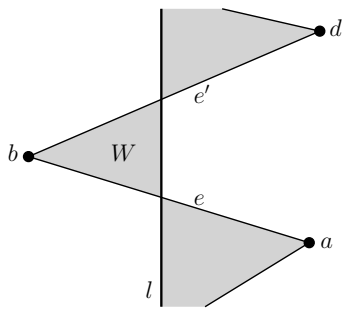}
		\caption{Vertices $a, b, d$ are three consecutive vertices of the shown T-path.}
		\label{c-tri:sections:t-paths:figs:60}
	\end{center}
\end{figure}

Observe that by part (\oldstylenums{3}) of Definition~\ref{c-tri:def:t-paths}, wedge $W$ is empty of points of $\setp$, so we can define the set $\W = \W(\tp{l}{T}) = \{W_{1}, W_{2}\ldots, W_{k}\}$, as the set of all those empty wedges. Since we are going to use wedges throughout the whole section, whenever we have three consecutive vertices $a,b,d$ of $\tp{l}{T}$, we will use the shorthand $abd$ to denote the corresponding element of $\W$  formed by the triple, in which the middle element is the apex. We now have the following observations:

\begin{lemma}\label{c-tri:sections:t-paths:lemmas:1}
	Let $T$ be a triangulation of $\setp$, let $l$ be a vertical line, and let $e$ be a good edge of $T$ w.r.t.~$l$. Then $e$ is an edge of the unique T-path $\tp{l}{T}$.
\end{lemma}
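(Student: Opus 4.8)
The plan is to show that a good edge $e = uv$ of $T$ lies on $\tp{l}{T}$ by exploiting the uniqueness and emptiness properties of the T-path from Theorem~\ref{c-tri:theorems:t-pathsOswin}, together with the local structure around $e$ guaranteed by goodness. Recall that $e$ properly intersects $l$, is flippable (so the quadrilateral $Q$ formed by the two triangles of $T$ incident to $e$ is convex), and its two opposite vertices, say $p$ and $q$, lie on opposite sides of $l$. First I would observe that since $Q$ is convex and $p,q$ are on opposite sides of $l$, the segment $pq$ also crosses $l$; thus $Q$ is a convex quadrilateral with both diagonals (the edge $e = uv$ and the segment $pq$) crossing $l$, which forces $u$ and $v$ to be on opposite sides of $l$ as well, and pins down the cyclic order of the four vertices of $Q$ around its boundary as, say, $u, p, v, q$.

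Next I would argue that $e$ is ``visible from $l$'' inside $Q$ in the right sense. Consider the point $x$ where $e$ meets $l$. Walking along $l$ from $x$ in one direction (say, towards the $p$-side half-plane), $l$ enters the triangle $upv$ of $T$; in that triangle the only edges are $up$, $pv$, and $e$, and since $l$ separates $\{u\}$ from $\{v\}$ but $p$ lies strictly on the $p$-side, $l$ must leave triangle $upv$ through either $up$ or $pv$. Symmetrically on the $q$-side $l$ leaves triangle $uqv$ through $uq$ or $vq$. In either case, the portion of $l$ in a small neighbourhood of $x$ is covered by the two triangles sharing $e$, and no other edge of $T$ crosses $l$ in the open region of $Q$ on the $l$-side of $e$. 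This is precisely the condition needed for $e$ to be an edge of the T-path: the chain of T-path edges is the sequence of edges of $T$ crossing $l$ such that consecutive ones bound an empty region with $l$, and $e$ together with its T-path neighbours must bound such empty regions.

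The cleanest way to finish is probably by contradiction using uniqueness: suppose $e \notin \tp{l}{T}$. The T-path $\tp{l}{T}$ still crosses $l$ somewhere near $x$, so it has an edge $f$ whose intersection with $l$ is closest to $x$ on (say) the $p$-side, and the region bounded by $f$, the T-path neighbour of $f$ on the $x$-side, and $l$ is empty. I would show that this region must in fact contain $x$, hence contain part of $e$ in its interior or on its boundary; combined with the emptiness of the region and the fact that $e$'s endpoints are vertices of $T$ (hence of $\setp$), this forces $f = e$ after a short case analysis, contradicting the assumption. The main obstacle I anticipate is handling the degenerate-looking boundary cases carefully — when a T-path edge or a triangle edge passes exactly through the neighbourhood of $x$, or when $u$ or $v$ happens to be a convex-hull vertex — so that the ``closest crossing to $x$'' argument is genuinely airtight; this is where I would spend most of the care, using the separating-line hypothesis ($l\cap\setp=\emptyset$) to rule out coincidences and the convexity of $Q$ to control the geometry.
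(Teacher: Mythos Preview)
Your overall strategy---assume $e\notin\tp{l}{T}$, locate the wedge of the T-path whose $l$-interval contains the crossing point $x=e\cap l$, and derive a contradiction from emptiness---is exactly the paper's approach. The gap is in the last step, where you write that ``the fact that $e$'s endpoints are vertices of $T$ (hence of $\setp$) \ldots\ forces $f=e$.'' This is not enough: once you know the wedge $W$ (with apex $b$ and bounding T-path edges $br,bs$) contains $x$ in the interior of its $l$-interval, the segment of $e$ on the apex side of $l$ lies inside $W$, so one endpoint of $e$ must be the apex $b$ itself (else that endpoint would be a point of $\setp$ strictly inside $W$). But then the \emph{other} endpoint of $e$ is on the far side of $l$, and neither endpoint of $e$ lies strictly inside $W$. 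So the endpoints of $e$ alone never give the contradiction.

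What closes the argument---and this is precisely where goodness is used, not merely in the setup---is the opposite vertex. With the apex identified as an endpoint of $e$, look at the two triangles of $T$ sharing $e$; by goodness their third vertices lie on opposite sides of $l$, so one of them, say $c$, is on the apex's side. The edge $bc$ is angularly adjacent to $e$ at $b$ (they bound a face), hence lies strictly between $br$ and $bs$; since $c$ is on the $b$-side of $l$, this forces $c$ to sit inside $W$, contradicting emptiness. Your second paragraph about $l$ traversing the two incident triangles near $x$ is a correct observation but does not feed into the contradiction; you can drop it and go straight from ``$x$ lies in some wedge $W$'' to ``the apex of $W$ is an endpoint of $e$'' to ``the opposite vertex on that side is in $W$.''
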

\begin{proof}
Assume for the sake of contradiction that edge $e = pq$ of $T$ is good but not an edge of $\tp{l}{T}$, that is, edge $e$ cannot be an edge of $\Conv(\setp)$.  Let $\W$ be the set of empty wedges of $\tp{l}{T}$. Observe that every element $W$ of $\W$ defines an interval on $l$, which is precisely where $W$ intersects $l$, see Figure~\ref{c-tri:sections:t-paths:figs:1}. 

\begin{figure}[!hbt]
	\begin{center}
		\includegraphics[height=4cm]{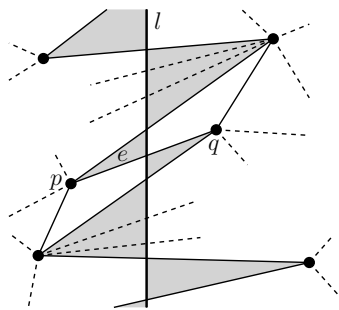}
	\end{center}
	\caption{Every empty wedge of $\tp{l}{T}$ defines an interval on $l$ where they intersect.}
	\label{c-tri:sections:t-paths:figs:1}
\end{figure}

Note that every interior point of an interval on $l$ defined by some element of $\W$ belongs only to that element of $\W$, that is, two intervals defined by two different elements of $\W$ have disjoint interiors. Denote by $x$ the point of intersection between $e$ and $l$. This point $x$ cannot be the boundary point of any interval on $l$ defined by some element of $\W$, otherwise there would be an edge $e^{\prime}\neq e$ of $\tp{l}{T}$ that crosses $l$ at $x$, but that would mean that $e$ and $e^{\prime}$ intersect, which is clearly impossible since both edges belong to $T$, see Figure~\ref{c-tri:sections:t-paths:figs:2}. 
\begin{figure}[!hbt]
	\begin{center}
		\begin{minipage}[b][6.5cm][t]{7cm}%
			\begin{center}
				\includegraphics[height=4cm]{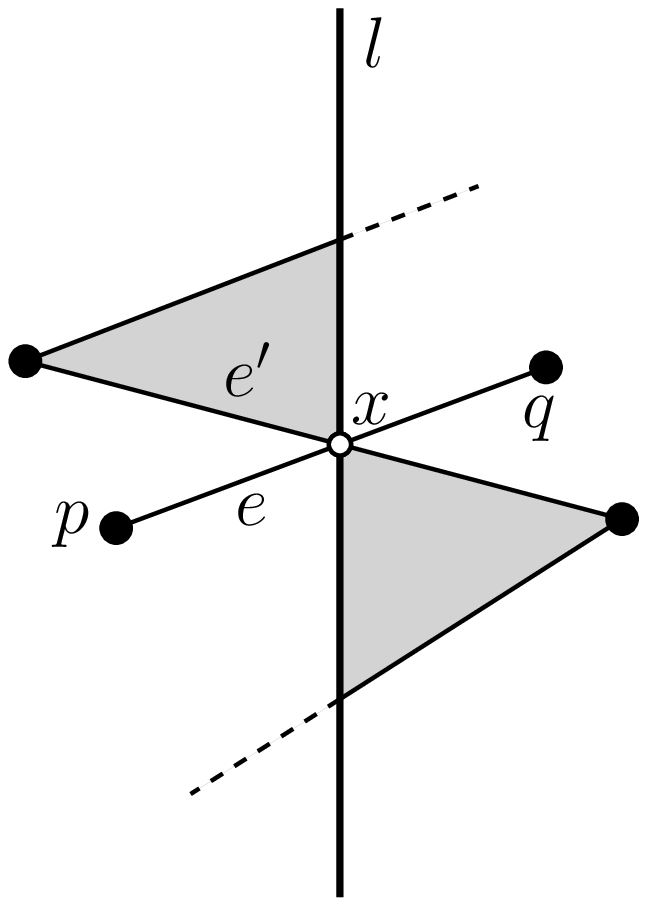}
				\caption{The intersection between $e$ and $l$ cannot be the boundary of an interval on $l$ defined by an empty wedge of $\tp{l}{T}$.}%
				\label{c-tri:sections:t-paths:figs:2}
			\end{center}
		\end{minipage}
		\quad
		\begin{minipage}[b][6.5cm][t]{7cm}%
			\begin{center}
				\includegraphics[height=4cm]{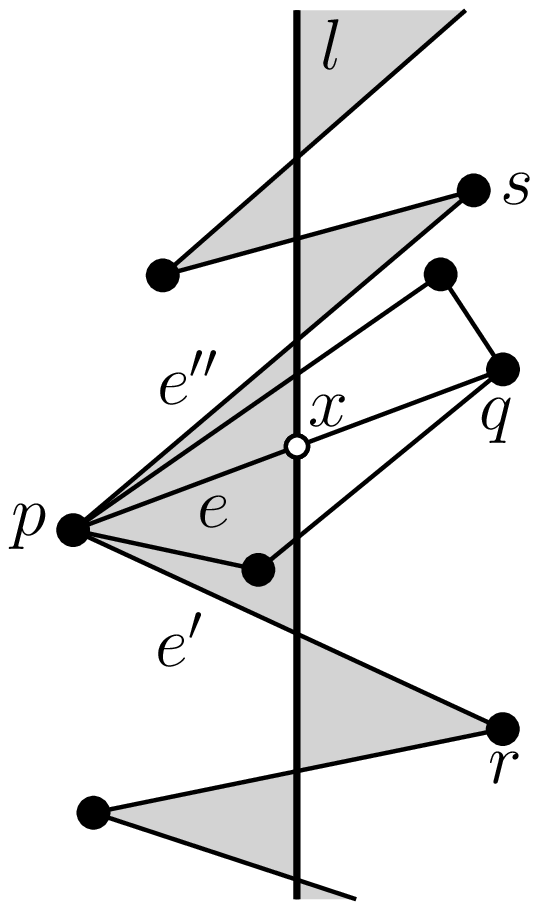}%
				\caption{Point $x$ lies in the interior of the interval of $l$ defined by the empty wedge with apex $p$. Since $e$ is good w.r.t.~$l$, the third vertex of one of the triangles of $T$ that share $e$ must lie inside $W$.}%
				\label{c-tri:sections:t-paths:figs:3}%
			\end{center}
		\end{minipage}%
	\end{center}
\end{figure}

	Thus $x$ must belong to the interior of some interval on $l$ defined by some element $W$ of $\W$. It is also clear that the apex of $W$ must be either $p$ or $q$, otherwise, either $p$ or $q$ lies inside $W$, which is not possible since $W$ is an empty wedge of $\tp{l}{T}$. Let us assume without loss of generality that the apex of $W$ is $p$, and that $W$ is defined by the two consecutive edges $e^{\prime} = rp$ and $e^{\prime\prime} = ps$ of $\tp{l}{T}$. Assume without loss of generality that $p$ lies to the left of $l$, and thus $r,q,s$ lie to the right. Note that $x$ lies between the intersection points of $e^{\prime}$ and $e^{\prime\prime}$ with $l$, see Figure~\ref{c-tri:sections:t-paths:figs:3}. Since $e$ is good w.r.t.~$l$, then the two triangles of $T$ sharing $e$ have their third vertices on different sides of $l$, which means that one of them necessarily lies inside $W$, which is again a contradiction since $W$ is empty of vertices of $T$. Thus $e$ must belong $\tp{l}{T}$.
\end{proof}
\indent Observe that in general a T-path can also contain non-flippable edges.
\begin{lemma}\label{c-tri:sections:t-paths:lemmas:2}
	Let $T$ be a triangulation with vertex set $\setp$, and let $e$ be a flippable edge of $T$. Then there exists a line $l$ such that $e$ is an edge of the T-path $\tp{l}{T}$.
\end{lemma}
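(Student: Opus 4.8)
The plan is to reduce the statement to Lemma~\ref{c-tri:sections:t-paths:lemmas:1}. That lemma says a good edge of $T$ w.r.t.~a line always lies on the corresponding T-path, and by hypothesis $e$ is already flippable, so it suffices to produce a separating line $l$ w.r.t.~$\setp$ (which, after rotating coordinates, we may take to be vertical as in the standing convention of this section) that properly intersects $e$ and has the two opposite vertices of $e$ on different sides. Then $e$ is good w.r.t.~$l$, and Lemma~\ref{c-tri:sections:t-paths:lemmas:1} finishes the argument immediately.

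To construct such an $l$, I would first fix the configuration: write $e=ab$, let $c,d$ be the opposite vertices of $e$, so the quadrilateral $Q$ formed by the two triangles of $T$ sharing $e$ is convex with boundary $a,c,b,d$ and with $ab$ and $cd$ as its diagonals. Let $z$ be the point where these diagonals meet; convexity of $Q$ forces $z$ into the relative interior of both segments $ab$ and $cd$. Since $e$ is flippable it is not an edge of $\Conv(\setp)$, so $z$ lies in the interior of $\Conv(\setp)$, and no point of $\setp$ equals $z$ or lies in the interior of $ab$, because $T$ is a straight-edge plane graph on $\setp$. The key idea is then to work inside the pencil of all lines through $z$: every line of this pencil other than the line through $a$ and $b$ separates $a$ from $b$, since it crosses segment $ab$ transversally at its interior point $z$; likewise every line other than the line through $c$ and $d$ separates $c$ from $d$. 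For each $p\in\setp$ at most one line of the pencil passes through $p$, so at most $n$ lines of the pencil meet $\setp$; discarding these together with the two diagonal lines leaves all but finitely many directions available, and I would pick any such direction and let $l$ be the line through $z$ in that direction.

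Finally I would verify the three requirements on $l$: it avoids $\setp$ by the choice of direction, and it meets $\Conv(\setp)$ at the interior point $z$, so it is a separating line w.r.t.~$\setp$; it properly intersects $e$ at $z$ because it differs from the line through $a$ and $b$; and the opposite vertices $c,d$ fall on different sides of it because it differs from the line through $c$ and $d$. Hence $e$ is good w.r.t.~$l$, and Lemma~\ref{c-tri:sections:t-paths:lemmas:1} yields $e\in\tp{l}{T}$. The only slightly delicate point is that the line must satisfy all three constraints simultaneously — disjointness from $\setp$, crossing $e$, and separating the opposite vertices — and routing it through the diagonal intersection point $z$ while varying only its direction is exactly what reconciles them; everything else is elementary.
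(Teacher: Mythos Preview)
Your proof is correct and follows the same idea as the paper: both pass the line through the intersection point of the two diagonals of the convex quadrilateral, which automatically places the endpoints of $e$ on opposite sides and the opposite vertices on opposite sides, making $e$ good so that Lemma~\ref{c-tri:sections:t-paths:lemmas:1} applies. The only cosmetic difference is that the paper simply takes the vertical line through this intersection point (tacitly relying on general position), whereas you pick a generic direction through it to guarantee $l\cap\setp=\emptyset$; this is a little more careful but not a different argument.
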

\begin{proof}
	Let $e = pq$ be a given flippable edge. Then $e$ cannot be an edge of $\Conv(\setp)$, thus, $e$ is shared by two triangles of $T$, the third point of each triangle is $r$ and $s$ respectively.  Let $e^{\prime} = rs$ be the other diagonal of the convex polygon $prqs$, see Figure~\ref{c-tri:sections:t-paths:figs:4}. Let $l$ be the vertical line containing the point of intersection between $e$ and $e^{\prime}$. Then $l$ makes $e$ and $e^{\prime}$ good.\qedhere
\begin{figure}[!hbt]
	\begin{center}
		\begin{minipage}[b][6.4cm][t]{7cm}%
			\begin{center}
				\includegraphics[height=4cm]{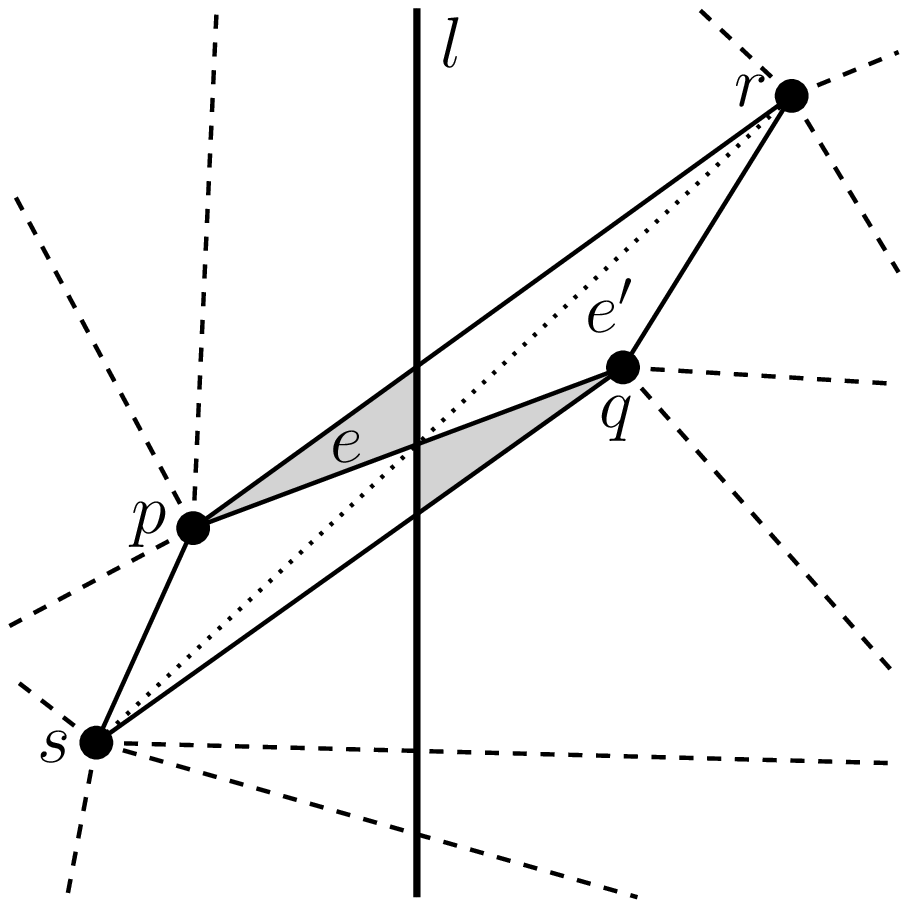}
				\caption{$e$ and $e^{\prime}$ are the two diagonals of the convex quadrilateral $prqs$. The line $l$ containing their intersection makes both, $e$ and $e^{\prime}$ good.}
				\label{c-tri:sections:t-paths:figs:4}
			\end{center}
		\end{minipage}
		\quad
		\begin{minipage}[b][6.4cm][t]{7cm}%
			\begin{center}
				\includegraphics[height=4cm]{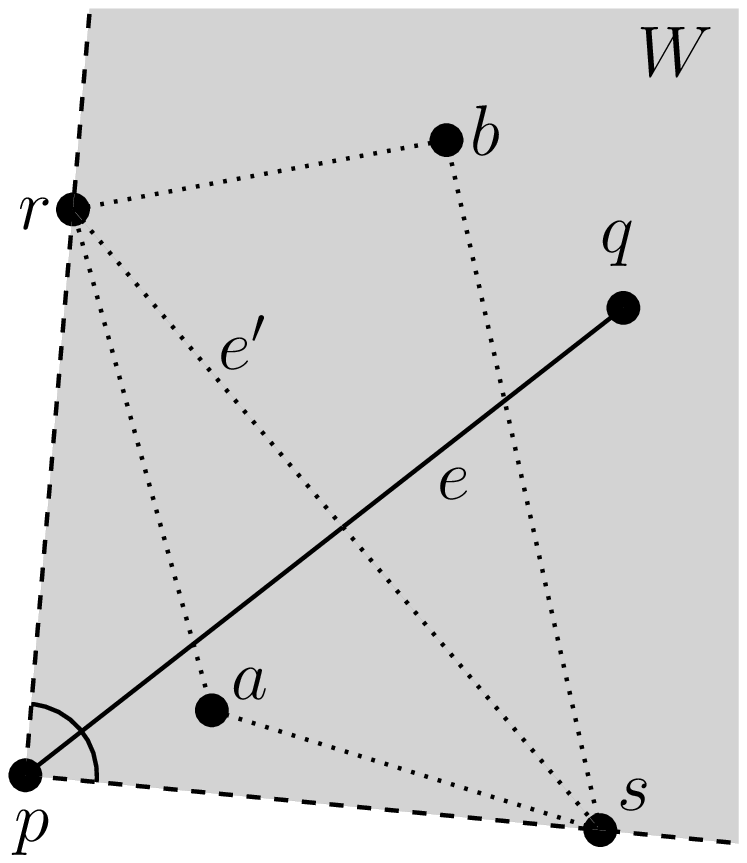}	
				\caption{Vertices $a,b$ must be in the gray zone otherwise angle $\angle rps$ would not be maximum.}
				\label{c-tri:sections:t-paths:figs:5}
			\end{center}
		\end{minipage}%
	\end{center}
\end{figure}	
\end{proof}
\begin{lemma}\label{c-tri:sections:t-paths:lemmas:3}
	Let $T$ be a triangulation with vertex set $\setp$. Then the set of all flippable edges of $T$ is enough to characterize $T$.
\end{lemma}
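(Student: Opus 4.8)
The plan is to prove the stronger statement that distinct triangulations of $\setp$ have distinct sets of flippable edges; so suppose $T$ and $T'$ are triangulations of $\setp$ with the same set $F$ of flippable edges, and let us show $T=T'$.

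First, both $T$ and $T'$ contain every edge of $F$ as well as every edge of $\Conv(\setp)$; writing $H$ for the set of hull edges, let $G$ be the plane graph on $\setp$ with edge set $F\cup H$. Then $G\subseteq T$ and $G\subseteq T'$, so it suffices to show that for every bounded face $R$ of $G$ the restrictions $T|_R$ and $T'|_R$ agree, as then $T$ and $T'$ have the same edge set. Fix such a face $R$. The crucial observation is that every edge of $T$ lying in the interior of $R$ is \emph{non-flippable} in $T$: a flippable edge belongs to $F\subseteq G$, hence lies on the boundary of some face of $G$ and cannot be interior to $R$. Moreover flippability is a local property — it is decided by the two triangles sharing the edge, and for an edge interior to $R$ those two triangles also lie inside $R$ — so $T|_R$ is a triangulation of the region $R$ in which every non-boundary edge is non-flippable, and likewise for $T'|_R$.

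The heart of the argument is then the claim that a triangulation of a region $R$ with a prescribed boundary and a prescribed vertex set, in which every non-boundary edge is non-flippable, is unique. Indeed, for such an edge the two incident triangles form a non-convex quadrilateral, which makes the edge automatically \emph{locally Delaunay}; hence \emph{every} edge of the triangulation is locally Delaunay (the boundary edges being constraints), and by the standard characterization of the constrained Delaunay triangulation as the unique triangulation all of whose edges are locally Delaunay, the triangulation coincides with the constrained Delaunay triangulation of $R$, which under the usual general-position assumption on $\setp$ is unique. Applying this to $T|_R$ and $T'|_R$ gives $T|_R=T'|_R$ for every bounded face $R$ of $G$, and therefore $T=T'$.

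The step I expect to be the main obstacle is precisely this uniqueness claim, i.e.\ ruling out a second all-non-flippable triangulation of a face. The clean route is the one above: ``all incident quadrilaterals are non-convex'' makes every edge vacuously locally Delaunay, after which uniqueness of the (constrained) Delaunay triangulation finishes the job; the cost is invoking that machinery and being careful about the general-position hypothesis (so that the Delaunay triangulation is unique and no four points are cocircular) and, if a face of $G$ happens not to be simply connected because $G$ is disconnected, feeding the right planar straight-line graph — the boundary cycles of $R$ together with the vertices of $\setp$ inside $R$ — to the constrained-Delaunay machinery. A self-contained alternative is to run Lawson's edge-flip algorithm starting from an arbitrary triangulation of $R$: it always terminates at the constrained Delaunay triangulation, and any triangulation in which no edge is flippable is trivially a terminal configuration, so it must equal it.
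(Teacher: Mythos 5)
Your argument is correct, but it takes a genuinely different route from the paper. The paper's proof is elementary and self-contained: assuming $T\neq T^{\prime}$ with $F(T)=F(T^{\prime})$, it finds a non-flippable edge $e=pq$ of $T$ crossed by a non-flippable edge $e^{\prime}=rs$ of $T^{\prime}$, chooses among all such crossing edges the one maximizing the angle $\angle rps$, and shows that the apexes of the two triangles of $T^{\prime}$ sharing $e^{\prime}$ must then lie inside the wedge $rps$, which forces the quadrilateral around $e^{\prime}$ to be convex --- contradicting $e^{\prime}\in NF(T^{\prime})$. You instead reduce everything to uniqueness of the (constrained) Delaunay triangulation via the standard fact that a non-flippable interior edge is automatically locally Delaunay (the reflex angle at $p$ or $q$ exceeds $\pi$, so the opposite angles at the two apexes sum to less than $\pi$), which is indeed correct. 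What your route buys is brevity given the Delaunay machinery, and a stronger structural statement: the non-flippable part of $T$ is forced to be the constrained Delaunay triangulation of the graph of flippable plus hull edges. In fact you can skip the face-by-face decomposition (and the worry about non-simply-connected faces) entirely: apply the constrained Delaunay lemma once to the PSLG on $\setp$ whose constraint edges are $F\cup H$; both $T$ and $T^{\prime}$ triangulate it with all non-constraint edges locally Delaunay, hence both equal its CDT. What the paper's route buys is that it needs no general-position hypothesis beyond what triangulations already require, whereas your argument as stated imports uniqueness of the CDT and hence ``no four cocircular points,'' an assumption the paper never makes; this caveat (which you flag) can be removed by noting that non-flippability gives \emph{strict} local Delaunayhood, which already forces uniqueness, but that patch again leans on the lifting/CDT theory rather than the paper's two-paragraph elementary argument.
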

\begin{proof}
Let $F(T)$ be the set of all flippable edges of $T$. We have to prove that there cannot be another triangulation $T^{\prime}$ with vertex set $\setp$ such that $T\neq T^{\prime}$ but $F(T) = F(T^{\prime})$.
	
	Let us assume for the sake of contradiction that such triangulation $T^{\prime}$ exists. Define the set $NF(T) = E(T)\setminus F(T)$, which is the set of all non-flippable edges of $T$. Clearly, $NF(T)\neq NF(T^{\prime})$, otherwise $T = T^{\prime}$. That is, there must be at least one edge $e\in NF(T)$ that is properly intersected by edges of $NF(T^{\prime})$; it cannot be intersected by edges of $F(T) = F(T^{\prime})$, and both $NF(T), NF(T^{\prime})$ cannot form a set of non-crossing edges since $T$ and $T^{\prime}$ are sets of non-crossing edges of maximum cardinality, but $NF(T)\not\subseteq E(T^{\prime})$ and $NF(T^{\prime})\not\subseteq E(T)$.
	
	Now let $e = pq$, and let $e^{\prime} = rs$ be an edge of $NF(T^{\prime})$ crossing $e$. Clearly, the edges of the quadrilateral $Q = prqs$ cannot be part of either $T$ or $T^{\prime}$ because that would make $e$ and $e^{\prime}$ flippable, see Figure~\ref{c-tri:sections:t-paths:figs:5}. Assume that $e^{\prime}$ is the edge of $NF(T^{\prime})$ crossing $e$ that maximizes the angle $\angle rps$, such $e^{\prime}$ must exist. Given that all the edges of $\Conv(\setp)$ are also shared by $T$ and $T^{\prime}$ we have that $e^{\prime}$ must be shared by two triangles of $T^{\prime}$, so let $a,b$ the third point of each triangle respectively, see Figure~\ref{c-tri:sections:t-paths:figs:5}. Observe that it could happen that $p = a$, but then $b\neq q$, since quadrilateral $Q$ makes $e^{\prime}$ flippable. Or vice-versa, $b = q$, but then $p\neq a$. Then $a,b$ must be contained in the infinite wedge $W = rps$ with apex at $p$. Otherwise, say w.l.o.g.~that $a$ lies outside $W$. This means that another edge of triangle $\triangle rsa$, other than $e^{\prime}$, intersects $e$ properly. Say edge $ra$. But then angle $\angle rpa > \angle rps$, which is a contradiction since $\angle rps$ was chosen to be maximum among all the edges of $NF(T^{\prime})$ crossing $e$. Note however that if $a,b$ are contained in $W$, then the quadrilateral $rasb$ is convex, which means that $e^{\prime}$ is flippable in $T^{\prime}$, which is a contradiction since we assume that $e^{\prime}\in NF(T^{\prime})$. Hence such an edge $e^{\prime}\in NF(T^{\prime})$ crossing $e$ cannot exist, which means that $NF(T^{\prime}) = NF(T)$, since $e$ was an edge of $NF(T)$, and thus we arrive at $T = T^{\prime}$.	
\end{proof}

\begin{lemma}\label{c-tri:sections:t-paths:lemmas:4}
Let $T$ be a triangulation with vertex set $\setp$, and let $l, l^{\prime}$ be two vertical lines such that $l\neq l^{\prime}$, and the vertical slab between $l$ and $l^{\prime}$ is empty of points of $\setp$. Then $\tp{l}{T} = \tp{l^{\prime}}{T}$.
\end{lemma}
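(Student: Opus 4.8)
The plan is to show that the T-path $\tp{l}{T}$ is itself a T-path of $T$ \emph{with respect to $l'$}, i.e.\ that it satisfies all three conditions of Definition~\ref{c-tri:def:t-paths} relative to $l'$; the uniqueness part of Theorem~\ref{c-tri:theorems:t-pathsOswin} then forces $\tp{l'}{T}=\tp{l}{T}$. Assume w.l.o.g.\ that $l$ lies to the left of $l'$ (the statement is vacuous if $l=l'$), and note that $l$ and $l'$ are necessarily separating, so neither carries a point of $\setp$. The observation that makes everything go through is the following: because the open vertical slab strictly between $l$ and $l'$ is empty of points of $\setp$ and neither line carries such a point, every point of $\setp$ lies strictly left of $l$ or strictly right of $l'$. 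Hence for a point of $\setp$, lying left of $l$ is equivalent to lying left of $l'$; consequently an edge $uv$ of $T$ has its two endpoints on opposite sides of $l$ iff it has them on opposite sides of $l'$, i.e.\ $uv$ crosses $l$ iff it crosses $l'$. In particular $T$ has the very same set of edges crossing the two lines, and the same edges of $\Conv(\setp)$ cross them.

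Given this, conditions (1) and (2) of Definition~\ref{c-tri:def:t-paths} for $\tp{l}{T}$ relative to $l'$ are immediate: being a chain of edges of $T$ is combinatorial and line-independent, every edge of $\tp{l}{T}$ crosses $l$ hence $l'$, and its first and last edges are edges of $\Conv(\setp)$ crossing $l$ hence $l'$. All the content is in condition (3). So let $e=ab$ and $e'=bd$ be two consecutive edges of $\tp{l}{T}$, with apex $b$, let $R_l$ be the triangular region bounded by the parts of $e,e'$ on $b$'s side of $l$ together with the segment they cut out of $l$, define $R_{l'}$ analogously for $l'$, and assume $R_l\cap\setp=\emptyset$ (this holds because $\tp{l}{T}$ is by notation the T-path w.r.t.\ $l$). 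I want to conclude $R_{l'}\cap\setp=\emptyset$.

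Since $b\in\setp$, either $b$ is strictly left of $l$ or strictly right of $l'$. If $b$ is strictly left of $l$, then $a$ and $d$, being opposite $b$ across $l$, are strictly right of $l$ and therefore strictly right of $l'$; walking from $b$ along $ba$ (resp.\ $bd$) one meets $l$ strictly before $l'$, so $R_l\subseteq R_{l'}$, and $R_{l'}\setminus R_l$ is a trapezoid whose two parallel sides lie on $l$ and $l'$ and whose four vertices therefore have $x$-coordinate between those of $l$ and $l'$; being their convex hull, the whole trapezoid lies in the closed slab between $l$ and $l'$, which contains no point of $\setp$. Hence $R_{l'}\cap\setp=R_l\cap\setp=\emptyset$. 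If instead $b$ is strictly right of $l'$, the situation is mirrored: $a,d$ are strictly left of $l$, one now meets $l'$ before $l$, so $R_{l'}\subseteq R_l$, the set difference $R_l\setminus R_{l'}$ again lies in the empty closed slab, and $R_{l'}\cap\setp\subseteq R_l\cap\setp=\emptyset$. This establishes condition (3), so $\tp{l}{T}$ is a T-path of $T$ w.r.t.\ $l'$ and uniqueness finishes the proof. The one place that needs care is this last paragraph: the containment between $R_l$ and $R_{l'}$ reverses according to which side of the slab the apex sits on, and one must check in each case that the symmetric difference is swallowed by the point-free slab; everything else is bookkeeping on top of the structural observation in the first paragraph.
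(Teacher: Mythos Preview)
Your proof is correct and follows essentially the same approach as the paper: show that $\tp{l}{T}$ satisfies the T-path axioms with respect to $l'$ and then invoke uniqueness. The paper's version is terser---it just asserts that each empty wedge of $\tp{l}{T}$ w.r.t.\ $l$ is also an empty wedge w.r.t.\ $l'$ (relying on a figure)---whereas you spell out the two cases for the apex and verify explicitly that the symmetric difference of the two wedge regions lies in the point-free slab; this is exactly the content the paper is leaving implicit.
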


\begin{proof}
Let us assume without loss of generality that $l^{\prime}$ lies to the left of $l$. Since the vertical slab between $l^{\prime}$ and $l$ is empty of points of $\setp$, observe that there is a bijection between the set $\W(\tp{l}{T})$, the empty wedges of $\tp{l}{T}$, and the set $\W(\tp{l^{\prime}}{T})$, see Figure~\ref{c-tri:sections:t-paths:figs:6}.
\begin{figure}[!htb]
	\begin{center}
		\includegraphics[height=4cm]{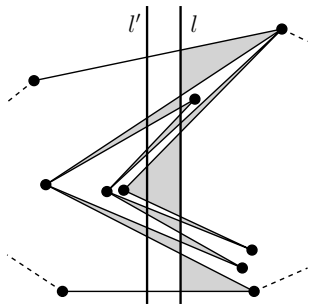}
	\end{center}
	\caption{T-path $\tp{l}{T}$ shown, along its empty wedges. Every wedge is also empty w.r.t.~$l^{\prime}$.}
	\label{c-tri:sections:t-paths:figs:6}
\end{figure}

	Thus $\tp{l}{T}$ and $\tp{l^{\prime}}{T}$ are both T-paths, by definition, of $T$ w.r.t.~$l^{\prime}$ and $l$ respectively, but every T-path of $T$ w.r.t.~some line is unique, so there is no other option but $\tp{l}{T} = \tp{l^{\prime}}{T}$.
\end{proof}

Now let us assume that $\setp$ is sorted from left to right, \emph{i.e.}, from smallest x-coordinate to the largest. We can assume that by a suitable rotation we do not have any ties in the x-coordinate, so $\setp = \{p_{1},p_{2}, \ldots, p_{n}\}$.

Let $\L = \{l_{1},\ldots l_{n-1}\}$ be a set of vertical lines such that point $p_{i}\in \setp$ lies in the vertical slab between $l_{i-1}$ and $l_{i}$, with $2\leq i\leq n-1$. Point $p_{1}$, the leftmost, lies in the unbounded vertical slab to the left of $l_{1}$, and $p_{n}$, the rightmost, lies in the unbounded vertical slab to the right of $l_{n-1}$. For a triangulation $T$ of $\setp$ let $\P(T) = \{\tp{l_i}{T}\ |\ l_{i}\in\L\}$. We now have the following result:

\begin{theorem}\label{theorems:1}
	Let $T$ be a triangulation with vertex set $\setp$. Then $\P(T)$ is enough to characterize $T$.
\end{theorem}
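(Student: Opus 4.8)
The plan is to use $\P(T)$ to recover enough about the flippable edges of $T$ to invoke Lemma~\ref{c-tri:sections:t-paths:lemmas:3}. So let $T^{\prime}$ be any triangulation of $\setp$ with $\P(T)=\P(T^{\prime})$; I want to conclude $T=T^{\prime}$. The key fact I would establish first is: \emph{every flippable edge of $T$ occurs as an edge of some T-path lying in $\P(T)$}. Granting this, if $e$ is flippable in $T$ then $e$ is an edge of some $\tp{l_i}{T}\in\P(T)=\P(T^{\prime})$, hence an edge of some $\tp{l_j}{T^{\prime}}$, and edges of T-paths of $T^{\prime}$ are edges of $T^{\prime}$; so $F(T)\subseteq E(T^{\prime})$, and by the symmetric argument $F(T^{\prime})\subseteq E(T)$. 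I would then finish by showing these two inclusions force $T=T^{\prime}$.

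To prove the key fact, take a flippable edge $e=pq$ of $T$. Then $e$ is not a hull edge, so it is shared by two triangles of $T$ whose apices $r,s$ are the opposite vertices of $e$, and $prqs$ is a convex quadrilateral with diagonals $pq$ and $rs$. Since $e$ is flippable in $T$, a vertical line $m$ makes $e$ good --- in the sense used in Lemmas~\ref{c-tri:sections:t-paths:lemmas:1} and~\ref{c-tri:sections:t-paths:lemmas:2} --- precisely when $e$ properly intersects $m$ and $r,s$ lie on opposite sides of $m$, i.e.\ when the $x$-coordinate of $m$ is strictly between $x(p)$ and $x(q)$ and strictly between $x(r)$ and $x(s)$, where $x(\cdot)$ denotes $x$-coordinate. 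This is a nonempty open interval of $x$-coordinates: it contains the $x$-coordinate of the crossing point of $pq$ and $rs$, which is where the vertical line in the proof of Lemma~\ref{c-tri:sections:t-paths:lemmas:2} is placed. Pick $\xi$ in this interval with $\xi\neq x(p_k)$ for every $p_k\in\setp$ --- possible since the interval is open and $\setp$ is finite --- and set $l=\{x=\xi\}$. Then $l\cap\setp=\emptyset$ while $l$ crosses the segment $e\subseteq\Conv(\setp)$, so $l$ is a separating line w.r.t.~$\setp$; as $e$ is good w.r.t.~$l$, Lemma~\ref{c-tri:sections:t-paths:lemmas:1} gives $e\in\tp{l}{T}$. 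Finally $x(p_1)<\xi<x(p_n)$ and $\xi$ avoids all point $x$-coordinates, so $\xi$ --- and likewise $x(l_j)$ --- lies strictly between $x(p_j)$ and $x(p_{j+1})$ for a unique index $j$; hence the vertical slab bounded by $l$ and $l_j$ is contained in the point-free slab $\{x(p_j)<x<x(p_{j+1})\}$, and Lemma~\ref{c-tri:sections:t-paths:lemmas:4} (or triviality, if $l=l_j$) gives $\tp{l}{T}=\tp{l_j}{T}$. Thus $e$ is an edge of $\tp{l_j}{T}\in\P(T)$, as needed.

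For the closing step, suppose $T\neq T^{\prime}$; then $E(T)\neq E(T^{\prime})$, so there is an edge $e\in E(T)\setminus E(T^{\prime})$, and $F(T)\subseteq E(T^{\prime})$ forces $e\in NF(T)$. Since $T^{\prime}$ is an inclusion-maximal crossing-free set, $e$ is properly crossed by some edge $e^{\prime}$ of $T^{\prime}$, and $e^{\prime}\in NF(T^{\prime})$ because $F(T^{\prime})\subseteq E(T)$ while $e^{\prime}$ crosses $e\in E(T)$. This is exactly the starting configuration of the proof of Lemma~\ref{c-tri:sections:t-paths:lemmas:3} --- an edge of $NF(T)$ properly crossed by an edge of $NF(T^{\prime})$ --- and that proof's angle-maximization argument produces a contradiction, so $T=T^{\prime}$. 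I expect this last step to be the main obstacle: $\P(T)=\P(T^{\prime})$ is only an equality of \emph{unordered} sets of T-paths, so it does not hand us $F(T)=F(T^{\prime})$ directly, only the two inclusions above; one therefore has to check that the proof of Lemma~\ref{c-tri:sections:t-paths:lemmas:3} survives under these weaker hypotheses, which it does, since that proof uses ``$F(T)=F(T^{\prime})$'' only to guarantee that an edge of $T^{\prime}$ crossing a $T$-edge not in $T^{\prime}$ is non-flippable in $T^{\prime}$.
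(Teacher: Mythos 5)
Your proof is correct and follows essentially the same route as the paper's: reduce via Lemma~\ref{c-tri:sections:t-paths:lemmas:3} to showing that every flippable edge of $T$ lies on some T-path of $\P(T)$, obtain for such an edge a vertical separating line making it good (as in Lemma~\ref{c-tri:sections:t-paths:lemmas:2}), and transfer that line to a line of $\L$ using the empty-slab Lemma~\ref{c-tri:sections:t-paths:lemmas:4}. The one place you diverge is the closing step, where the paper simply asserts that $T$ and $T^{\prime}$ then share the same set of flippable edges, while you observe that $\P(T)=\P(T^{\prime})$ only yields the inclusions $F(T)\subseteq E(T^{\prime})$ and $F(T^{\prime})\subseteq E(T)$ and verify that the angle-maximization argument in the proof of Lemma~\ref{c-tri:sections:t-paths:lemmas:3} still goes through under these weaker hypotheses --- a more careful treatment of a point the paper glosses over.
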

\begin{proof}
	We have to prove that there cannot be another triangulation $T^{\prime}$ with vertex set $\setp$ such that $T^{\prime}\neq T$, but $\P(T) = \P(T^{\prime})$. However, by Lemma~\ref{c-tri:sections:t-paths:lemmas:3} we know that the set of flippable edges of a triangulation characterizes it, hence it is enough to prove that \emph{every} flippable edge of $T$ is an edge of some T-path in $\P(T)$. 
	
	Let us assume for the sake of contradiction that there is an edge $e$ of $T$ that is flippable but that is not an edge of any T-path in $\P(T)$. By Lemma~\ref{c-tri:sections:t-paths:lemmas:2} we know that there exists one vertical line $l$ such that $e$ is an edge of the T-path $\tp{l}{T}$. Note that such a line $l$ is parallel to every line in $\L$, and that one endpoint of $e$ lies to the left of $l$ and the other to the right, so $l$ must lie inside the vertical slab between to consecutive lines of $\L$, or to the left of $l_{1}\in\L$, or to the right of $l_{n-1}\in\L$, let us assume without loss of generality that $l$ lies in the vertical slab between $l_{i}$ and $l_{i+1}$, with $1\leq i\leq n-2$. Observe however that such a slab contains exactly one point of $\setp$, thus it must happen that either, the vertical slab between $l_{i}$ and $l$ is empty of points of $\setp$, or the vertical slab between $l$ and $l_{i+1}$ is empty of points of $\setp$, say the former without loss of generality. Nevertheless, by Lemma~\ref{c-tri:sections:t-paths:lemmas:4}, we know that $\tp{l}{T} = \tp{l_i}{T}$, so $\tp{l}{T}\in\P(T)$, which is a contradiction since $e$ was a flippable edge of $T$ that was not an element of $\P(T)$. Thus, such an edge $e$ cannot exist, and there is no other option but $T = T^{\prime}$ since they share the same set of flippable edges.
\end{proof}

Therefore every triangulation $T$ having $\setp$ as vertex set has a unique set $\P(T)$ of T-paths, and thus the number of triangulations $|\F_{T}(\setp)|$ is just the number of different sets of T-paths $\P(T)$ that we can find on $\setp$. Let $\tcomp(l,\setp) = \{\tp{l}{T}\ |\ T {\textcolor{black}{\text{ is a triangulation of }}} \setp\}$ be the set of all T-paths of $\setp$ w.r.t.~line $l$. Note that while the set of lines $\L$ stays fixed, there will be in general more than one T-path that can be formed per line, thus a tuple $\{\pi_{1},\ldots, \pi_{n-1}\}$ of T-paths of $\setp$, with $\pi_{i}\in\tcomp(l_{i},\setp)$, defines a triangulation if and only if all those T-paths are pairwise non-crossing. We will say that such a pairwise non-crossing set is \emph{compatible}. It is easy to show that, in order to verify if such a set is compatible, it suffices to check that two consecutive T-paths $\pi\in\tcomp(l_{i},\setp)$ and $\pi^{\prime}\in\tcomp(l_{i+1}, \setp)$ are non-crossing, for $1\leq i\leq n-2$.

Note that there might be triangulations sharing some T-paths, for example, if $\setp$ is in convex position, its number of triangulations is $O(4^{n})$, while its number of T-paths is $O(2^{n})$, so we obtain on average $O(2^{n})$ triangulations per T-path. This motivates the following definition:
\begin{align*}
	\T(\pi_{j}) &= \{\{\pi_{1},\ldots, \pi_{j-1}\}\ |\ \{\pi_{1},\ldots,\pi_{j-1},\pi_{j}\} {\textcolor{black}{\text{ is compatible and }}} \pi_{i}\in\tcomp(l_{i},\setp)\}.
\end{align*}

We need two more definitions in order to describe our algorithm. For each $\pi^{\prime}\in\tcomp(l_{i+1},\setp)$ we define $\lambda(\pi^{\prime}) = \{\pi\in\tcomp(l_{i},\setp)\ |\ \pi {\textcolor{black}{\text{ is compatible with }}} \pi^{\prime}\}$. Similarly we define $\mu(\pi) = \{\pi^{\prime}\in\tcomp(l_{i+1}, \setp)\ |\ \pi^{\prime}{\textcolor{black}{\text{ is compatible with }}} \pi\}$ for each $\pi\in\tcomp(l_{i},\setp)$. Now we are ready to describe our algorithm.

\subsection{The sweep line algorithm}\label{c-tri:sections:t-paths:sub-sections:1}

We consider sweeping a vertical line from left to right, the \emph{event points} being the vertical lines in the set $\L$ as defined before. At any event point $l_{i}$ we maintain $\tcomp(l_{i}, \setp)$, and for each $\pi\in\tcomp(l_{i},\setp)$ we store $|\T(\pi)|$. At $i = 1$ we clearly have $|\tcomp(l_{1}, \setp)| = 1$, and for this particular $\pi\in\tcomp(l_{1},\setp)$ we have $|\T(\pi)| = 1$. We will show that each $\pi^{\prime}\in\tcomp(l_{i+1},\setp)$ can be obtained from each $\pi\in\tcomp(l_{i},\setp)$ compatible with $\pi^{\prime}$\footnote{Again, by compatibility we mean non-crossing.} by doing \emph{local changes}, which will be defined later on, for the time being the important thing to know is that the number of possible local changes for a T-path is $O\left(n^{2}\right)$. Hence, if we go through each $\pi\in\tcomp(l_{i}, \setp)$ and try all possible local changes for $\pi$, we will obtain $\tcomp(l_{i+1},\setp)$. Moreover, for each $\pi^{\prime}\in\tcomp(l_{i+1}, \setp)$ we also get the set $\lambda(\pi^{\prime})$. Observe that $|\T(\pi^{\prime})|$ is given by $\sum_{\pi\in\lambda(\pi^{\prime})}|\T(\pi)|$. Thus we are able to compute $\tcomp(l_{i+1},\setp)$ as well as $|\T(\pi^{\prime})|$ for each $\pi^{\prime}\in\tcomp(l_{i+1},\setp)$. All this takes time $O\left(n^{2}\cdot t_{i}\right)$, where $t_{j} = |\tcomp(l_{j},\setp)|$, since there are $O(n^{2})$ local changes to try for each $\pi\in\tcomp(l_{i},\setp)$, and as we will see later, the time taken per local change is constant. The overall running time of the algorithm is therefore $\sum_{l_{j}\in\L}O\left(n^{2}\cdot t_{j}\right)\leq O\left(n^{3}\cdot t\right)$, where $t = \max\{t_{j}\}$. At the end, the number we are looking for is precisely $|\F_{T}(\setp)| = |\T(\pi)|$, where $\pi$ is the unique T-path of $\tcomp(l_{n-1}, \setp)$.

Our main task now is to explain the local changes and to prove that there are indeed $O\left(n^{2}\right)$. We first need the following intermediate result:

\begin{lemma}\label{c-tri:sections:t-paths:lemmas:5}
	At times $l = l_{i}$ and $l = l_{i+1}$, point $p = p_{i+1}$ has degree zero, one, or two in every T-path $\pi\in\tcomp(l_{i}, \setp)$, as well as in every T-path $\pi^{\prime}\in\tcomp(l_{i+1}, \setp)$. However, if $\pi$ and $\pi^{\prime}$ do not cross, then $p$ cannot simultaneously have degree zero in both T-paths, that is, $p$ must be a vertex of at least one T-path.
\end{lemma}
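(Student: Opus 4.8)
The first claim is immediate and I will not dwell on it: a T-path is a simple chain of edges, so \emph{any} vertex of $\setp$ — in particular $p$ — has degree $0$, $1$ or $2$ in it. For the second claim I would argue by contraposition: assuming $p$ has degree $0$ in both $\pi$ and $\pi'$, I will derive a contradiction. The first move is a reduction to the case where $\pi$ and $\pi'$ are T-paths of one and the same triangulation. Writing $\pi=\tp{l_i}{T_1}$ and $\pi'=\tp{l_{i+1}}{T_2}$, splice the two tuples of T-paths: the tuple $(\tp{l_1}{T_1},\dots,\tp{l_i}{T_1},\tp{l_{i+1}}{T_2},\dots,\tp{l_{n-1}}{T_2})$ has all consecutive pairs non-crossing — pairs inside the $T_1$-block and inside the $T_2$-block because they are edge sets of a single plane graph, and the junction pair $(\pi,\pi')$ by hypothesis — hence it is compatible and is therefore the T-path tuple of some triangulation $T$ with $\tp{l_i}{T}=\pi$ and $\tp{l_{i+1}}{T}=\pi'$. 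From now on both paths are read off this fixed $T$, in which $p$ is a vertex lying on neither path.

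Next I would dispose of the case $p\in\Conv(\setp)$: then $p$ is an endpoint of the unique convex-hull edge crossed by $l_i$, which is the first or last edge of $\tp{l_i}{T}=\pi$, so $p\in\pi$ — contradiction. Hence $p$ is an interior vertex of $T$ and $\deg_T(p)\ge 3$. The rest is a degree count at $p$. Since $p=p_{i+1}$ is the \emph{only} point of $\setp$ strictly between $l_i$ and $l_{i+1}$, every neighbour of $p$ in $T$ lies strictly left of $l_i$ or strictly right of $l_{i+1}$; call the corresponding edges \emph{left} and \emph{right} edges of $p$. Around $p$ the left edges occupy one contiguous arc of directions and the right edges the complementary arc. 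Moreover $p$ has at least one edge of each kind: the triangles incident to $p$ tile a neighbourhood of $p$, so one of them contains the leftward direction $(-1,0)$ in its angle at $p$, and such a triangle must have an edge going strictly into $\{x<x(p)\}$ (two edges both going weakly rightward would span an angle missing $(-1,0)$); as $p$ is the only point of $\setp$ strictly between $l_i$ and $l_{i+1}$, the other end of that edge lies left of $l_i$ — and the symmetric argument with $(1,0)$ gives a right edge. It remains to show $p$ has \emph{at most} one left edge and at most one right edge; then $\deg_T(p)=2$, contradicting $\deg_T(p)\ge 3$.

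The crux, and the only place the emptiness of T-path wedges enters, is: $p$ has at most one left edge. If it had two, they would be consecutive around $p$ and would bound a triangle $\triangle p x_1 x_2$ of $T$ with $x_1,x_2$ strictly left of $l_i$ and $p$ strictly right of $l_i$; so $l_i$ crosses this triangle and $p$ is its only vertex on the right. Now $\triangle p x_1 x_2$, being a triangle of $T$ stabbed by $l_i$, lies in the ``strip'' of triangles of $\tp{l_i}{T}=\pi$ delimited by two consecutive edges $e,e'$ of $\pi$; the wedge $W$ of $\pi$ bounded by $e$, $e'$ and $l_i$ is exactly the part of that strip on the side of $l_i$ containing the common endpoint $v$ of $e$ and $e'$, it is empty of points of $\setp$ in its interior, and $\partial W\cap\setp=\{v\}$. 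If $v$ is right of $l_i$, the corner of $\triangle p x_1x_2$ at $p$ lies in $W$, whence $p\in\partial W$ and $p=v\in\pi$ — contradiction. If $v$ is left of $l_i$, the part of $\triangle p x_1x_2$ on the left of $l_i$, which contains both $x_1$ and $x_2$, lies in $W$, whence $x_1,x_2\in\partial W$ and $x_1=x_2=v$ — impossible. Replacing $\pi$ by $\pi'=\tp{l_{i+1}}{T}$, $l_i$ by $l_{i+1}$ and ``left'' by ``right'' gives: $p$ has at most one right edge. Hence $\deg_T(p)=2$, the desired contradiction.

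The step I expect to need the most care is the structural fact invoked above: that every edge of $T$ crossing $l_i$ is either an edge of $\pi$ or sits in the strip of exactly one consecutive pair of $\pi$-edges, and that for such a pair the region bounded by the two edges and $l_i$ is precisely the near-$l_i$ part of that strip, is $\setp$-free in its interior, and meets $\setp$ on its boundary only at the turning vertex of $\pi$. This is essentially a repackaging of Aichholzer's description of the T-path in terms of the triangles that $l_i$ stabs (and of the fact that those triangles are partitioned by the consecutive $\pi$-edges); granting it, the remaining arguments are elementary reasoning about which side of a vertical line points lie on.
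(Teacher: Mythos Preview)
Your treatment of the first claim is where the real problem lies. A T-path is \emph{not} in general a simple path: the same vertex can be visited more than once, and its degree in the path can exceed two. The paper itself exhibits this a few paragraphs later, where the local substitution $(a,b,d)\to(a,b,b',p,c',b,d)$ produces a T-path in which $b$ has degree four. The bound $\deg_\pi(p)\le 2$ is \emph{specific to $p=p_{i+1}$} and hinges on the fact that $p$ is the only point of $\setp$ in the vertical strip between $l_i$ and $l_{i+1}$: the paper's argument is that a fourth neighbour of $p$ in $\pi$ would force some intermediate vertex $x\ne p$ of $\pi$ on the $p$-side of $l_i$ and inside that strip, which the strip cannot supply. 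So the first claim is not immediate, and your one-line dismissal is a genuine gap.

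Your argument for the second claim is essentially correct but takes a different and heavier route than the paper's. The paper argues directly from the two paths: if $p$ has degree zero in both, then $p$ lies inside a triangle $\triangle abd$ spanned by three consecutive vertices of $\pi$ and inside a triangle $\triangle a'b'd'$ spanned by three consecutive vertices of $\pi'$; since both contain $p$, either they properly intersect (so $\pi$ and $\pi'$ cross) or one is nested in the other, in which case the outer path's wedge contains vertices of the inner triangle, violating emptiness. No common triangulation is needed, and no degree count. Your route---splice the two path-tuples into a compatible tuple, pass to a single triangulation $T$, then show $p$ has at most one left edge and at most one right edge in $T$---does work, and the wedge-containment step you describe is sound. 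But it leans on the ``consecutive non-crossing $\Rightarrow$ compatible $\Rightarrow$ comes from some $T$'' claim and on the strip structure you flag at the end, neither of which is proved in the paper at this point. The paper's direct argument avoids both detours.
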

\begin{proof}
	Let us look at the case when $l = l_{i}$, the other case, $l = l_{i+1}$ is just symmetric. If $p$ is not a vertex of $\pi$, then the degree of $p$ is zero. If $p$ is a vertex of $\pi$, then there are two cases depending on whether $p$ is a vertex of $\Conv(\setp)$ or not. Since both cases are very similar we will prove only the latter. 
	
	Since $p$ lies inside $\Conv(\setp)$ we know that $p$ is an internal vertex of $\pi$, \emph{i.e.}, the degree of $p$ in $\pi$ is at least two. To verify that it is at most two let us assume that its degree is at least four, it must be even. Let $b$ be the first neighbor of $p$ in $\pi$, when visiting $p$ while traversing $\pi$ from the first vertex to the last. Similarly, let $c$ be the last neighbor of $p$ in $\pi$ in the same traversing order, see Figure~\ref{c-tri:sections:t-paths:figs:7}. Since the degree of $p$ is at least four, there must be other two vertices $b^{\prime},c^{\prime}$ between $b$ and $c$. Observe that $p$ lies to the right of $l$, and $b,b^{\prime},c,c^{\prime}$ to the left, so there must be at least one vertex $x\neq p$ of $\pi$ connecting $b^{\prime}$ and $c^{\prime}$, however, $x$ should lie inside the vertical slab between $l$ and $l_{i+1}$, which is empty of points of $\setp$ except for $p$, see Figure~\ref{c-tri:sections:t-paths:figs:8}. Thus $x$ cannot exist, which implies that $b^{\prime},c^{\prime}$ cannot exist either. Hence, the degree of $p$ in $\pi$ is at most two, which is what we wanted to prove.
	
\begin{figure}[!htb]
	\begin{center}
		\begin{minipage}[b][5.7cm][t]{7cm}
			\begin{center}
				\includegraphics[height=4cm]{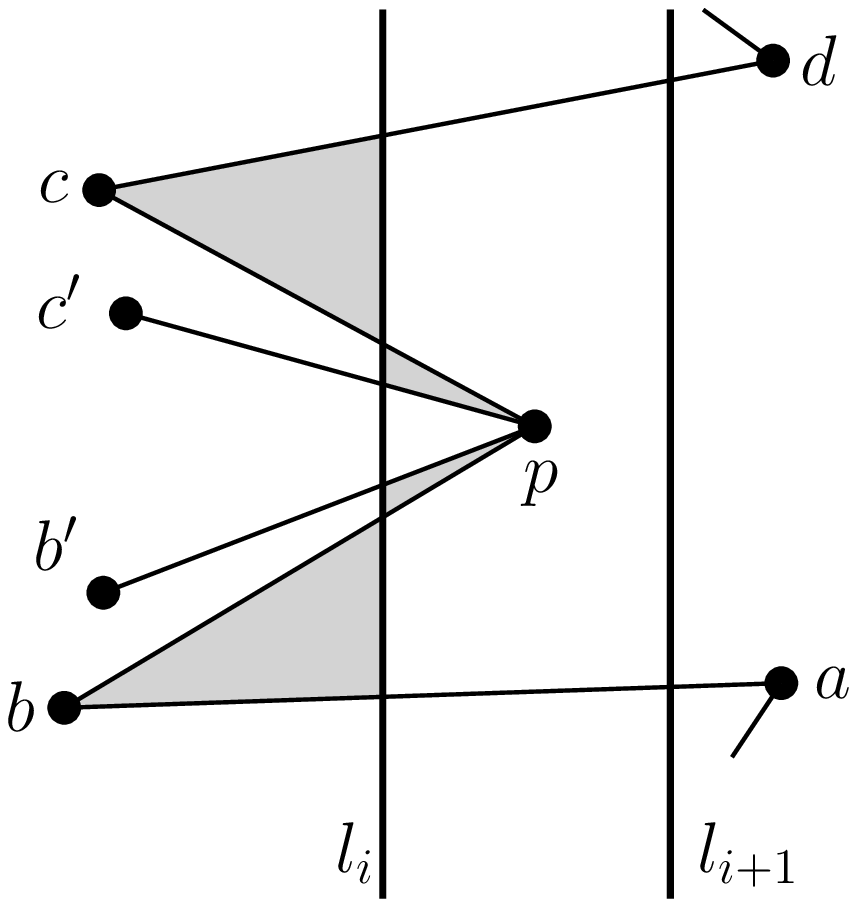}
				\caption{T-path $\pi\in\tcomp(l_{i},\setp)$ where $p$ has degree at least four shown.}
				\label{c-tri:sections:t-paths:figs:7}
			\end{center}
		\end{minipage}
		\quad
		\begin{minipage}[b][5.7cm][t]{7cm}
			\begin{center}
				\includegraphics[height=4cm]{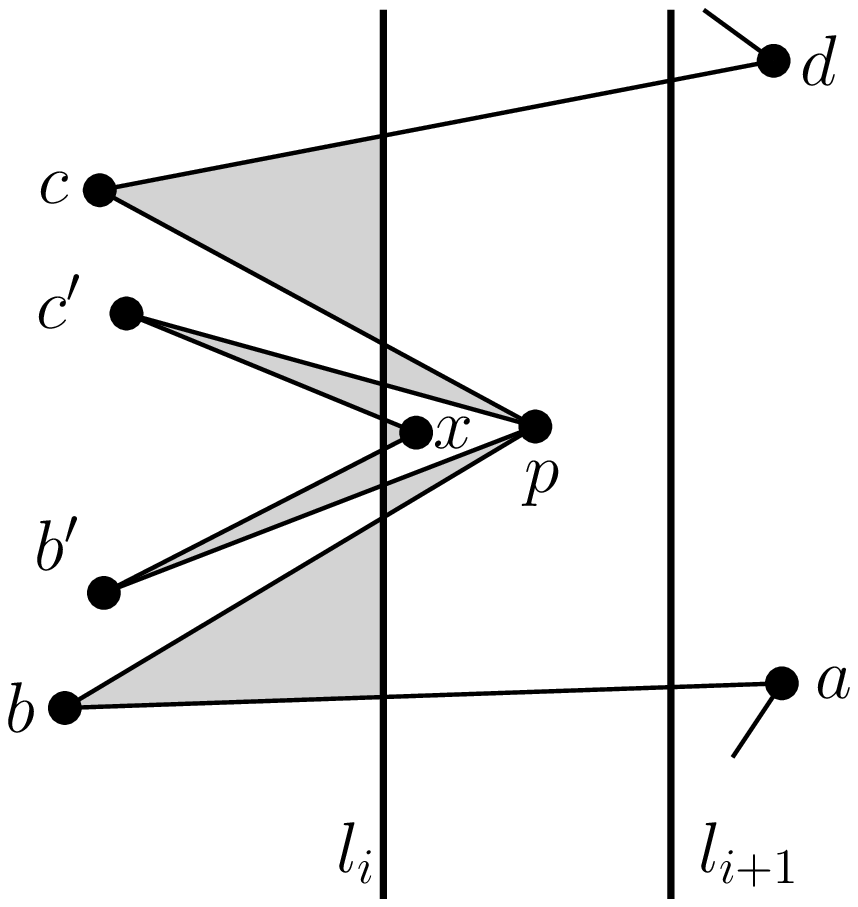}
				\caption{Vertex $x$ of $\pi$ cannot exist because $p$ is the only point in that vertical slab.}
				\label{c-tri:sections:t-paths:figs:8}
			\end{center}
		\end{minipage}
	\end{center}
\end{figure}

	It remains to prove that $p$ cannot have degree zero in both T-paths, $\pi$ and $\pi^{\prime}$, if they do not cross. To see this, note that if neither $\pi$ nor $\pi^{\prime}$ has $p$ as a vertex, then clearly $p$ cannot be on $\Conv(\setp)$, so $p$ must lie in the interior of $\Conv(\setp)$, and thus it also lies inside the triangles $\triangle abd$, and $\triangle a^{\prime}b^{\prime}d^{\prime}$, where $a,b,d$ and $a^{\prime},b^{\prime},d^{\prime}$ are consecutive vertices of $\pi$ and $\pi^{\prime}$ respectively, see Figures~\ref{c-tri:sections:t-paths:figs:9} and~\ref{c-tri:sections:t-paths:figs:10}. Note however that this case can only happen if either $\triangle abd$ and $\triangle a^{\prime}b^{\prime}d^{\prime}$ intersect, or if one lies entirely inside the other, since both triangles contain $p$ in their interior. In the first case we have obviously an intersection between $\pi$ and $\pi^{\prime}$, which is a contradiction. In the second case, assume without loss of generality that $\triangle abd$ lies inside $\triangle a^{\prime}b^{\prime}d^{\prime}$. But then observe that since $a,d$ and $b^{\prime}$ lie on the same side of $l_{i+1}$, then the wedge $a^{\prime}b^{\prime}d^{\prime}$ of $\pi^{\prime}$ is not empty, which is clearly not possible since $\pi^{\prime}$ is a T-path and edges $a^{\prime}b^{\prime}$ and $b^{\prime}d^{\prime}$ are consecutive in $\pi^{\prime}$, see Figure~\ref{c-tri:sections:t-paths:figs:10}. Thus, Lemma~\ref{c-tri:sections:t-paths:lemmas:5} follows.\qedhere

\begin{figure}[!htb]
	\begin{center}
		\begin{minipage}[b][5.7cm][t]{7cm}
			\begin{center}
				\includegraphics[height=4cm]{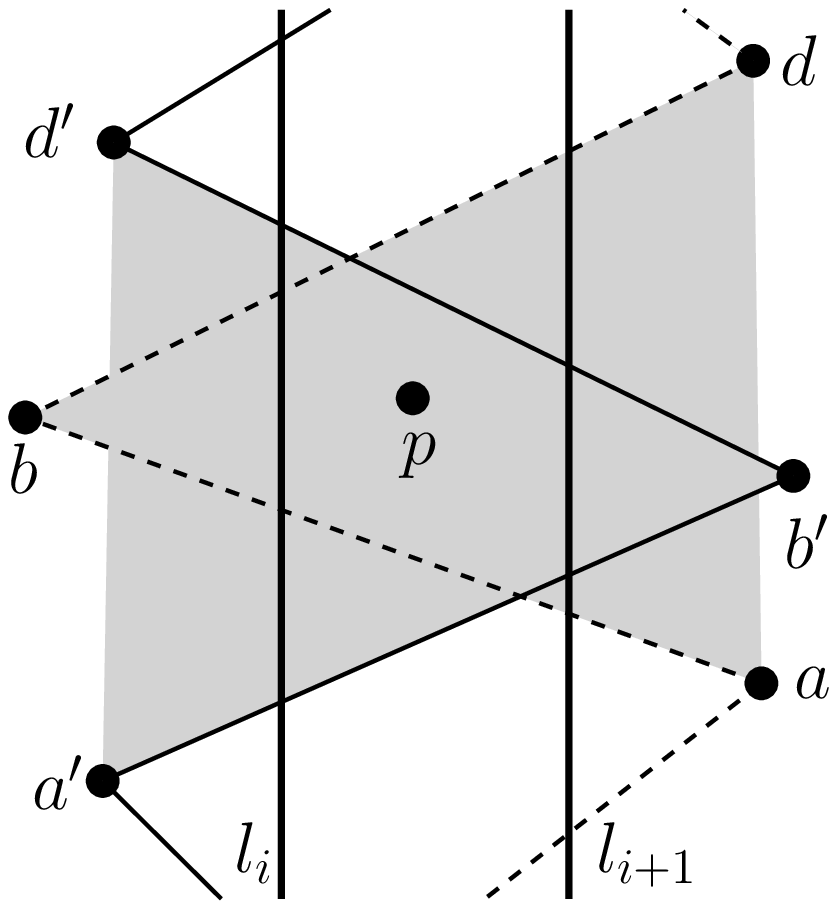}
				\caption{$\pi$ is shown in solid lines, and $\pi^{\prime}$ in dashed lines.}
				\label{c-tri:sections:t-paths:figs:9}
			\end{center}
		\end{minipage}
		\quad
		\begin{minipage}[b][5.7cm][t]{7cm}
			\begin{center}
				\includegraphics[height=4cm]{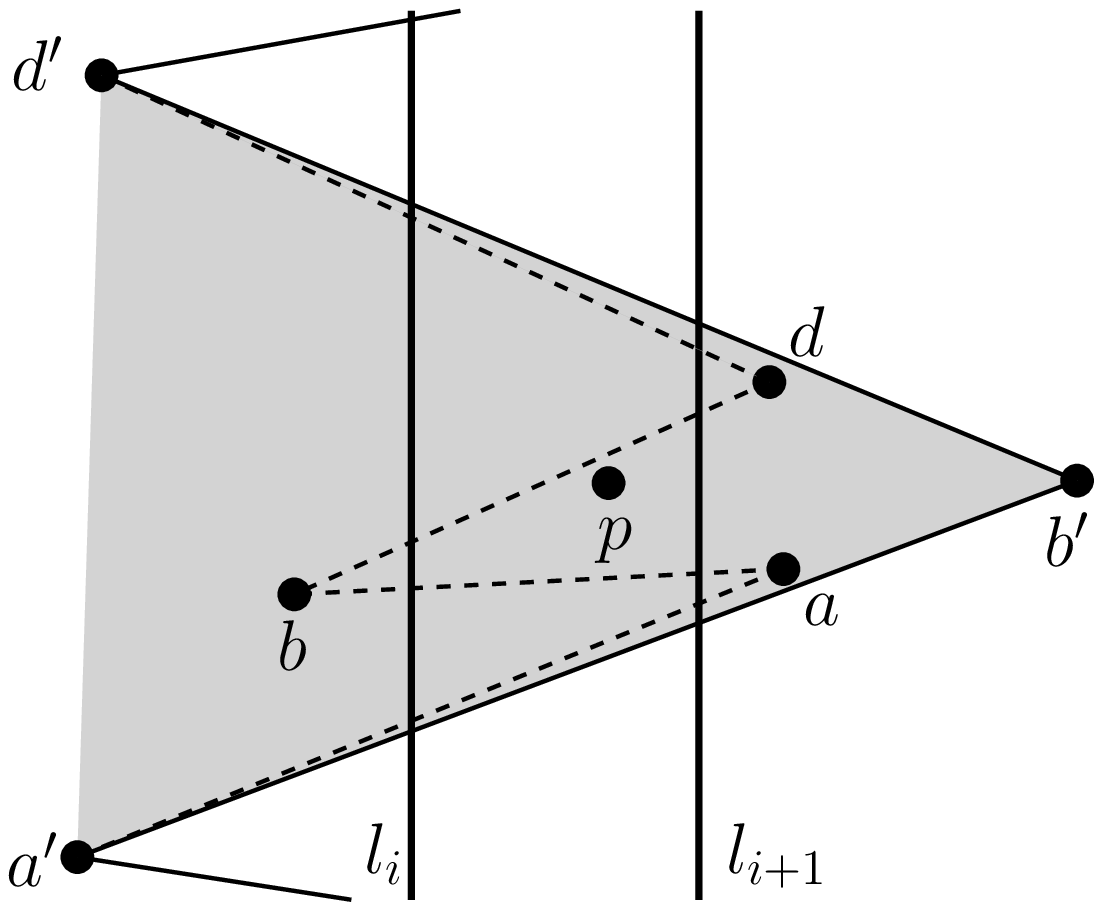}
				\caption{If $\triangle abd$ lies inside $\triangle a^{\prime}b^{\prime}d^{\prime}$, then the wedge $a^{\prime}b^{\prime}d^{\prime}$ with apex $b^{\prime}$ and delimited by $l_{i+1}$ is not empty.}	
				\label{c-tri:sections:t-paths:figs:10}
			\end{center}
		\end{minipage}
	\end{center}
\end{figure}
\end{proof}

We are now ready to explain the local changes carefully: From Lemma~\ref{c-tri:sections:t-paths:lemmas:4} we know that $\tp{l}{T} = \tp{l^{\prime}}{T}$ for a triangulation $T$ of $\setp$ as long as the vertical slab between $l$ and $l^{\prime}$ is empty of points of $\setp$.  This in turn implies that $\tcomp(l,\setp) = \tcomp(l^{\prime},\setp)$. Now assume that $l^{\prime} = l_{i}$ and $l = l_{i+1}$, that is, the vertical slab between $l$ and $l^{\prime}$ is no longer empty, but contains point $p = p_{i+1}$. It is clear that during the continuous movement from $l_{i}$ to $l_{i+1}$ the only ways a T-path can change, are the ones involving $p$ in the following two senses: If $p$ is not a vertex of the current T-path $\pi\in\tcomp(l_{i},\setp)$, then the only empty wedge of $\pi$ that cannot be made an empty wedge of a T-path $\pi^{\prime}\in\tcomp(l_{i+1},\setp)$ is the one that during the sweeping process starts containing $p$, see Figure~\ref{c-tri:sections:t-paths:figs:12}. If  on the other hand, $p$ is a vertex of $\pi$,  then its neighbors in $\pi$ lie to the left of $l_{i}$, since $p$ lies to the right, see Figure~\ref{c-tri:sections:t-paths:figs:13}. But then $p$ along with its neighbors lie to the left of $l_{i+1}$, so those adjacencies cannot be part of a T-path w.r.t.~$l_{i+1}$. Thus we will obtain $\mu(\pi)$, for every T-path $\pi\in\tcomp(l_i,\setp)$, by locally changing $\pi$ around $p$. We will have two cases to consider depending on whether $p$ appears as a vertex of the current T-path $\pi\in\tcomp(l_i, \setp)$ we are considering, or not. We will study each case in turn, however, there is a case analysis that one has to do, so in order to avoid going through all cases, we will describe the general setting from which all the cases can be obtained. Let again $\pi\in\tcomp(l_{i},\setp)$:

\begin{figure}[!htb]
	\begin{center}
		\begin{minipage}[b][5.7cm][t]{7cm}
			\begin{center}
				\includegraphics[height=4cm]{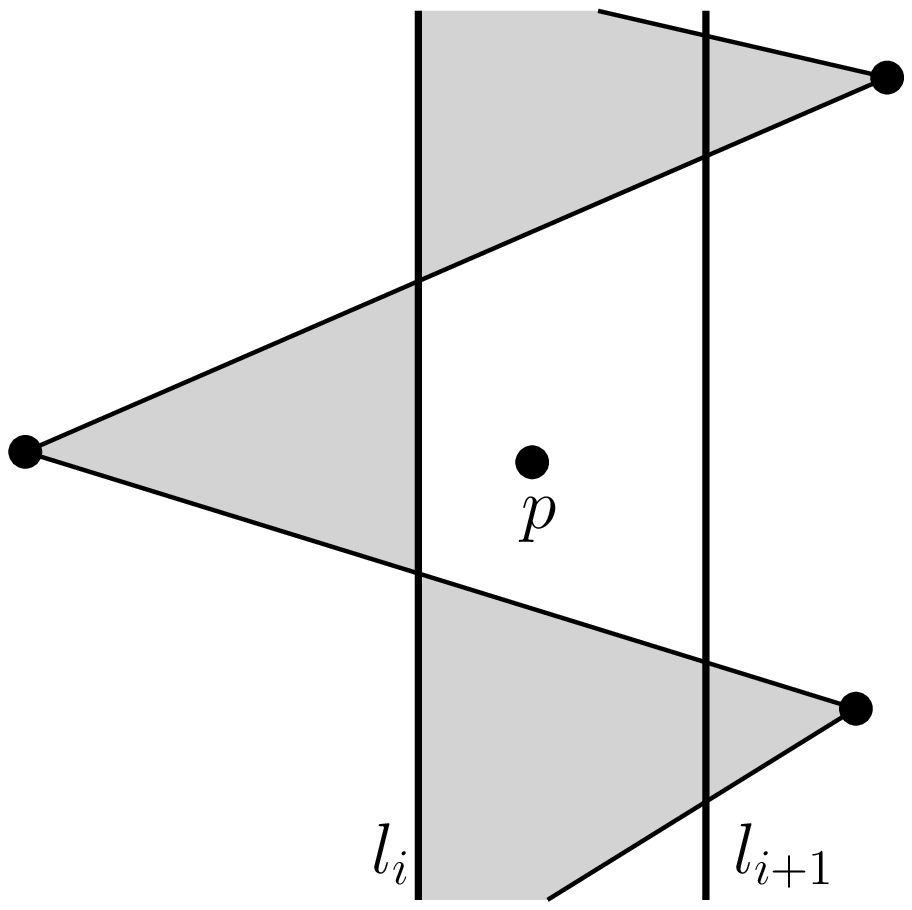}
				\caption{Sweeping from $l_{i}$ to $l_{i+1}$ results in a wedge containing $p$.}
				\label{c-tri:sections:t-paths:figs:12}
			\end{center}
		\end{minipage}
		\quad
		\begin{minipage}[b][5.7cm][t]{7cm}
			\begin{center}
				\includegraphics[height=4cm]{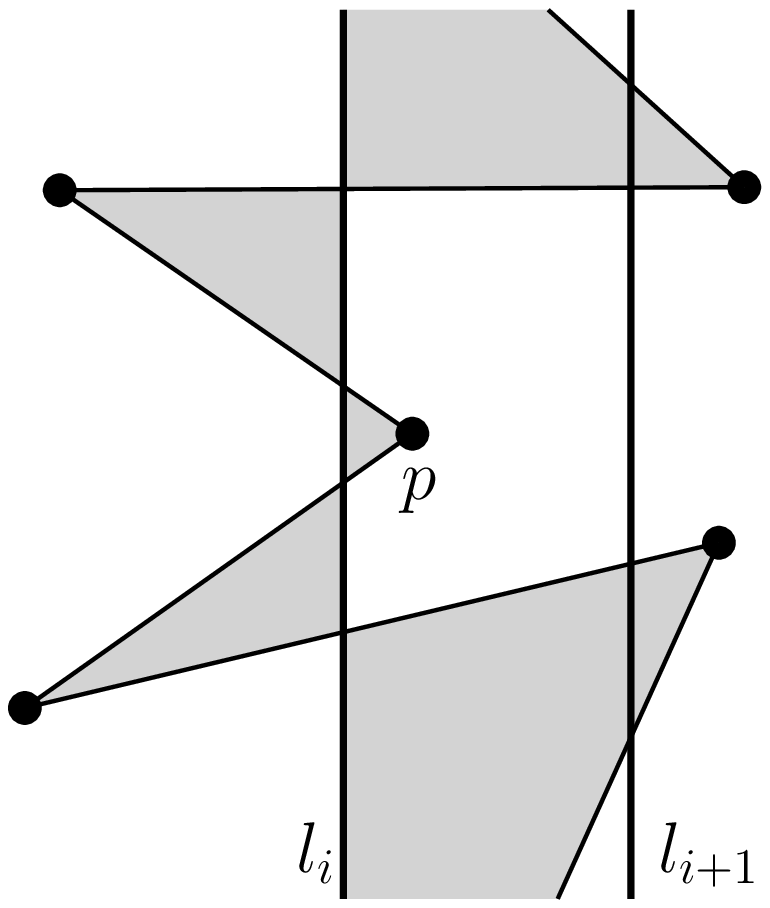}
				\caption{Sweeping from $l_{i}$ to $l_{i+1}$ results in the adjacencies of $p$ being on the same side of $l_{i+1}$.}
				\label{c-tri:sections:t-paths:figs:13}
			\end{center}
		\end{minipage}
	\end{center}
\end{figure}
\begin{enumerate}[label=(\oldstylenums{\arabic*})]

\item Assume that $p$ appears as a vertex of $\pi$, and let us first consider the case when $p$ lies in the interior of $\Conv(\setp)$. By Lemma~\ref{c-tri:sections:t-paths:lemmas:5} point $p$ must have degree exactly two in $\pi$.

Now take vertices $a,b,c,d$ of $\pi$ as displayed in Figure~\ref{c-tri:sections:t-paths:figs:14}. Look for all pairs of points $b^{\prime},c^{\prime}\in \setp$ such that the substitution of the pattern $(a,b,p,c,d)$ in $\pi$ to $(a,b,b^{\prime},p,c^{\prime},c,d)$ results in a T-path w.r.t.~$l_{i+1}$, see Figure~\ref{c-tri:sections:t-paths:figs:14}.

\begin{figure}[!htb]
	\begin{center}
%	\fbox{
		\begin{minipage}[b][5cm][t]{7cm}
			\begin{center}
				\includegraphics[height=4cm]{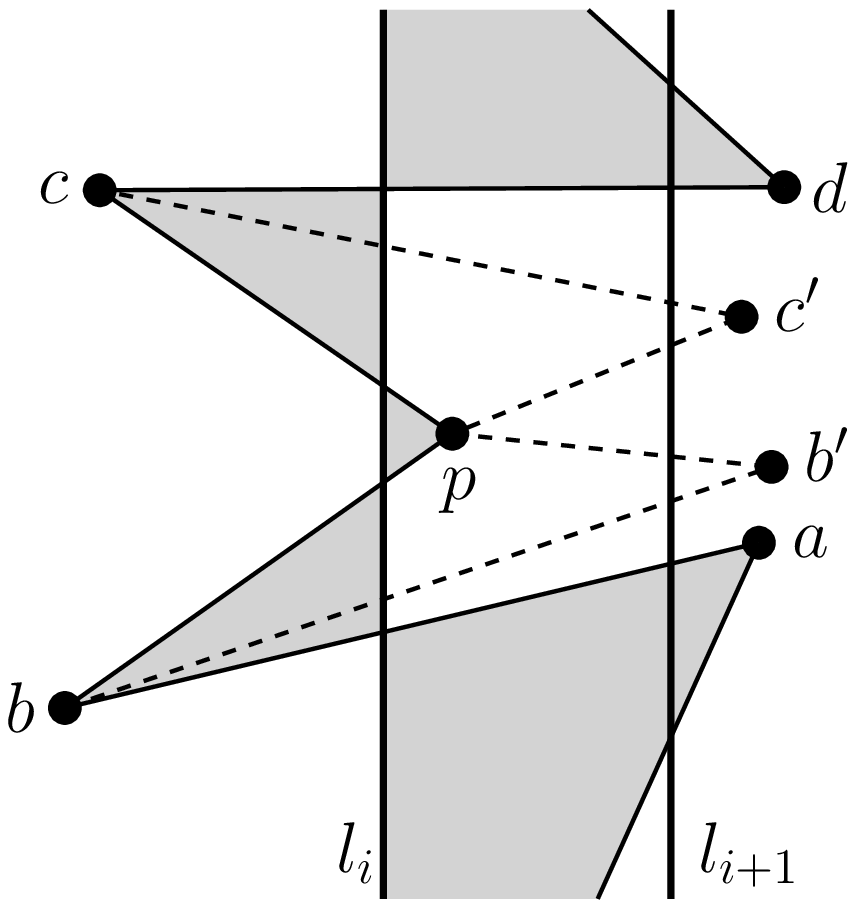}
				\caption{Substitution $(a,b,p,c,d)\rightarrow(a,b,b^{\prime},p,c^{\prime},c,d)$ is only one of the possibilities.}
				\label{c-tri:sections:t-paths:figs:14}
			\end{center}
		\end{minipage}
%	}
	\quad
%	\fbox{
		\begin{minipage}[b][5cm][t]{7cm}
			\begin{center}
				\includegraphics[height=4cm]{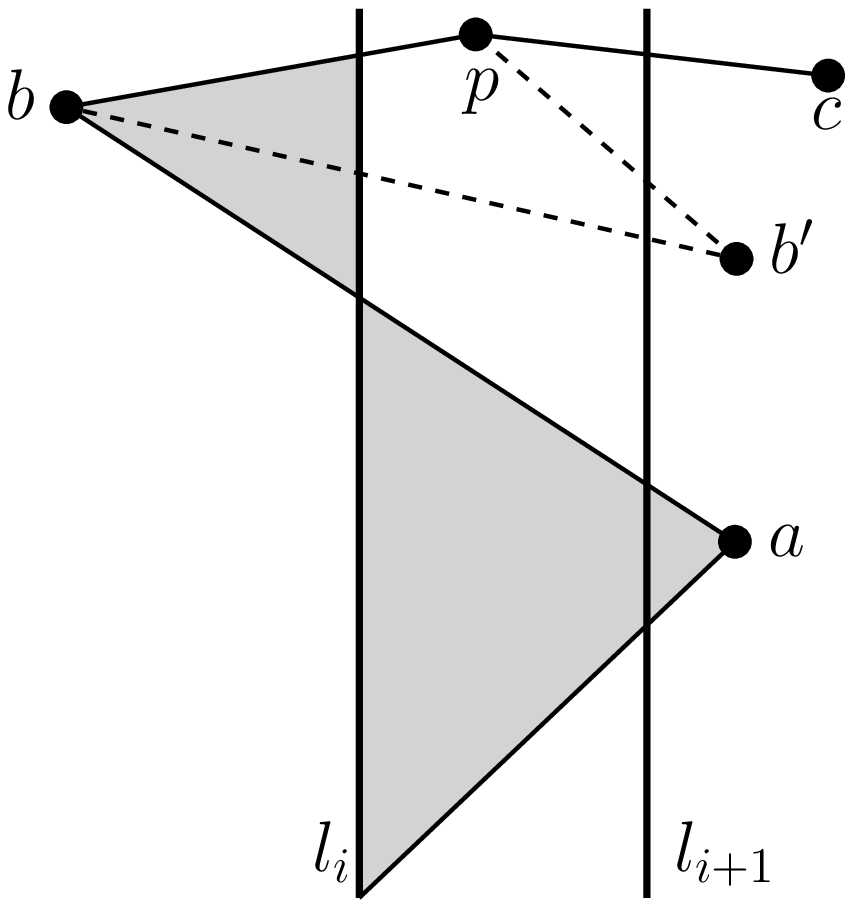}
				\caption{Substitution $(a,b,p)\rightarrow(a,b,b^{\prime},p,c)$.}
				\label{c-tri:sections:t-paths:figs:15}
			\end{center}
		\end{minipage}
	\end{center}
%	}
\end{figure}

Observe that as particular cases we could have $b^{\prime} = c^{\prime} = a = d$, which would result in the substitution $(a,b,p,c,d)\rightarrow (a)$, or we could have $b^{\prime} = c^{\prime} = a$, $a\neq d$, which would result in $(a,b,p,c,d)\rightarrow(a,c,d)$. Since there are many cases, we would have to exhaust all choices for $b^{\prime},c^{\prime}$, however, they all occur inside the same region.

If $p\in \Conv(\setp)$, then $p$ could be the very first vertex of $\pi$, or the very last, or the second, or second-to-the-last. Let us consider when $p$ is the last, it is symmetric to the case when $p$ is the first. Let the last three vertices of $\pi$ be $a,b,p$ in that order, so $b\in \Conv(\setp)$ as well, and $bp$ is intersected by $l_{i}$. We are looking in general for the substitution $(a,b,p)\rightarrow (a,b,b^{\prime},p, c)$, where $c\in \Conv(\setp)$ is the other neighbor of $p$ on $\Conv(\setp)$. Observe that $pc$ is intersected by $l_{i+1}$, see Figure~\ref{c-tri:sections:t-paths:figs:15}. We could for example have $b^{\prime} = c$ or $b^{\prime} = a$ as particular cases, among others.

\item Now assume $p$ does not appear as a vertex of $\pi$. Then $p$ cannot be a vertex of $\Conv(\setp)$ either, as otherwise one of the edges of $\Conv(\setp)$ having $p$ as a vertex would intersect $l_{i}$, and thus $p$ would necessarily appear in $\pi$ by definition. Thus $\pi$ must look locally as in Figure~\ref{c-tri:sections:t-paths:figs:16}, that is, the point $p$ must be contained inside the triangle $\triangle abd$, where $a,b,d$ are consecutive on $\pi$, point $b$ lies on one side of $l_{i}$, and $a,d$ on the other side.  Thus observe that the adjacency $bp$ is forced in any triangulation containing $\pi$, since $p$ is the only point of $\setp$ contained in the vertical slab between $l_{i}$ and $l_{i+1}$. The reader will be able to verify that this case is a particular case of (\oldstylenums{1}) in which $b = c$, and we could have, for example, the substitutions $(a,b=c,d)\rightarrow(a,p,d)$, or $(a,b=c,d)\rightarrow(a,b,b^{\prime},p,c^{\prime},b,d)$, among others, see Figure~\ref{c-tri:sections:t-paths:figs:15}.

\begin{figure}[!htb]
	\begin{center}
%	\fbox{
%		\begin{minipage}[b][4.8cm][t]{14cm}
			\begin{center}
				\includegraphics[height=4cm]{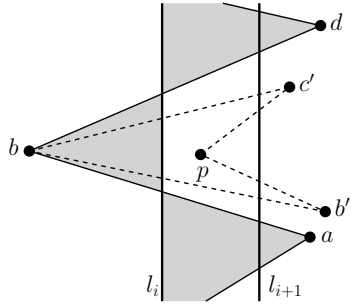}
				\caption{Substitution $(a,b,d)\rightarrow(a,b,b^{\prime},p,c^{\prime},b,d)$.}
				\label{c-tri:sections:t-paths:figs:16}
			\end{center}
%		\end{minipage}
%	}
	\end{center}
\end{figure}

\end{enumerate}

Note that the substitutions can be done in reverse order, that is, imagine that we go back in time, from time $l = l_{i+1}$ to time $l = l_{i}$, so we would be sweeping the plane from right to left, and therefore the pattern $(b,b^{\prime},p,c^{\prime},c)$ of some $\pi^{\prime}\in\tcomp(l_{i+1}, \setp)$ could become pattern $(b,p,c)$ of some $\pi\in\tcomp(l_{i},\setp)$, upon proper relabeling of points, see Figures~\ref{c-tri:sections:t-paths:figs:14}, ~\ref{c-tri:sections:t-paths:figs:15} and~\ref{c-tri:sections:t-paths:figs:16}. So $\pi^{\prime}$ is obtained from $\pi$ in one direction, and $\pi$ is obtained from $\pi^{\prime}$ in the opposite direction, this relation will be denoted by $\pi\leftrightarrow\pi^{\prime}$. We have finally the following result:

\begin{lemma}\label{lemmas:local}
Given $\tcomp(l_{i}, \setp)$, every T-path of $\tcomp(l_{i+1}, \setp)$ is produced by the local changes just explained. Moreover, for each $\pi\in\tcomp(l_{i}, \setp)$, the cardinality of $\mu(\pi)$ is $O\left(n^{2}\right)$, and we can correctly compute $\lambda(\pi^{\prime})$ for each $\pi^{\prime}\in\tcomp(l_{i+1},\setp)$ in time $O(n^{2}\cdot t_{i})$.
\end{lemma}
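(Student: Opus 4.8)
The plan is to prove the three assertions of the lemma one after another, leaning on the case analysis carried out in cases~(\oldstylenums{1}) and~(\oldstylenums{2}) just above.

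\emph{Completeness.} Fix $\pi'\in\tcomp(l_{i+1},\setp)$ and choose a triangulation $T$ of $\setp$ with $\tp{l_{i+1}}{T}=\pi'$; set $\pi:=\tp{l_i}{T}\in\tcomp(l_i,\setp)$, so that $\pi$ and $\pi'$ belong to the common triangulation $T$ and are in particular non-crossing. First I would sweep a vertical line $l$ continuously from $l_i$ to $l_{i+1}$: by Lemma~\ref{c-tri:sections:t-paths:lemmas:4} the path $\tp{l}{T}$ does not change while $l$ avoids the points of $\setp$, and $p=p_{i+1}$ is the \emph{only} point of $\setp$ strictly between $l_i$ and $l_{i+1}$. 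Hence $\tp{l}{T}=\pi$ while $l$ is strictly to the left of $p$ and $\tp{l}{T}=\pi'$ while $l$ is strictly to the right of $p$, so the entire difference between $\pi$ and $\pi'$ is produced at the single instant at which $l$ sweeps over $p$. At that instant, exactly as discussed in cases~(\oldstylenums{1}) and~(\oldstylenums{2}), only a bounded neighbourhood of $p$ can change: either $p$ has degree two in $\pi$ and sits on a pattern $(a,b,p,c,d)$ (or on one of its hull variants such as $(a,b,p)$), or $p$ is not a vertex of $\pi$, in which case, by Lemma~\ref{c-tri:sections:t-paths:lemmas:5}, $p$ lies inside a single wedge $\triangle abd$ of $\pi$, which is the degenerate pattern with $b=c$. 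In every case $\pi'$ coincides with $\pi$ away from the pattern and, on the pattern, equals one of the enumerated substitutions for an appropriate pair $b',c'\in\setp$; thus $\pi'$ is produced from $\pi$ by a local change. Conversely, any chain obtained from some $\pi\in\tcomp(l_i,\setp)$ by a local change that still satisfies Definition~\ref{c-tri:def:t-paths} with respect to $l_{i+1}$ is a non-crossing chain of edges that can be extended to a triangulation $T'$ of $\setp$ (triangulate each of its wedges and the two regions it separates); by Definition~\ref{c-tri:def:t-paths} together with the uniqueness in Theorem~\ref{c-tri:theorems:t-pathsOswin} this chain is then exactly $\tp{l_{i+1}}{T'}$, so it lies in $\tcomp(l_{i+1},\setp)$. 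Hence the valid chains produced by the local changes are precisely the elements of $\tcomp(l_{i+1},\setp)$.

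\emph{The bound $|\mu(\pi)|=O(n^2)$.} Recall $\mu(\pi)\subseteq\tcomp(l_{i+1},\setp)$ is the set of T-paths compatible with $\pi$. Repeating the sweep argument above with the roles of $l_i$ and $l_{i+1}$ interchanged (now sweeping from right to left) shows that every $\pi'\in\mu(\pi)$ is obtained from $\pi$ by one of the local changes. A local change of $\pi$ is entirely determined by the choice of the at most two new vertices $b',c'\in\setp$ that are inserted into the pattern around $p$ — the remaining vertices of the pattern, and whether we are in the interior case or in a hull case, being fixed by $\pi$ and $p$. Therefore $\pi$ admits at most $O(n^2)$ local changes, each yielding at most one chain, and so $|\mu(\pi)|=O(n^2)$; this is also the bound on the number of local changes that have to be tried for $\pi$.

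\emph{Computing the sets $\lambda(\pi')$ in time $O(n^2\cdot t_i)$.} I would iterate over the $t_i$ paths $\pi\in\tcomp(l_i,\setp)$; for each $\pi$, over all $O(n^2)$ choices of $(b',c')$; and for each choice, test in $O(1)$ whether the induced substitution yields a chain $\pi'$ satisfying Definition~\ref{c-tri:def:t-paths} with respect to $l_{i+1}$. If it does, I locate $\pi'$ in a dictionary of the T-paths already discovered for $l_{i+1}$ (inserting a new entry if necessary) and append $\pi$ to $\lambda(\pi')$. Using a dictionary whose keys are hashes of the paths, maintained incrementally, the key of $\pi'$ is obtained from that of $\pi$ in $O(1)$ because the two paths differ in only $O(1)$ positions; hence each of the $O(n^2t_i)$ iterations costs $O(1)$ and the whole step runs in time $O(n^2\cdot t_i)$. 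Correctness is exactly the two directions above: every $\pi\in\lambda(\pi')$ produces $\pi'$ by some valid substitution (the reverse sweep), and, conversely, one checks by a short geometric argument that a substitution producing a valid T-path with respect to $l_{i+1}$ never introduces a crossing with $\pi$ — the old and the new chain agree outside the vertical slab between $l_i$ and $l_{i+1}$, and inside the slab each of them stays within the neighbourhood of $p$ that is vacated, respectively occupied, during the sweep — so the pairs recorded are precisely the compatible ones.

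\emph{Main obstacle.} The crux is the locality statement underlying both the completeness part and the bound on $|\mu(\pi)|$: one must show that sweeping past $p$ cannot alter the T-path anywhere outside the bounded pattern around $p$, equivalently that two non-crossing T-paths with respect to $l_i$ and $l_{i+1}$ agree on every edge not belonging to that pattern. Making this precise requires combining the emptiness of the wedges (Definition~\ref{c-tri:def:t-paths}(\oldstylenums{3})), the fact that $p$ is the only point of $\setp$ in the slab, and the uniqueness of T-paths (Theorem~\ref{c-tri:theorems:t-pathsOswin}), together with a finite but somewhat tedious check of the hull and degenerate variants of the pattern $(a,b,p,c,d)$.
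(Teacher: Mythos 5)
Your completeness argument is acceptable at the paper's level of rigor (the paper itself proves that part differently: starting from $\pi^{\prime}$ it \emph{constructs} a mate $\pi\in\tcomp(l_{i},\setp)$ by inserting the wedge at $p$ and repeatedly refining any non-empty triangle, rather than relying on the locality-of-change claim you yourself flag as the main obstacle). The genuine gap is in your second and third parts. The sets $\mu(\pi)$ and $\lambda(\pi^{\prime})$ are defined through \emph{compatibility} (non-crossing), so what has to be shown is that \emph{every} non-crossing pair $\pi\in\tcomp(l_{i},\setp)$, $\pi^{\prime}\in\tcomp(l_{i+1},\setp)$ satisfies $\pi\leftrightarrow\pi^{\prime}$. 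Your justification --- ``repeating the sweep argument with the roles of $l_{i}$ and $l_{i+1}$ interchanged'' --- does not deliver this: that sweep starts from one triangulation $T$ and only relates $\tp{l_{i}}{T}$ to the single path $\tp{l_{i+1}}{T}$; for an arbitrary $\pi^{\prime}$ that merely does not cross $\pi$ you have exhibited no common triangulation having $\pi$ and $\pi^{\prime}$ as its two T-paths, so the ``reverse sweep'' simply does not apply to it. This bridge is exactly what the paper supplies: extend the non-crossing union $\pi\cup\pi^{\prime}$ to a triangulation $T$, note that by uniqueness (Theorem~\ref{c-tri:theorems:t-pathsOswin}) $\pi=\tp{l_{i}}{T}$ and $\pi^{\prime}=\tp{l_{i+1}}{T}$, then take the radially extreme neighbours $b^{\prime},c^{\prime}$ of $p$ in $T$ to the left of $l_{i}$; the corresponding substitution produces a T-path $\pi^{\prime\prime}$ of $T$ w.r.t.\ $l_{i}$ with $\pi^{\prime\prime}\leftrightarrow\pi^{\prime}$, and uniqueness of the T-path of $T$ w.r.t.\ $l_{i}$ forces $\pi^{\prime\prime}=\pi$, i.e.\ $\pi\leftrightarrow\pi^{\prime}$. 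Without this step your computed $\lambda(\pi^{\prime})$ could a priori be a strict subset of the compatible predecessors, so the recurrence $|\T(\pi^{\prime})|=\sum_{\pi\in\lambda(\pi^{\prime})}|\T(\pi)|$ would not be justified, and the bound $|\mu(\pi)|=O(n^{2})$ is likewise not established, since your count only covers paths reachable by local changes.

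A secondary point: the $O(1)$ test per candidate pair $(b^{\prime},c^{\prime})$ is asserted, not justified; checking emptiness of the new wedges naively costs $\Theta(n)$ per pair, which would degrade the bound to $O(n^{3}\cdot t_{i})$. The paper obtains the constant-time test by first computing, for each $\pi$, the lists of points visible from $b$ and from $c$ with the edges of $\pi$ as obstacles (in $O(n^{2})$ time), and by preprocessing $\setp$ so that emptiness of a triangle with vertices in $\setp$ can be queried in constant time; you should include (or cite) this machinery for the stated running time to hold.
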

\begin{proof}
Let again $p = p_{i+1}\in\setp$. For the first part let $\pi^{\prime}\in\tcomp(l_{i+1}, \setp)$. We will prove that $\pi^{\prime}$ produces at least one T-path $\pi\in\tcomp(l_{i}, \setp)$. The result will then follow by the relation $\pi\leftrightarrow\pi^{\prime}$ explained before. For the second part we have to show that $|\mu(\pi)| = O\left(n^{2}\right)$ for each $\pi\in\tcomp(l_{i}, \setp)$, and that we are able to correctly compute $\lambda(\pi^{\prime})$ for each $\pi^{\prime}\in\tcomp(l_{i+1}, \setp)$ in time $O\left(n^{2}\cdot t_{i}\right)$. That is, we will prove that if $\pi\not\leftrightarrow\pi^{\prime}$ then both T-paths cross, and thus $\pi\not\in\lambda(\pi^{\prime})$. For both parts we have two cases depending on whether $p$ is a vertex of $\pi^{\prime}$ or not, but for simplicity we will only consider the case when $p$ is not a vertex of $\pi^{\prime}$, the other case in both parts follows using similar arguments.

Let $W$ be the empty wedge of $\pi^{\prime}$ that cannot be extended to an empty wedge $W^{\prime}$ of $\pi$ due to $p$. Thus $p$ lies inside the triangle $\triangle abd$, where $a,b,d$ are consecutive vertices, see Figure~\ref{c-tri:figs:12:a}. Let $ap$, $pd$ be two new adjacencies. Observe that $a,d$ lie to the left of $l_{i}$, and $p,b$ lies to the right. If the substitution $(a,b,d)\rightarrow(a,p,d)$ results in a T-path of $\tcomp(l_{i}, \setp)$, we are done, if not, then the triangle $\triangle bap$, or the triangle $\triangle pdb$ is not empty, probably even both. Let us assume without loss of generality that the former is the one that is not empty, and that this is the only one. If both triangles contain points of $\setp$ we can proceed in the same way on both of them. Call this non-empty triangle $\triangle^{\prime}$, and observe that there is at least one point $c^{\prime}\in \setp$ contained in $\triangle^{\prime}$. Choose it and create the adjacencies $bc^{\prime}, c^{\prime}p$. Now do the substitution $(a,b,d)\rightarrow (a,b,c^{\prime},p,d)$, and again test if the new path is an element of $\tcomp(l_{i},\setp)$. If yes, we are done, if not, set $\triangle^{\prime} = bc^{\prime}p$, and thus, there must be again some point of $\setp$ inside $\triangle^{\prime}$. Choose one of those points, label it with $c^{\prime}$, and repeat. Observe that every new point we take lies to the left of $l_{i}$. Since $\setp$ is finite, we will eventually arrive at $\triangle^{\prime}$ being empty, and at that point, we would have created an element of $\tcomp(l_{i},\setp)$, see Figure~\ref{c-tri:figs:12:b}.

\begin{figure}[!htb]
	\begin{center}
%	\fbox{
		\begin{minipage}[b][5.7cm][t]{7cm}
			\begin{center}
				\includegraphics[height=4cm]{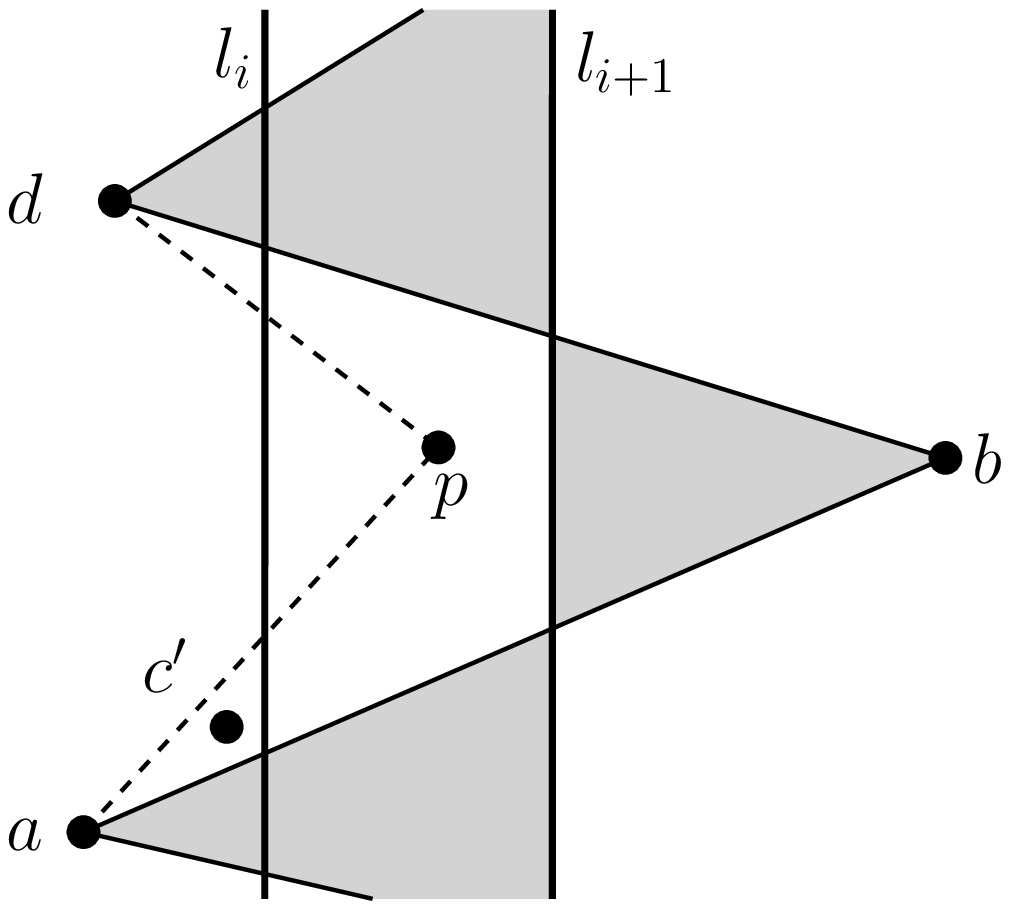}
				\caption{T-path $\pi^{\prime}$ is shown in solid.}
				\label{c-tri:figs:12:a}
			\end{center}
		\end{minipage}
%	}
	\quad
%	\fbox{
		\begin{minipage}[b][5.7cm][t]{7cm}
			\begin{center}
				\includegraphics[height=4cm]{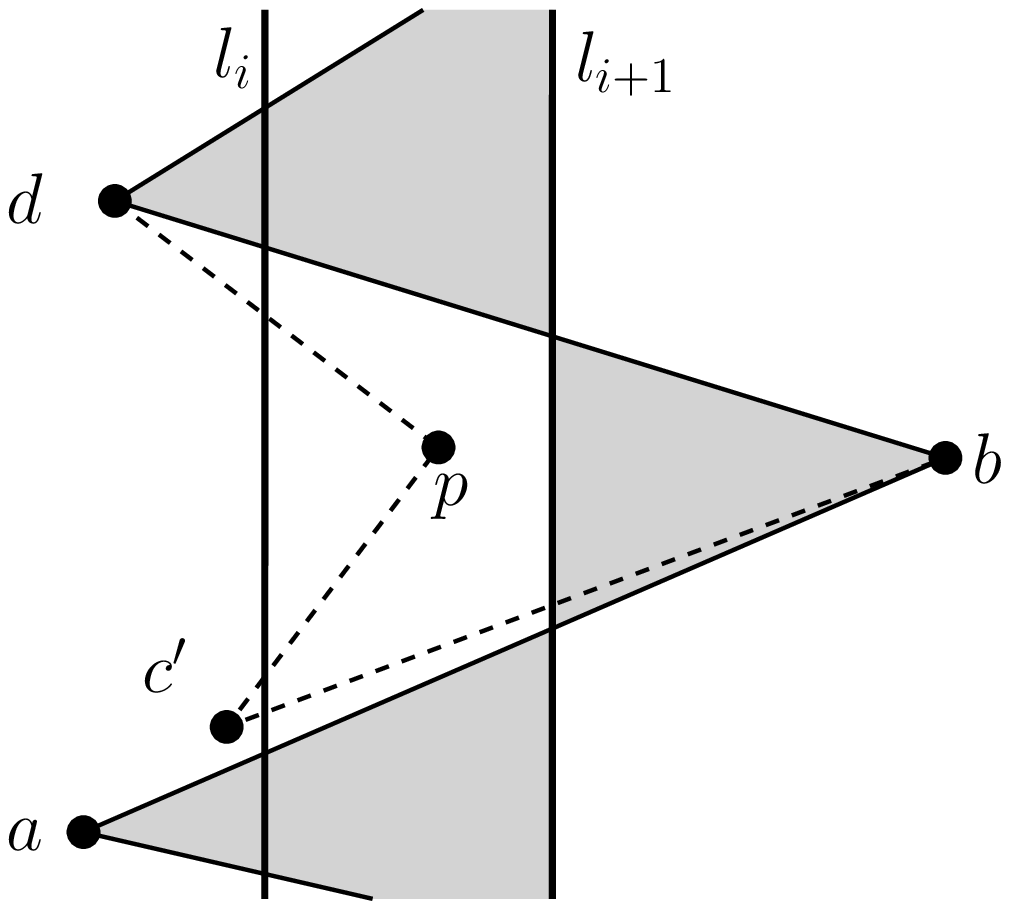}
				\caption{T-path $\pi^{\prime}$ being extended to a T-path $\pi\in\tcomp(l_{i}, \setp)$.}
				\label{c-tri:figs:12:b}
			\end{center}
		\end{minipage}
	\end{center}
%	}
\end{figure}

For the second part, by the way the local changes are made, it is clear that from a T-path $\pi\in\tcomp(l_{i}, \setp)$ we cannot obtain more than $O(n^{2})$ T-paths of $\tcomp(l_{i+1}, \setp)$, since when trying local changes of $\pi$ around $p$, at most every pair of points of $\setp$ will be tested, and thus every such a pair can produce at most one T-path of $\tcomp(l_{i+1}, \setp)$. We now have to prove that if $\pi\not\leftrightarrow\pi^{\prime}$ then $\pi$ and $\pi^{\prime}$ cross. Remember that we are still assuming that $p$ is not a vertex of $\pi^{\prime}$, thus $p$ is still inside triangle $\triangle abd$, where $a,b,d$ are three consecutive vertices of $\pi^{\prime}$. Let us assume for the sake of contradiction that $\pi\not\leftrightarrow\pi^{\prime}$, but $\pi\in\lambda(\pi^{\prime})$, \emph{i.e.}, those two paths are non-crossing. Since $\pi\in\lambda(\pi^{\prime})$ there must be at least one triangulation of $\setp$ containing both T-paths. Let $T$ be one of those triangulations, and observe that in $T$, vertex $p$ must have at least two adjacencies to the left of $l_{i}$, since the degree of $p$ in $\pi^{\prime}$ is zero. Among all these adjacencies keep just the first and the last in the radial order around $p$ in clockwise order. Let $b^{\prime},c^{\prime}$ be those two neighbors of $p$ respectively, see Figure~\ref{c-tri:figs:13}. Clearly $b^{\prime}$ and $c^{\prime}$ must be adjacent to $b$, but then the substitution $(a,b,d)\rightarrow(a,b,b^{\prime},p,c^{\prime},b,d)$ creates a T-path $\pi^{\prime\prime}\in\tcomp(l_{i},\setp)$, that is, $\pi^{\prime\prime}\leftrightarrow\pi^{\prime}$, and thus we have that $\pi\neq\pi^{\prime\prime}$ since $\pi\not\leftrightarrow\pi^{\prime}$. But $\pi^{\prime\prime}$ is also a T-path of $T$ w.r.t.~$l_{i}$, which is a contradiction since the T-path of a triangulation w.r.t.~a given line is \emph{unique}, hence such $\pi^{\prime}$ cannot exist.
 
\begin{figure}[!htb]
	\begin{center}
%	\fbox{
		\begin{minipage}[b][5.7cm][t]{7cm}
			\begin{center}
				\includegraphics[height=4cm]{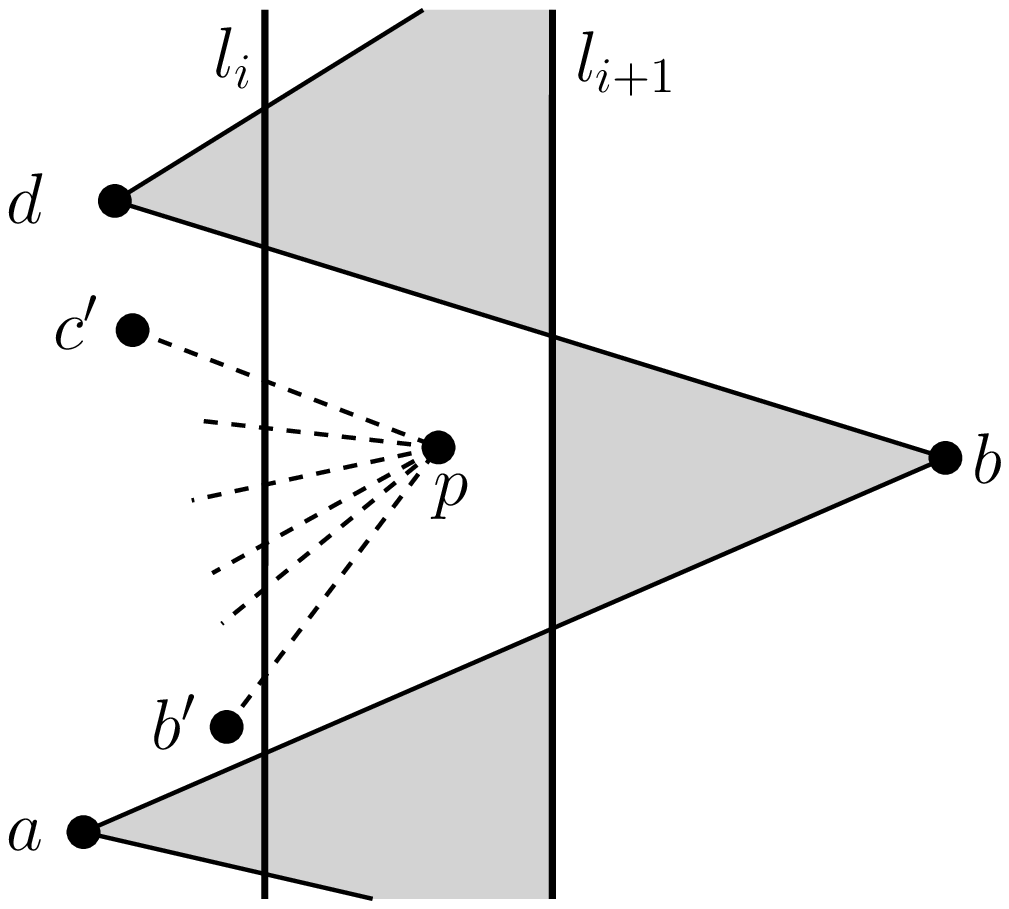}
				\caption{In any triangulation of $\setp$ containing $\pi^{\prime}$, vertex $p$ must have at least two adjacencies to the left of $l_{i}$.}
				\label{c-tri:figs:13}
			\end{center}
		\end{minipage}
%	}
	\quad
%	\fbox{
		\begin{minipage}[b][5.7cm][t]{7cm}
			\begin{center}
				\includegraphics[height=4cm]{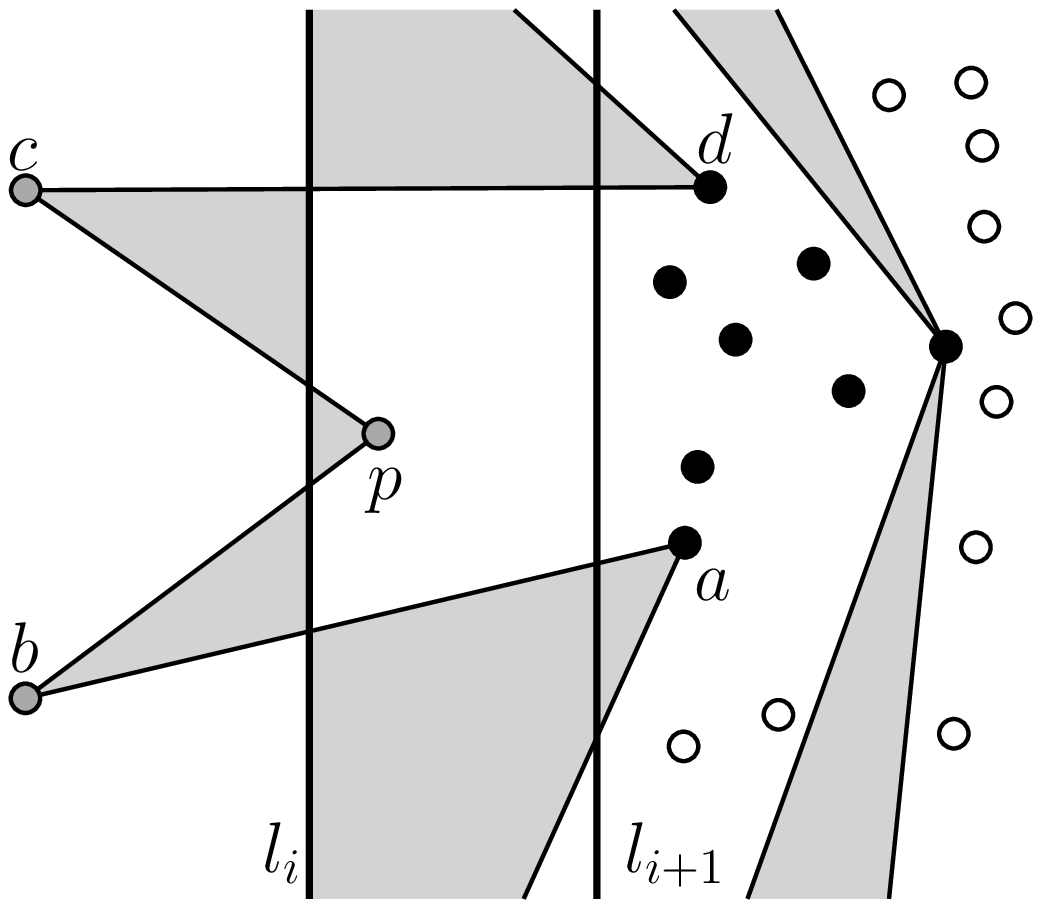}
				\caption{All possibilities for $b^{\prime},c^{\prime}$ are shown as black points. The white points are visible from neither $b$ nor $c$.}
				\label{c-tri:sections:t-paths:figs:17}
			\end{center}
		\end{minipage}
	\end{center}
%	}
\end{figure}

It remains to prove that $\lambda(\pi^{\prime})$ can be computed in time $O(n^{2}\cdot t_{i})$ for each $\pi^{\prime}\in\tcomp(l_{i+1}, \setp)$, where $t_{i} = |\tcomp(l_{i}, \setp)|$. From the discussion above we obtain that $\pi\in\lambda(\pi^{\prime})$ if and only if $\pi\leftrightarrow\pi^{\prime}$. The relation $\pi\leftrightarrow\pi^{\prime}$ is obtained by guessing pairs of points $b^{\prime},c^{\prime}$, and checking if the new adjacencies, attached to $\pi$, produce $\pi^{\prime}$. For example, let us assume we want to obtain the possible substitutions for the pattern $(a,b,p,c,d)$, with $p = p_{i+1}$, like in Figure~\ref{c-tri:sections:t-paths:figs:17}. We just have to look for $b^{\prime},c^{\prime}$ among all the points of $\setp$ that are visible from $b$ or $c$, having the edges of $\pi$ as obstacles, see Figure~\ref{c-tri:sections:t-paths:figs:17}. All these points can be obtained in $O\left(n^{2}\right)$ time, since the number of edges of $\pi$ is $O(n)$. Once we obtain this list of candidates, one list $B$ for $b$ and another list $C$ for $c$, we try every possible pair $b^{\prime}, c^{\prime}$ such that $b^{\prime}\in B$, and $c^{\prime}\in C$, for adjacencies that would create $\pi^{\prime}$, for example, we could try adjacencies $bb^{\prime}, b^{\prime}p, pc^{\prime}, c^{\prime}c$ to obtain the substitution $(a,b,p,c,d)\rightarrow (a,b,b^{\prime},p,c^{\prime},c,d)$, but if $c^{\prime} = d$ occurs, then we would have to try substitution $(a,b,p,c,d)\rightarrow (a,b,b^{\prime},p,d)$, and so on depending on the particular configuration. If we pre-process $\setp$ in such a way that we can answer in constant time if a given triangle with vertices in $\setp$ is empty or not, we can also test the correctness of the adjacencies in constant time per pair $b^{\prime},c^{\prime}$. Thus we spend overall $O\left(n^{2}\right)$ time per path $\pi$ of $\tcomp(l_{i},\setp)$. If we have that $\pi\leftrightarrow\pi^{\prime}$, then we also have that $\pi\in\lambda(\pi^{\prime})$, and thus after $O\left(n^{2}\cdot t_{i}\right)$ we have constructed $\lambda(\pi^{\prime})$, for every $\pi^{\prime}\in\tcomp(l_{i+1}, \setp)$, where $t_{i} = |\tcomp(l_{i},\setp)|$. This completes the proof.
\end{proof}

The above discussion implies the algorithmic part of Theorem~\ref{c-tri:theorems:our-t-paths}. The next subsection addresses the second part of the same theorem, \emph{i.e.}, a rough upper bound, depending only on $n$, for the running time of the algorithm just presented will be given.

\subsection{On the number of triangulation paths}\label{c-tri:sections:t-paths:sub-sections:2}

It is known that if $\setp$ is in convex position, then the largest number of T-paths that we can find w.r.t.~some line is $O(2^{n})$, see~\cite{DBLP:conf/compgeom/Aichholzer99}. However, there could be configurations for which this number is much larger. In~\cite{DBLP:journals/comgeo/DumitrescuGPW01} a set $\setp$ is shown for which we can find $\Omega(4^{n - \Theta(\log(n))})$ T-paths w.r.t.~to some line. This number is essentially $4^{n}$, thus we can see that the number of T-paths that one needs to consider is also large. Up to now there have been no results about the largest number of T-paths, over all sets of $n$ points on the plane, and over all possible lines we can define T-paths on. The main result presented here is the following:

\begin{theorem}\label{c-tri:theorems:num-t-paths}
The largest number of T-paths, w.r.t.~a line, of a set of $n$ points $\setp$ on the plane is at most $O(9^{n})$.
\end{theorem}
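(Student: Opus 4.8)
The plan is to fix one separating line $l$ and bound the number of T-paths of $\setp$ with respect to $l$ by $9^{n}$; since the quantity in the theorem is a maximum over such lines, this suffices. Assume $l$ is vertical and let $\setp_{L},\setp_{R}$ be the subsets of $\setp$ strictly to the left, resp.\ right, of $l$, with $|\setp_{L}|=a$, $|\setp_{R}|=b$, $a+b=n$. The idea is to show that a T-path $\tp{l}{T}$ is uniquely recoverable from a \emph{labeling} of the points of $\setp$ over a fixed alphabet $\Sigma$ of size at most $9$, so that the number of T-paths is at most $|\Sigma|^{n}=O(9^{n})$.

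First I would record the combinatorial shape of a T-path. By Definition~\ref{c-tri:def:t-paths} and Theorem~\ref{c-tri:theorems:t-pathsOswin}, $\tp{l}{T}$ is a simple edge chain $v_{0}v_{1}\cdots v_{k}$ whose consecutive vertices lie on opposite sides of $l$ (each edge crosses $l$), whose extreme edges $v_{0}v_{1}$, $v_{k-1}v_{k}$ are the two edges of $\Conv(\setp)$ crossed by $l$, and whose empty wedges have pairwise disjoint $l$-intervals; the last fact (as used in the proof of Lemma~\ref{c-tri:sections:t-paths:lemmas:1}) forces the crossing points of the edges with $l$ to be linearly ordered along $l$. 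In particular the vertices alternate sides, so $\tp{l}{T}$ is exactly the interleaving of the ordered tuple of its left vertices $(v_{0},v_{2},\dots)$ with the ordered tuple of its right vertices $(v_{1},v_{3},\dots)$, and conversely these two ordered tuples determine $\tp{l}{T}$. Thus it is enough to bound, for one side with $m$ points, the number of ordered tuples arising as the left (equivalently right) tuple of some T-path w.r.t.\ $l$: if that number is at most $9^{m}$, then the number of T-paths is at most $9^{a}\cdot 9^{b}=9^{n}$.

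The crux is then a \emph{rigidity} statement: the ordered left tuple of a T-path is determined by its underlying point set together with an $O(1)$-size label at each of its vertices. I expect the empty-wedge condition of Definition~\ref{c-tri:def:t-paths}(\oldstylenums{3}) to force, for each left vertex $v_{2j}$, a constant number of possible ``types'' (depending on, e.g., whether $v_{2j}$ lies above or below the chord of $l$ it spans, and whether it is one of the two extreme vertices incident to $\Conv(\setp)$) such that the chain visits the vertices of a given type in a geometrically forced order --- say, by increasing height along $l$ --- whence the tuple is recoverable from the set of (vertex, type) pairs. Labeling every point of $\setp_{L}$ by ``unused'' or by its type when used, and likewise for $\setp_{R}$, the admissible labels number at most $9$, and the resulting labeling of $\setp$ is injective on T-paths by rigidity together with the interleaving description above; this gives the bound. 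The constant $9$ is deliberately wasteful --- the true number of types is smaller and the chains are far more constrained than ``one free bit per vertex'' --- which is consistent with the authors' expectation that the right order of magnitude is closer to $4^{n}$: the slack lies entirely in the crude per-vertex count, not in the reduction.

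The main obstacle is proving the rigidity statement, i.e.\ that the combinatorial type data really pins down the order within each tuple. This is exactly where the emptiness and non-crossing conditions must be used in full, in the spirit of the proofs of Lemmas~\ref{c-tri:sections:t-paths:lemmas:1}--\ref{c-tri:sections:t-paths:lemmas:3}: if two T-paths had the same labeled point set but distinct left tuples, one would take the first index at which the tuples differ and derive a contradiction --- either the two chains properly cross, which is impossible for edges of triangulations, or one of the empty wedges of one chain is forced to contain a point that the other chain places on its boundary, contradicting Definition~\ref{c-tri:def:t-paths}(\oldstylenums{3}). Nailing down the finite list of vertex types, and checking that the few boundary cases (the vertices $v_{0},v_{k}$, and left vertices incident to a hull edge) fit into the size-$\le 9$ alphabet, are the remaining routine ingredients.
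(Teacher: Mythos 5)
Your reduction step is sound and matches the paper's setup: since every edge of a T-path crosses $l$, the vertices alternate sides, the crossings are linearly ordered along $l$, and the path is determined by the ordered sequence of its left vertices together with the ordered sequence of its right vertices; bounding each side separately and multiplying is exactly how the paper gets $9^{a}\cdot 9^{b}=9^{n}$. But the heart of your argument --- the ``rigidity statement'' that a constant-size type per point pins down the ordered left tuple --- is not proved, and as stated it cannot work. A T-path is not a simple chain: the same point of $\setp$ can occur several times as a vertex of the path (the paper's own example, Figure~\ref{c-tri:sections:t-paths:figs:18}, has left sequence $1,3,5,7,5,3$, with $3$ and $5$ each appearing twice, and nothing bounds the multiplicity by a constant). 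Consequently a labeling of each point by ``unused'' or by one of $O(1)$ types cannot even record the multiset of visits, let alone recover the order of an arbitrary sequence over the used points; and the claim that within each type the visiting order is ``geometrically forced'' is precisely the entire difficulty, for which you offer only an expectation and a sketch of a crossing/emptiness contradiction. Note also that your constant $9$ is postulated to match the theorem rather than derived, which is symptomatic of the missing core.

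The paper closes exactly this gap by counting the left sequences directly, not by an injective labeling. For a fixed $k$-subset of the left points, all of which must appear, it sets up a recurrence for the number $f_{k}$ of admissible sequences (no two consecutive occurrences of the same point, crossings consistent with disjoint empty wedges) by splitting at the highest connection of the leftmost used point to $l$, which separates the problem into independent subproblems above and below that segment; solving the coupled recurrences for $f_{k}$ and $g_{k}$ with ordinary generating functions identifies the sequence (A064062) and yields $f_{k}\leq 8^{k}$ up to polynomial factors. Summing over all subsets of the $a$ left points gives $\sum_{k}\binom{a}{k}8^{k}=9^{a}$ per side, hence $9^{n}$ overall --- so the constant $9$ arises as $8+1$ from this convolution. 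If you want to salvage your plan, you would have to replace the per-point type labels by an argument that controls repeated visits and proves the order-rigidity; in effect you would be re-deriving a sequence count of the kind the paper obtains by recursion, so I recommend adopting the recursive/generating-function route for the crux while keeping your (correct) two-sided reduction.
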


Before the actual proof, let us first explain how we are going to count T-paths. Let $\setp$ be a set of $n$ points whose elements are labeled with the integers from $1$ to $n$, and let $\pi$ be a T-path of $\setp$ w.r.t.~some given line $l$. Without loss of generality assume that $\pi$ starts at the edge of $\Conv(\setp)$ with the lowest intersection with $l$, and thus it ends at the edge of $\Conv(\setp)$ with the highest intersection with $l$. Observe that given $l$, the starting and ending edges of any T-path w.r.t.~$l$ are always the same two edges of $\Conv(\setp)$. Without loss of generality we will assume that $\pi$ starts to the left of $l$, unless it is otherwise explicitly stated. If $\pi$ starts to the right of $l$ then we would have a symmetric conversation.  

Now orient the edges of $\pi$ as traversing it from the starting edge to the ending edge. The starting edge, by assumption, crosses $l$ from left to right, the second from right to left, the third from left to right again, and so on until we arrive at the ending edge. Observe that the edges of $\pi$ appear sorted bottom-up on $l$ as they intersect $l$, so the starting edge has the lowest intersection with $l$, the second edge has the second lowest intersection with $l$, and so on. Thus the starting vertex of $\pi$ and the edges of $\pi$ that cross $l$ from left to right are enough to characterize $\pi$. There is no other way one can complete adjacencies, since in-between two edges $e,e^{\prime\prime}$ crossing $l$ from left to right, there must be an edge $e^{\prime}$ crossing from right to left and interconnecting $e$ and $e^{\prime\prime}$, and vice-versa, see Figure~\ref{c-tri:sections:t-paths:figs:18}. The starting vertex of $\pi$ tells us if the starting edge crosses $l$ from left to right or from right to left. Now let $e = p_{i}p_{j}$ be an edge of $\pi$ that crosses $l$ from left to right. Let us mark the intersection of $e$ and $l$ with the pair $(i,j)$. Doing this for every edge of $\pi$ that crosses from left to right we obtain a sequence $N$ of pairs of integers on $l$, which along with the first vertex of $\pi$ can be considered as the ``signature'' of $\pi$, since we know at each of those intersection points which edge of $\pi$ crosses $l$, and in which direction. There is the particular case when $\pi$ also ends to the left of $\pi$, and thus its last edge crosses $l$ from right to left, and under our labeling scheme, the last vertex of $\pi$ might not appear in any pair of integers on $l$, however, given $l$, the last edge of $\pi$ is fixed, thus there is no confusion as how to complete $\pi$ see Figure~\ref{c-tri:sections:t-paths:figs:18}. Now, observe that the sequence $N$ of pairs of integers along $l$ can be partitioned into the sequence $N^{-}$ of vertices of $\pi$ lying to the left of $l$, and the sequence $N^{+}$ of vertices of $\pi$ lying to the right. Both sequences $N^{-}$ and $N^{+}$ can be seen as sequences of integers that are sorted w.r.t.~the order they appear on $l$ bottom-up. The way we are going to upper-bound the number of T-paths of $\setp$ w.r.t.~$l$ is by upper-bounding the number of different sequences that represent $N^{-}$. The same bound will obviously hold for the number of different sequences that represent $N^{+}$.  The final bound will come out essentially from combining the two bounds obtained.

\begin{figure}[!ht]
	\begin{center}
		\includegraphics[height=4cm]{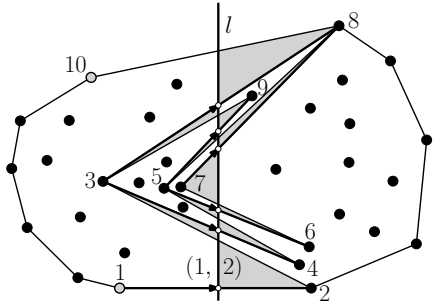}
		\caption{A T-path $\pi$. The first and last vertices are shown in gray. The edges of $\pi$ crossing $l$ from left to right are shown with arrows, and the intersection point is shown as a white dot. The integer sequence $N^{-}$ for $\pi$ is $1,3,5,7,5,3$.}
		\label{c-tri:sections:t-paths:figs:18}
	\end{center}
\end{figure}

\begin{proof}[Proof of Theorem~\ref{c-tri:theorems:num-t-paths}]
	To create a sequence of integers that represent $N^{-}$ we just need the elements of $\setp$ that lie to the left of $l$. Let us denote this subset of points by $\setp^{-}\subset\setp$. Let $\setp_{k}\subseteq\setp^{-}$ be a subset of $\setp^{-}$ of $k$ elements. Imagine that the sequence $N^{-}$ will be obtained using only elements of $\setp_{k}$, but \emph{every} element of $\setp_{k}$ must appear in $N^{-}$ at least once. Let us assume without loss of generality that $1$ is the leftmost point of $\setp_{k}$. Since $1$ must appear in $N^{-}$, it means that there must be at least one straight-line segment $s$ that connects $1$ with $l$, this segment can be thought of as the left part of an edge of a T-path where $1$ appears. Moreover, assume that $s$ is the segment that creates the last entry of $1$ in $N^{-}$, that is, point $1$ is not connected to $l$ at a higher point than the one that $s$ defines. Thus $s$ divides the problem into two sub-problems, since we want to keep everything non-crossing. Let $\setp^{-}_{k}$ be the set of points of $\setp_{k}$ above segment $s$, and let $\setp^{ +}_{k}$ be the set of points of $\setp_{k}$ below $s$ but also including $1$. There are $k$ possibilities for $\setp^{-}_{k}$, since we can rotate $s$ around $1$ clockwise to make the cardinality of $\setp^{-}_{k}$ vary from $0$ to $k-1$, and thus the cardinality of $\setp^{+}_{k}$ varies from $k$ to $1$. Since we are assuming that $s$ is the segment that connects point $1$ for the last time to $l$, then point $1$ does not form part of the sub-problem defined by $\setp^{-}_{k}$, thus this sub-problem is totally independent and we can recurse directly on it. However, point $1$ does play a role in the sub-problem defined by $\setp^{+}_{k}$. If $f(k) = f_{k}$ represents the total number of different possibilities for $N^{-}$ when $k$ points are involved, then we get the following recurrence for $f_{k}$: 
	\begin{align*}
		f_{k} &= g_{k} + \sum_{i = 1}^{k-1} f_{i}\cdot g_{k - i}
	\end{align*}
	where $g_{j}$ represents the sub-problem defined by $\setp^{+}_{k}$, for every $1\leq j\leq k$. Note that for $j = k$ we obtain that $\setp^{-}_{k}$ is empty, and thus $|\setp^{+}_{k}| = k$, which is represented by the term $g_{k}$ of $f_{k}$. Observe that in the case $j = k$, the sub-problem defined by $\setp^{+}_{k}$ is of the same size as the original problem, however, it has a slightly different structure, since in $\setp_{k}^{+}$ we know that point $1$ is already connected to $l$, so the immediate lower connection of $1$ to $l$, if any, cannot be consecutive: This would mean that there are two consecutive edges $e,e^{\prime\prime}$, of some T-path, crossing $l$ from left to right, and sharing vertex $1$ as endpoint, but between $e,e^{\prime\prime}$ there must be exactly one edge $e^{\prime}$ of the same T-path that crosses $l$ from right to left, see Figure~\ref{c-tri:sections:t-paths:figs:19}. If we assume that $e$ intersects $l$ below $e^{\prime\prime}$, then $e^{\prime}$ intersects $l$ in-between, and connects the right endpoint of $e$ with the left endpoint of $e^{\prime\prime}$, thus $e = e^{\prime}$, but in a T-path every edge is used exactly once, hence there cannot be two consecutive appearances of an integer in $N^{-}$. The summation term of $f_{k}$ accounts for the other $k-1$ possibilities for $\setp^{-}_{k}$ and $\setp^{+}_{k}$.
	
\begin{figure}[!ht]
	\begin{center}
		\includegraphics[height=4cm]{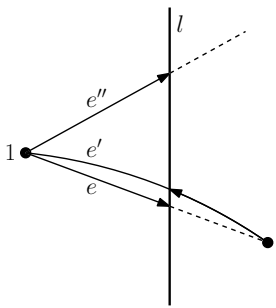}
		\caption{Edges $e,e^{\prime\prime}$ are consecutive edges, of a T-path, that cross $l$ from left to right and share vertex $1$.}
		\label{c-tri:sections:t-paths:figs:19}
	\end{center}
\end{figure}

	The recurrence for $g_{k}$ is very similar; once we enter sub-problem $\setp^{+}_{k}$ we just have to take care of not connecting point $1$ to $l$ consecutively, so we have that: 
	\begin{align*}
		g_{k} &= h_{k} + f_{k-1} + \sum_{i = 1}^{k-1} f_{i}\cdot g_{k-i}
	\end{align*}
	where the term $f_{k-1}$ means that point $1$ is not used in $\setp^{+}_{k}$. If on the other hand, point $1$ is used, then the summation will again account for all the possibilities. The term $h_{k}$ is technical, and its definition is: $h_{k} = 1\Leftrightarrow k = 1$, and $0$ otherwise. With it we can safely define our boundary condition $f_{0} = g_{0} = h_{0} = 0$, and we obtain $f_{1} = g_{1} = h_{1} = 1$, which makes the recursion safe.
	
We are now interested in the asymptotic behavior of $f$. We will obtain it by using ordinary generating functions. We will not explain every single step in detail since we will be using standard techniques. The interested reader is referred to~\cite{generatingfuncs,DBLP:books/daglib/0076724} for the common techniques to obtain generating functions from recurrences.

Introducing the ordinary generating functions $F(x) = \sum_{k=0}^{\infty} f_{k}\cdot x^{k}, G(x) = \sum_{k = 0}^{\infty} g_{k}\cdot x^{k}$, $H(x) = \sum_{k = 0}^{\infty} h_{k}\cdot x^{k} = x$, we obtain for $f_{k}, g_{k}$ the following: 
\begin{align*}
 F(x) &= G(x) + F(x)\cdot G(x)\\
 G(x) &= H(x) + x\cdot F(x) + F(x)\cdot G(x)
\end{align*}
\indent We can now solve this system of equations in unknowns $F(x), G(x)$ to obtain two possible solutions for $F(x)$:
\begin{align*}
	F_{1} &= F(x) = \frac{1 - \sqrt{1 - 8x}}{3 + \sqrt{1 - 8x}}\ \ \ \ {\color{black} \textup{and}}\ \ G_{1} = G(x) = \frac{1 - \sqrt{1 - 8x}}{4}\\
	F_{2} &= F(x) = \frac{-1 - \sqrt{1 - 8x}}{\sqrt{1 - 8x} - 3}\ \ {\color{black} \textup{and}}\ \ G_{2} = G(x) = \frac{1 + \sqrt{1 - 8x}}{4}\\
\end{align*}
\indent However, we know that $F(0)$ must be $0$, and this condition is only met by $F_{1}$, so $\frac{1 - \sqrt{1 - 8x}}{3 + \sqrt{1 - 8x}}$ is the generating function of our sequence $f$, \emph{i.e.}, the coefficients of the Taylor expansion of $F_{1}$ around $0$ are precisely the terms $f_{0} = 0, f_{1} = 1, f_{2} = 3, f_{3} = 13, f_{4} = 67, f_{5} = 381, f_{6} = 2307\ldots$, which turned out to be known as sequence A064062 of ``The On-Line Encyclopedia of Integer Sequences'', but with term $f_{0} = 1$, which makes no difference for the asymptotics of $f$, see~\cite{oeis}. The generating function of A064062 is $F_{A} = \frac{1}{1 - x C(2x)}$, where $C(y) = \frac{1 - \sqrt{1 - 4y}}{2y}$ is the generating function of the Catalan numbers, see~\cite{oeis} and references therein. It is now easy to verify that $F_{A} = F_{1} + 1$, since $F_{A}$ and $F_{1}$ differ only at $f_{0} = 1$.

It is known that the $i$-th term $f_{i}$ of $F_{A}$, for sufficiently large $i$, grows roughly as $\frac{8^{i}}{36i\sqrt{\pi\cdot i}} < 8^{i}$, see~\cite{oeis} and Theorem~3 of~\cite{DBLP:journals/jgaa/BonichonGH05}. 

Thus the number of different possibilities for $N^{-}$ that we can obtain from a set of cardinality $k$ is upper-bounded by $8^{k}$. It remains to consider every possible set $\setp_{k}\subseteq\setp^{-}$. If $|\setp^{-}| = a$, then the absolute number $t^{-}$ we are looking for is upper-bounded by $\sum_{i = 0}^{a}\binom{a}{i} 8^{i} = 9^{a}$. The same bound holds for the number $t^{+}$ of different sequences that represent $N^{+}$. If we partition the original set $\setp$ into $\setp^{-}$ of cardinality $a$, and $\setp^{+}$ of cardinality $b$, such that $a + b = n$, then the number of ways we can create T-paths of $\setp$ w.r.t.~$l$ that start to the left of $l$ is upper-bounded by $t^{-}\cdot t^{+} = 9^{a}\cdot 9^{b} = 9^{n}$. The same bound holds for T-paths that start to the right of $l$, thus obtaining $O(9^{n})$ overall possibilities. The theorem follows.
\end{proof}
This concludes the proof of Theorem~\ref{c-tri:theorems:our-t-paths}.

\section{Counting pseudo-triangulations}\label{c-tri:sections:pt-paths}

The main idea behind our algorithm for counting pseudo-triangulations is to mimic with PT-paths what we did with T-paths for counting triangulations. Thus, here we will have equivalent results to the ones we proved in~\S~\ref{c-tri:sections:t-paths}. We will first explain how a PT-path $\pt{l}{S}$ of a pseudo-triangulation $S$, with respect to line $l$, can be constructed, but in order to do so, we need to define some terms first.

Let $l$ be a separating line, and let $S$ be a pseudo-triangulation of $\setp$. Let us denote by $E_{l}$ the set of edges of $S$ that are intersected by $l$. Let $e\in E_{l}$ and denote by $\overline{e}$ and $\underline{e}$ the edges of $E_{l}$ right above and below $e$ respectively. We will say that $e\in E_{l}$ of $S$ is \emph{good}\footnote{Such an edge $e$ is called \emph{signpost} in~\cite{DBLP:conf/wads/AichholzerRSS03}.} w.r.t.~$l$ iff the intersections of the supporting line of $e$ with the supporting lines of $\overline{e}$ and $\underline{e}$ lie on different sides of $l$, or if $e$ is an edge of $\Conv(\setp)$.

Let us now explain how a PT-path $\pt{l}{S}$ of a pseudo-triangulation $S$, and with respect to line $l$, can be constructed. The following method was originally described in~\cite{DBLP:conf/wads/AichholzerRSS03}: Remove from $S$ all edges of $E_{l}$ that are not good. This leaves a plane graph $S^{*}$ of $\setp$. Let $e$ and $e^{\prime}$ be two consecutive good edges w.r.t.~$l$, and connect them using the common face $f$ of $S^{*}$ that they are part of according to the following rule: If the supporting lines of $e$ and $e^{\prime}$ intersect to the left of $l$, then we use the edges of $f$ that lie to the left. Otherwise we use the edges of $f$ that lie to the right of $l$, see Figure~\ref{c-tri:figs:3}.

\begin{figure}[!htb]
	\begin{center}	
		\includegraphics[height=4cm]{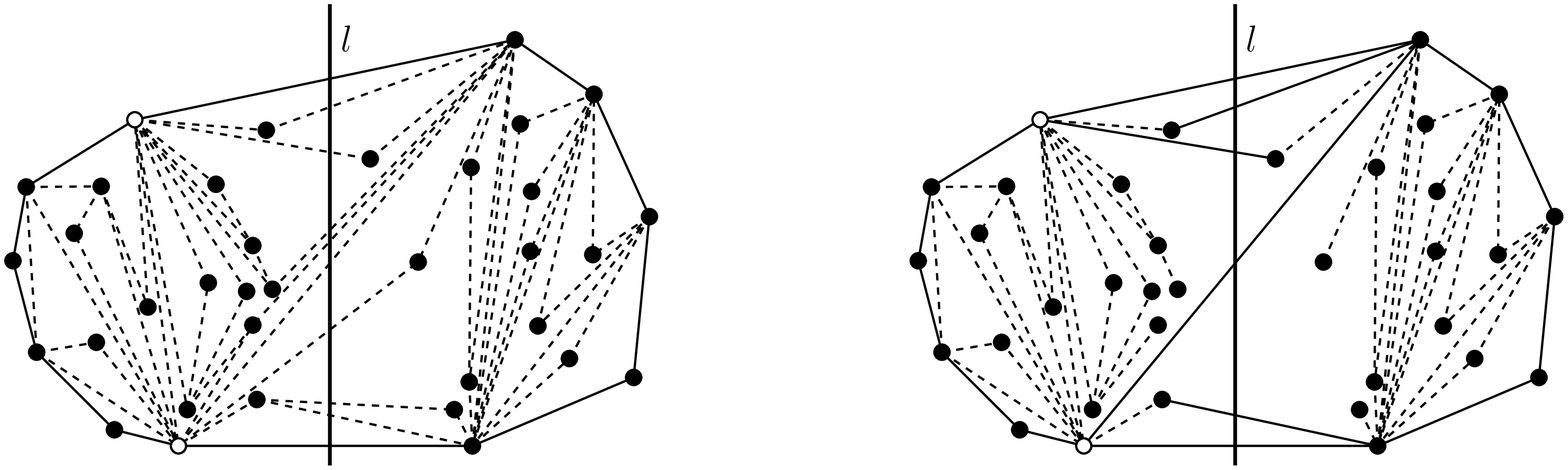}
	\end{center}
	\caption{To the left a pseudo-triangulation $S$. To the right we have the plane graph $S^{*}$ obtained from $S$ by removing all \emph{non-good} edges of $E_{l}$. Joining two consecutive good edges of $E_{l}$ by the rules described before results in the PT-path shown in Figure~\ref{c-tri:figs:1} on page~\pageref{c-tri:figs:1}.}
	\label{c-tri:figs:3}
\end{figure}

Observe that the polygonal chain of edges created by the method described above always exists. In~\cite{DBLP:conf/wads/AichholzerRSS03} it was proven that it fulfills the properties of a PT-path, see Definition~\ref{c-tri:def:pt-paths} on page~\pageref{c-tri:def:pt-paths}. Thus, by Theorem~\ref{c-tri:theorems:pt-pathsOswin}, also on page~\pageref{c-tri:theorems:pt-pathsOswin}, it follows that it is unique.

Let $\L = \{l_{1},\ldots l_{n-1}\}$ be again a set of vertical lines such that point $p_{i}\in\setp$ lies in the vertical slab between $l_{i-1}$ and $l_{i}$, with $2\leq i\leq n-1$. Point $p_{1}$, the leftmost, lies in the unbounded vertical slab to the left of $l_{1}$, and $p_{n}$, the rightmost, lies in the unbounded vertical slab to the right of $l_{n-1}$. For a pseudo-triangulation $S$ of $\setp$ let $\P(S) = \{\pt{l_i}{S}\ |\ l_{i}\in\L\}$. The following result is the equivalent of Theorem~\ref{theorems:1} on page~\pageref{theorems:1} for T-paths and triangulations:

\begin{theorem}\label{c-tri:theorems:pt-paths:2}
	Let $S$ be a pseudo-triangulation with vertex set $\setp$. Then $\P(S)$ is enough to characterize $S$.
\end{theorem}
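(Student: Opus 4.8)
The plan is to follow exactly the template used for Theorem~\ref{theorems:1}, namely: first identify a combinatorial substructure of a pseudo-triangulation $S$ that is rich enough to determine $S$ uniquely, and then show that this substructure is entirely recoverable from the collection $\P(S)$ of PT-paths over the lines in $\L$. For triangulations that substructure was the set of flippable edges; the natural analogue here is the set of \emph{good} edges of $S$ with respect to some vertical line --- equivalently, the edges $e$ intersected by a vertical line $l$ for which the apices (intersections of the supporting line of $e$ with those of $\overline e$ and $\underline e$) fall on opposite sides of $l$ --- together with all convex hull edges, which are shared by every pseudo-triangulation anyway. So the first step is a lemma in the spirit of Lemma~\ref{c-tri:sections:t-paths:lemmas:3}: \emph{the set of edges of $S$ that are good with respect to some vertical line determines $S$}. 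The second step is the analogue of Lemma~\ref{c-tri:sections:t-paths:lemmas:2}: \emph{if $e$ is good with respect to a vertical line $l$, then $e$ is an edge of the unique PT-path $\pt{l}{S}$}; this should follow essentially by construction, since the PT-path-building procedure recalled just before the statement keeps precisely the good edges of $E_l$ and only adds connecting edges of faces. The third step is a slab-invariance lemma mirroring Lemma~\ref{c-tri:sections:t-paths:lemmas:4}: \emph{if $l$ and $l'$ are vertical lines with no point of $\setp$ in the slab between them, then $\pt{l}{S} = \pt{l'}{S}$}; this again should be immediate from uniqueness (Theorem~\ref{c-tri:theorems:pt-pathsOswin}) together with the observation that the good edges of $E_l$ and the empty pseudo-triangles do not change as $l$ sweeps across an empty slab.

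Given those three lemmas, the proof of the theorem itself is the same three-line argument as for Theorem~\ref{theorems:1}. Suppose $S'$ is a pseudo-triangulation on $\setp$ with $\P(S) = \P(S')$ but $S \neq S'$. By the first step it suffices to exhibit a good edge of $S$ (good with respect to \emph{some} vertical line) that fails to lie on any PT-path in $\P(S)$, and derive a contradiction. Take any edge $e$ of $S$ which is good with respect to some vertical line $\ell$; one endpoint of $e$ lies strictly to the left of $\ell$ and the other strictly to the right, so $\ell$ sits either in one of the slabs determined by consecutive lines of $\L$, or to the left of $l_1$, or to the right of $l_{n-1}$. Each such slab contains exactly one point of $\setp$, so either the part of the slab left of $\ell$ or the part right of $\ell$ is free of points of $\setp$; by the slab-invariance lemma $\pt{\ell}{S}$ equals $\pt{l_i}{S}$ for the appropriate $l_i \in \L$, and by the second step $e \in \pt{\ell}{S} \in \P(S)$. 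Hence every good edge of $S$ appears in $\P(S)$, so $\P(S)$ determines the good-edge set of $S$, which by the first step determines $S$; this contradicts $S \neq S'$.

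The main obstacle is the first step --- establishing that the good-edge set characterizes a pseudo-triangulation. For triangulations, Lemma~\ref{c-tri:sections:t-paths:lemmas:3} exploited that a non-flippable crossing pair forces a configuration contradicting maximality of a triangulation. The pseudo-triangulation setting is more delicate: pointed pseudo-triangulations are \emph{not} maximal plane graphs (they are the minimum-edge pseudo-triangulations, by Theorem~\ref{c-tri:theorems:pointed-pt}), so the cardinality argument must be replaced. One should instead use the maximality characterization from Theorem~\ref{c-tri:theorems:pointed-pt}: the edge set of a pointed pseudo-triangulation is a maximal pointed plane edge set. If $S \neq S'$ share all their good edges, then the non-good edges of $S$ cannot all be non-crossing with $S'$, so some non-good edge $e$ of $S$ is crossed by a non-good edge $e'$ of $S'$ that lies in neither graph's common (good) part; one then picks $e'$ crossing $e$ so as to extremize an appropriate angle at an endpoint of $e$, as in Lemma~\ref{c-tri:sections:t-paths:lemmas:3}, and argues that the two pseudo-triangles of $S'$ bordering $e'$ force a "flip-like" local exchange at $e'$ contradicting either pointedness or the extremality choice. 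Making this angle/extremality argument go through cleanly in the pseudo-triangle setting --- where the faces bordering $e'$ are pseudo-triangles, not triangles, so the "third vertex" is replaced by a reflex chain --- is the technical heart of the proof and the part I expect to require the most care.
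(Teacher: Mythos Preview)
Your plan mirrors the triangulation proof faithfully, but the paper takes a different and strictly simpler route, and your Step~1 is a genuine gap.

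\medskip

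\textbf{What the paper does.} The paper does not prove an analogue of Lemma~\ref{c-tri:sections:t-paths:lemmas:3} at all. Instead it proves the \emph{stronger} statement that \emph{every} edge of $S$ (not merely the good ones) lies on some PT-path in $\P(S)$; from this, $S=\bigcup\P(S)$ and uniqueness is immediate. The key observation you are underusing is the one you yourself record in Step~2: a PT-path consists of the good edges of $E_l$ \emph{together with} the face edges connecting consecutive good edges. Those connecting edges are precisely what pick up the non-good edges of $S$. Concretely, for an interior edge $e$ with flip partner $e'$ in the pseudo-quadrilateral $\overline\triangle\cup\underline\triangle$: if $e$ and $e'$ cross, the vertical line through their crossing makes $e$ good, and sliding to the adjacent $l_i\in\L$ preserves this; if $e$ and $e'$ do not cross, then $e$ and $e'$ share an endpoint $p=p_i$, and the paper shows that the other edge at $p$ in one of the two pseudo-triangles is good w.r.t.\ $l_{i+1}$, forcing $e$ to appear as a connecting face-edge of $\pt{l_{i+1}}{S}$. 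Thus every edge of $S$ is captured, and Steps~1 and~3 of your outline become unnecessary.

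\medskip

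\textbf{Where your approach stalls.} Your Step~1 --- that the set of edges good w.r.t.\ some vertical line determines $S$ --- is exactly the case you would need the above analysis to avoid. In a pseudo-triangulation there \emph{are} interior edges that are not good w.r.t.\ any vertical line (the non-crossing-flip case just described), so the analogue of ``flippable edges determine $T$'' is not automatic. Your proposed extremal-angle argument has to cope with reflex chains in place of single opposite vertices, and with the fact that pointed pseudo-triangulations are edge-minimal rather than edge-maximal; you flag this yourself, but no actual argument is given. Even if the claim is true, proving it is at least as much work as the paper's direct argument, and it throws away the very feature of PT-paths (their connecting edges) that makes the direct argument short. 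I would abandon Step~1 and instead argue edge-by-edge, splitting on whether $e\cap e'\neq\emptyset$.
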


\begin{proof}
	We will prove something stronger, namely, we will prove that \emph{every} edge of a pseudo-triangulation $S$ is an edge of some PT-path in $\P(S)$, this clearly implies the theorem. Observe that to prove the stronger statement we just have to prove that for any given edge $e$ of $S$ there exists a line $l$ in $\L$ such that $e$ is good w.r.t.~$l$, or if there is no line of $\L$ that $e$ is good with respect to, then we have to show that there is a line $l$ of $\L$ such that $e$ is used to connect two consecutive good edges of $S$ w.r.t.~$l$, that is, $e$ is an edge of the common face of $S^{*}$ that those two consecutive good edges of $S$ w.r.t.~$l$ are part of. By a suitable rotation of the plane we will assume w.l.o.g.~that \emph{every} conceivable vertical line contains at most one point of $\setp$.
	
	Let $e$ be an edge of $S$. If $e$ is an edge of $\Conv(\setp)$ then there is clearly at least one line $l\in\L$ that intersects $e$, and thus it makes $e$ the very first or the very last edge of $\pt{l}{S}$. Now assume that $e$ lies strictly in the interior of $\Conv(\setp)$ and let $\overline{\triangle}, \underline{\triangle}$ be the two pseudo-triangles that $e$ is part of. By convention we will assume that a vertical line intersecting $e$ intersects $\overline{\triangle}$ immediately \emph{above} $e$, and intersects $\underline{\triangle}$ immediately \emph{below} $e$. 
	
	In pseudo-triangulations, as in triangulations, the notion of flipping an edge exists: This time a flip exchanges the diagonal of a pseudo-quadrilateral by its other diagonal, however, for pseudo-quadrilaterals it is not always true that both its diagonals intersect, see Figures~\ref{c-tri:figs:9:a} and~\ref{c-tri:figs:10:a}, while for triangulations that is always the case. Thus, both diagonals could appear in the same \emph{non-pointed} pseudo-triangulation, nevertheless, in a pseudo-triangulation only one of them appears at a time, since the presence of both destroys either planarity or pointedness. We will thus inspect two cases, depending on whether the dual edge $e^{\prime}$ of $e$ in the pseudo-quadrilateral $\square=\overline{\triangle}\cup\underline{\triangle}$ intersects $e$ or not.

	If $e$ and $e^{\prime}$ intersect, let $l$ be the vertical line containing their intersection point, see Figures~\ref{c-tri:figs:9:a} and~\ref{c-tri:figs:9:b}. The reader can easily verify that the supporting lines of the edges $\overline{e}$ of $\overline{\triangle}$ and $\underline{e}$ of $\underline{\triangle}$, intersected by $l$ right above and below $e$, intersect the supporting line of $e$ on different sides of $l$, making $e$ good w.r.t.~$l$. It remains to argue what happens if $l\not\in\L$, which can easily be the case. If $l\not\in\L$ then $l$ lies in the vertical slab between a pair of lines $l_{i-1}, l_{i}\in\L$, and $p_{i}$ is the only point of $\setp$ that also lies in that slab. Thus we can continuously sweep $l$ in one direction as to make it coincide with either $l_{i-1}$ or $l_{i}$ without destroying any argument.
	\begin{figure}[!htb]
		\begin{center}
	%	\fbox{
			\begin{minipage}[b][7cm][t]{7cm}
				\begin{center}
					\includegraphics[height=4cm]{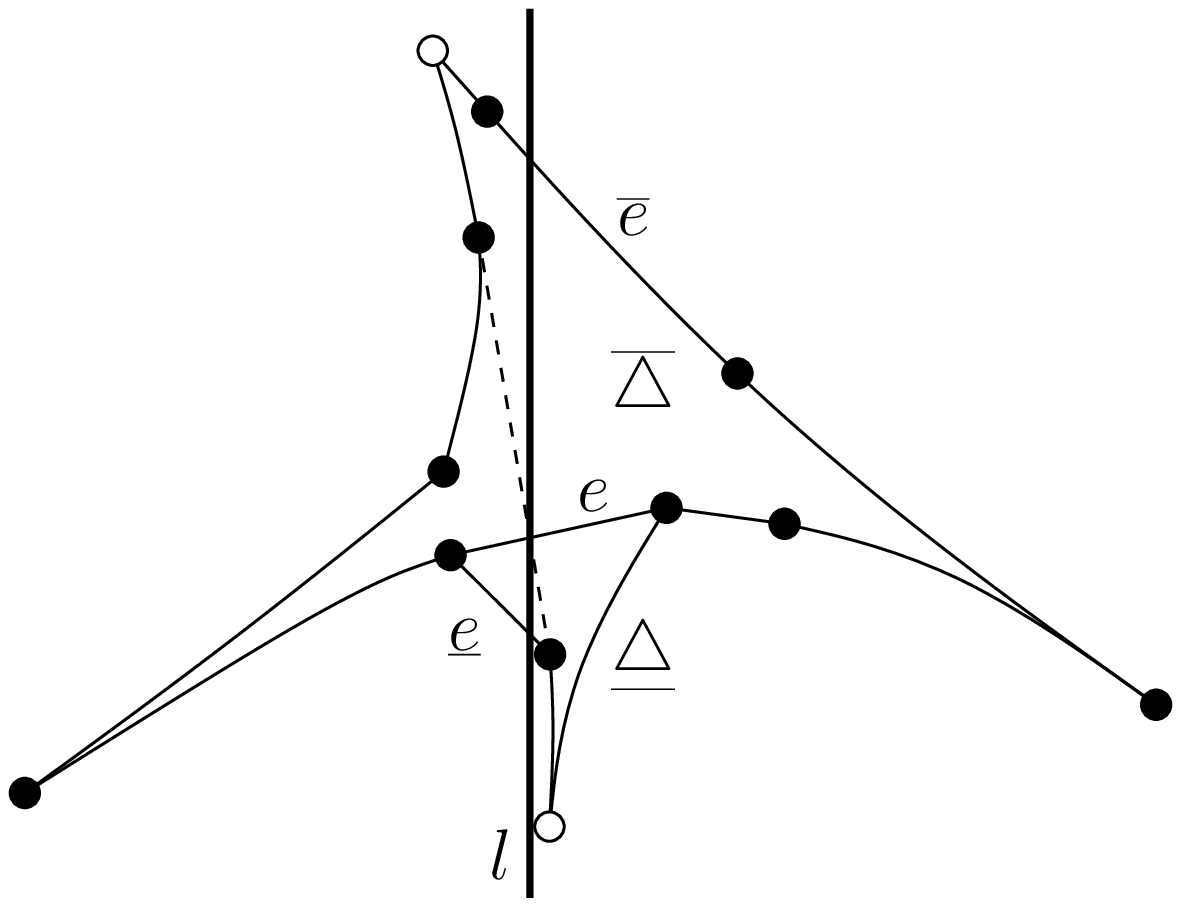}
					\caption{The flip edge $e^{\prime}$ of $e$ is shown dashed. If those two edges intersect, the $e$ is good w.r.t.~line $l$. The two vertices of $\square$ opposite to $e$ are shown in white.}
					\label{c-tri:figs:9:a}
				\end{center}
			\end{minipage}
	%	}
		\quad
	%	\fbox{
			\begin{minipage}[b][7cm][t]{7cm}
				\begin{center}
				\includegraphics[height=4cm]{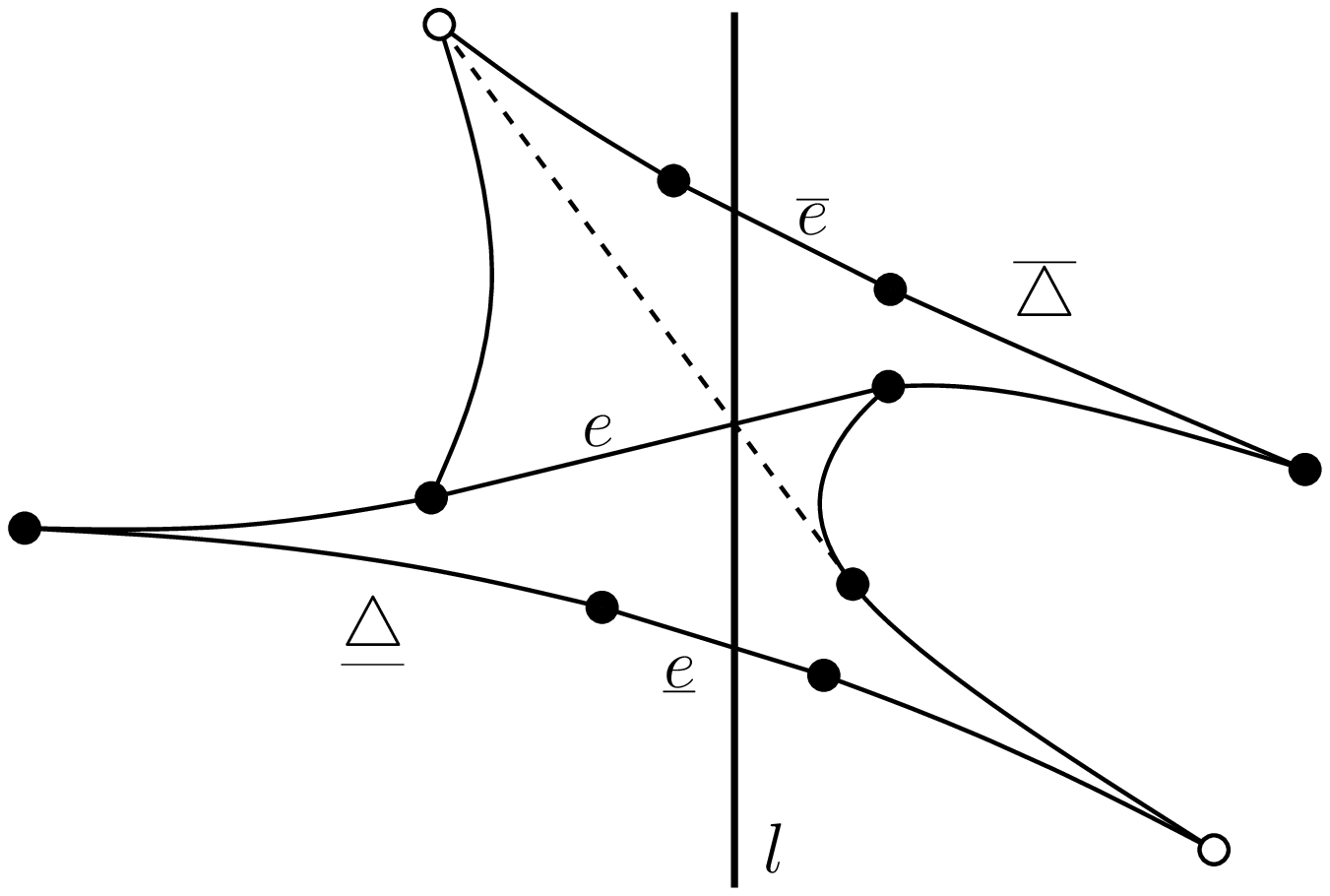}
					\caption{Another possibility for $\square$.}
					\label{c-tri:figs:9:b}
				\end{center}
			\end{minipage}
		\end{center}
	%	}
	\end{figure}
	
	If $e$ and $e^{\prime}$ do not intersect, let us assume that there is no vertical line $l$ contained in the vertical slab defined by $e$ such that the supporting lines of the edges $\overline{e}, \underline{e}$ intersect the supporting line of $e$ on different sides of $l$, otherwise $e$ is good w.r.t.~to $l$, see Figure~\ref{c-tri:figs:10:a}. We will assume that the intersections between those supporting lines happen to the left of any vertical line that intersects $e$, see Figure~\ref{c-tri:figs:10:b}.

	\begin{figure}[!htb]
		\begin{center}
	%	\fbox{
			\begin{minipage}[b][7cm][t]{7cm}
				\begin{center}
					\includegraphics[height=4cm]{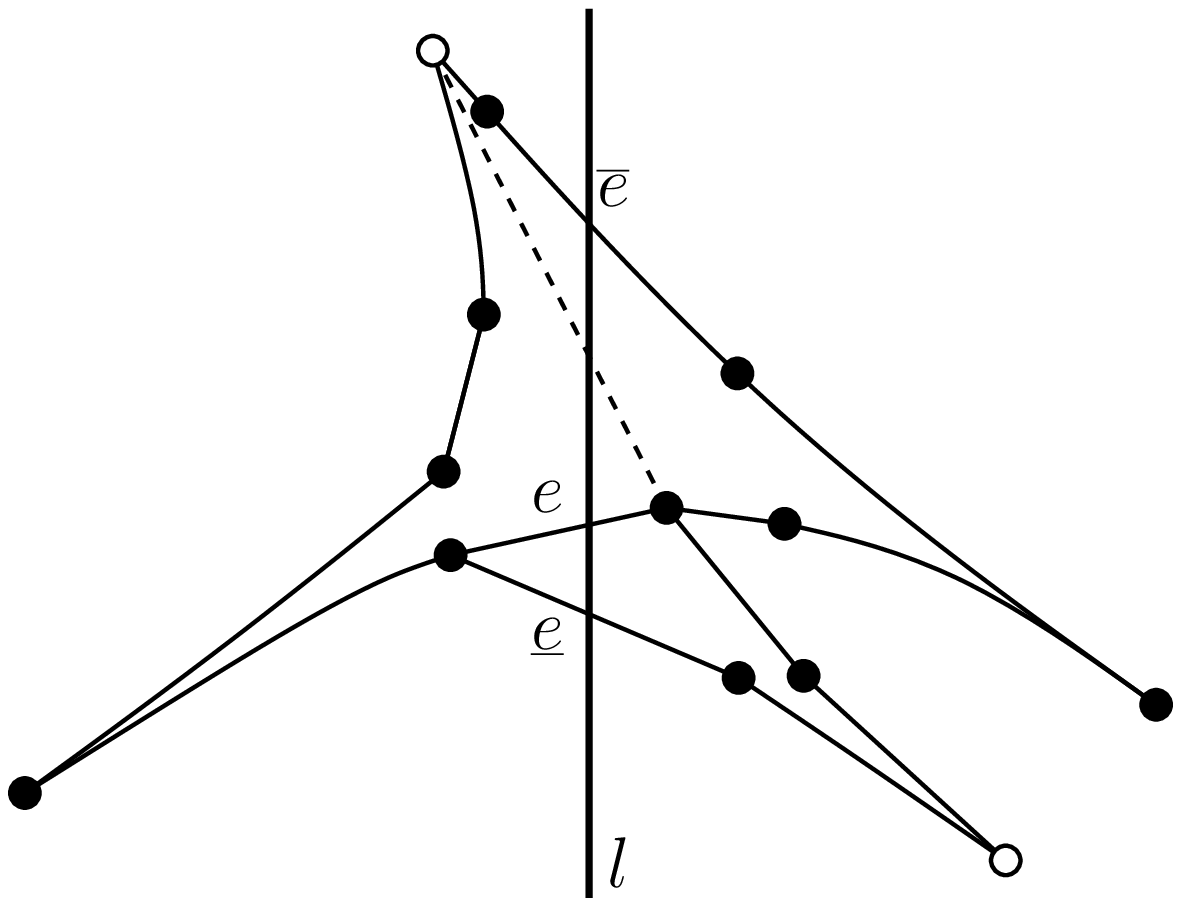}
					\caption{If $e$ and $e^{\prime}$ do not intersect, the pseudo-triangles of $\square$ can be oriented such that there is still a line $l$ that $e$ is good with respect to.}
					\label{c-tri:figs:10:a}
				\end{center}
			\end{minipage}
	%	}
		\quad
	%	\fbox{
			\begin{minipage}[b][7cm][t]{7cm}
				\begin{center}
				\includegraphics[height=4cm]{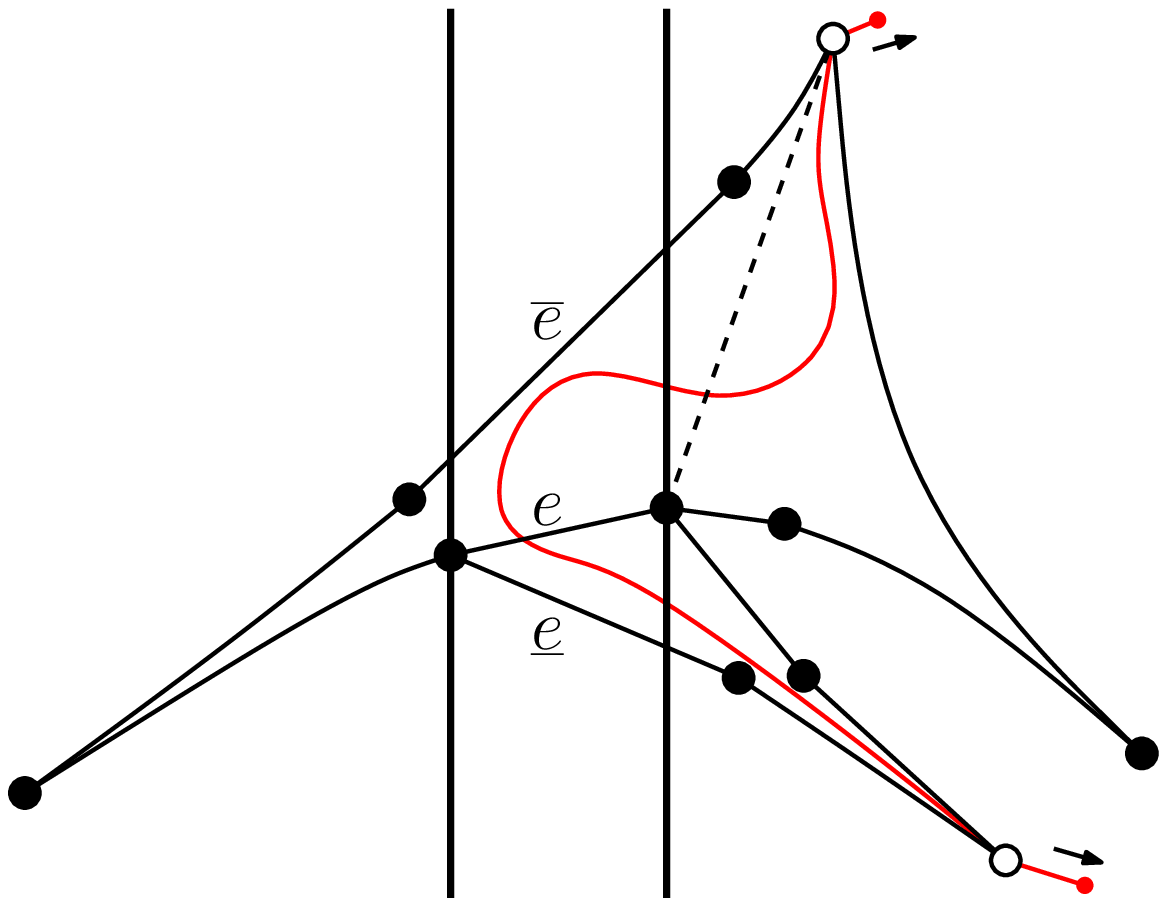}
					\caption{If the red path is pulled from its ends in the direction shown by the arrows, until its length is minimal, we end up having a geodesic path between the opposite vertices, where $e^{\prime}$ is the only new edge.}
					\label{c-tri:figs:10:b}
				\end{center}
			\end{minipage}
		\end{center}
	%	}
	\end{figure}

	Therefore we have to prove that $e$ is actually used to connect two consecutive good edges of $S$ w.r.t.~some line that does not intersect $e$. Since $e$ and $e^{\prime}$ do not intersect, it must be the case that $e$ and $e^{\prime}$ share one vertex, this is because a flip can be seen as a \emph{geodesic path}\footnote{A geodesic path between two points in a region $R$ is the shortest path between the points that stays in $R$, including its boundary.} between the two corners of $\square$ opposite to $e$. This geodesic path coincides with the boundary of $\square$ except at exactly one edge, which is the flip $e^{\prime}$ of $e$. Since this path does not properly intersect $e$, but connects two points on different sides of the supporting line of $e$, it must happen that one endpoint of $e$ is part of the path, which is exactly the place where $e^{\prime}$ helps to complete the geodesic path, see Figure~\ref{c-tri:figs:10:b}.
	
	Let $p = p_{i}$ be the vertex of $e$ that is also shared by $e^{\prime}$. Note that $p$ is the only point of $\setp$ contained in the vertical slab defined by $l_{i-1}, l_{i}\in\L$. Also, observe that only one of those two lines intersects $e$, so let us assume w.l.o.g.~that $l_{i-1}$ is the one that intersects $e$. The configuration at which we arrive can be seen in Figure~\ref{c-tri:figs:11:a}. Another configuration arises when the other vertex of $e$ is the one shared by $e^{\prime}$; the configuration would be mirror-reflected to the one presented here.
	
	\begin{figure}[!htb]
		\begin{center}
	%	\fbox{
			\begin{minipage}[b][6.2cm][t]{7cm}
				\begin{center}
					\includegraphics[height=4cm]{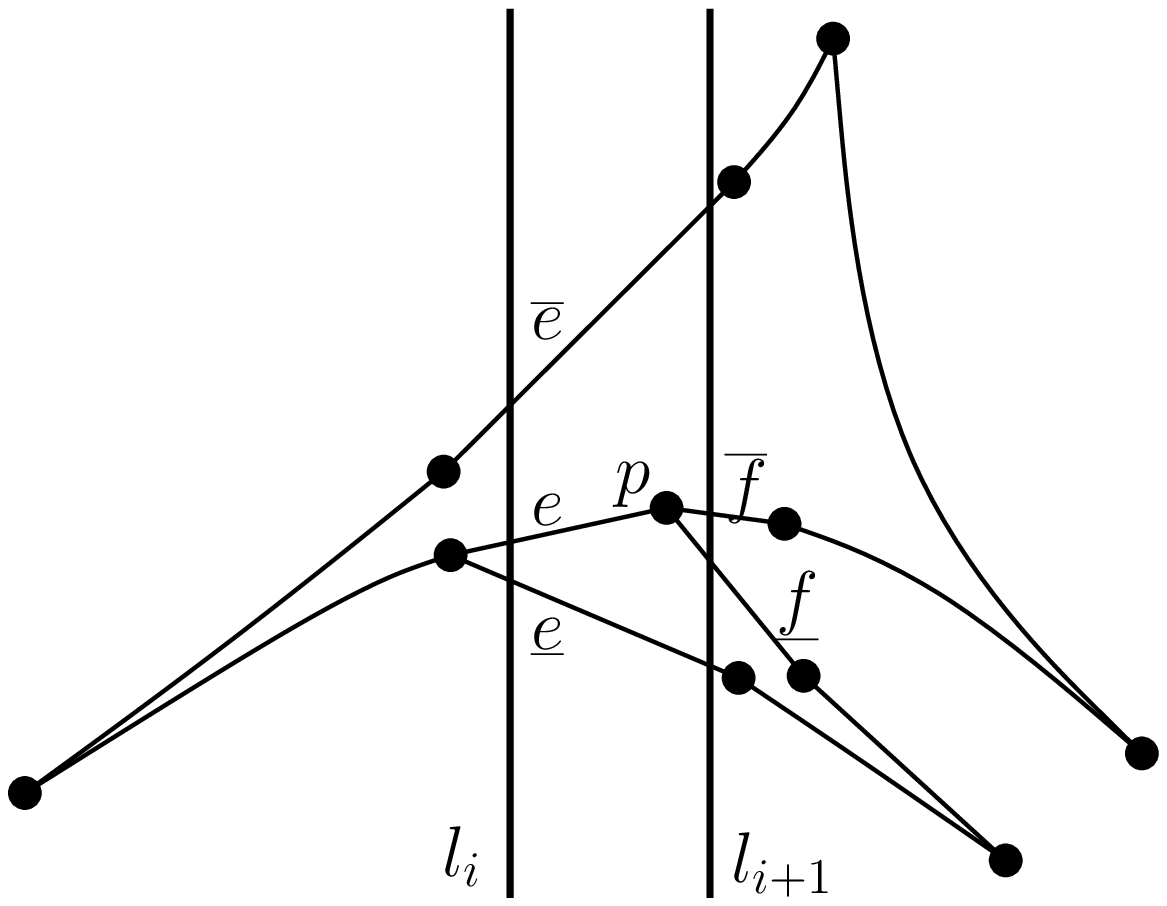}
					\caption{Point $p$ is the only point contained in the vertical slab between $l_{i}, l_{i+1}$. The configuration, if non-degenerate, must locally look like this.}
					\label{c-tri:figs:11:a}
				\end{center}
			\end{minipage}
	%	}
		\quad
	%	\fbox{
			\begin{minipage}[b][6.2cm][t]{7cm}
				\begin{center}
				\includegraphics[height=4cm]{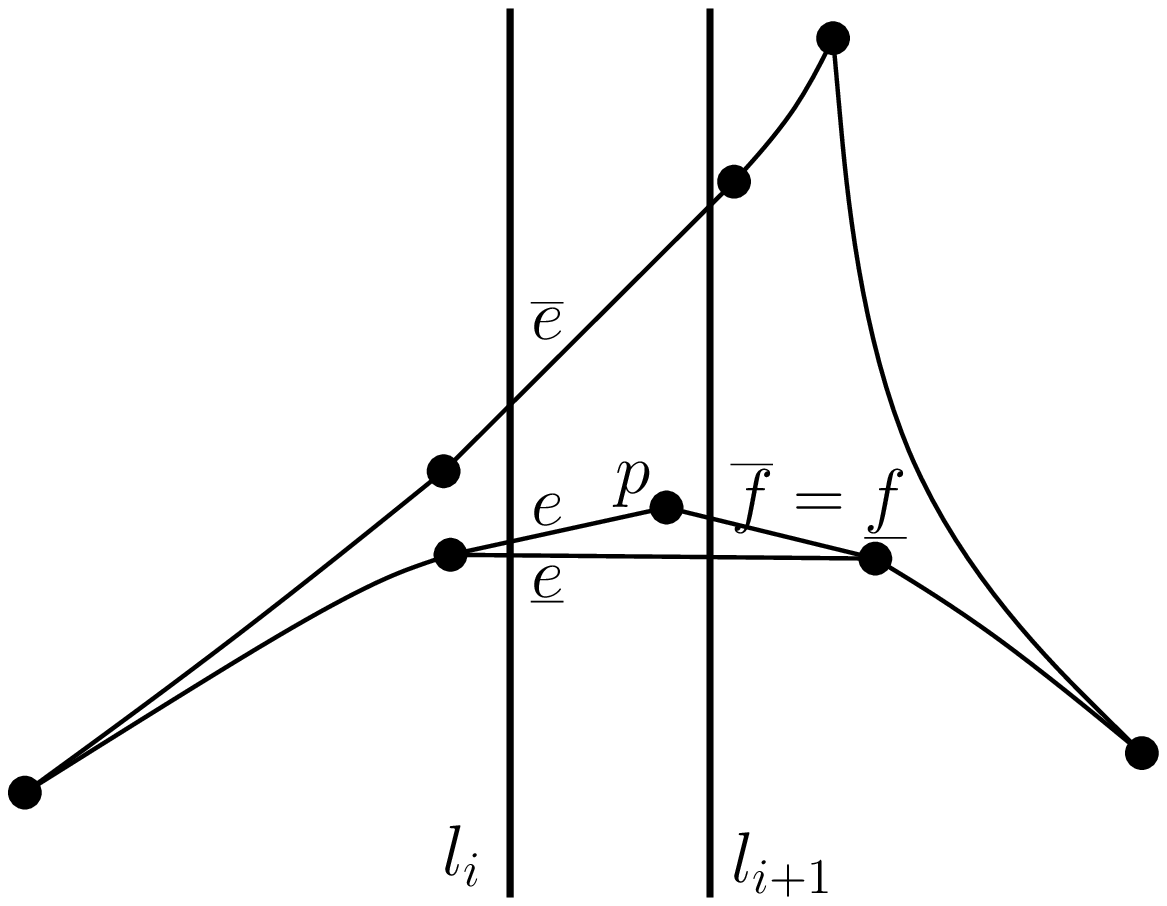}
					\caption{If $\square$ is degenerate, then the configuration looks like this.}
					\label{c-tri:figs:11:b}
				\end{center}
			\end{minipage}
		\end{center}
	%	}
	\end{figure}
	
	Let $\overline{f}, \underline{f}$ be the other edges of $\overline{\triangle}, \underline{\triangle}$ adjacent to $p$ respectively. We claim that $\underline{f}$ is good w.r.t.~$l_{i+1}$: If $\square$ is non-degenerate, then $\overline{f}\neq\underline{f}$, as displayed in Figure~\ref{c-tri:figs:11:a}. In such a case observe that $\underline{e}, \underline{f}$ and $\overline{f}$ intersect $l_{i+1}$ consecutively, the latter two intersect to the left of $l_{i+1}$, at $p$, and the supporting lines of the former two intersect to the right of $l_{i+1}$, which proofs the claim in this case. If $\square$ is degenerate, as displayed in Figure~\ref{c-tri:figs:11:b}, then $\overline{f} = \underline{f}$, and thus $\overline{e}, \overline{f}$ and $\underline{e}$ intersect $l_{i+1}$ consecutively. Here, the latter two share an endpoint to the right of $l_{i+1}$, and the supporting lines of the former two intersect to the left of $l_{i+1}$, this makes $\overline{f} = \underline{f}$ good again w.r.t.~$l_{i+1}$. At this point observe that regardless of the case, the part of $\overline{\triangle}$ to the right of $l_{i+1}$ cannot be used in $\pt{l_{i+1}, S}$ to connect $\underline{f}$ with the good edge w.r.t.~$l_{i+1}$ that lies above $\underline{f}$, since that part along with $l_{i+1}$ does not form a pseudo-triangle, as the definition of a PT-path requires. Thus the part of $\overline{\triangle}$ to the left of $l_{i+1}$ will be used in $\pt{l_{i+1}}{S}$, but that means that $e$ will also be part of that PT-path, which concludes the proof.
\end{proof}

Hence, as for T-paths, every pseudo-triangulation $S$ of $\setp$ has a \emph{unique} set $\P(S)$. Let $\tcomp(l, \setp) = \{\pt{l}{S}\ |\ S\ {\text{\color{black}is a pseudo-triangulations of}}\ \setp\}$ be the set of \emph{all} PT-paths w.r.t.~to separating line $l$. What is now of interest to us is the opposite. Does every tuple $\{\pi_{1},\ldots, \pi_{n-1}\}$ of pairwise non-crossing PT-paths define a \emph{unique} pseudo-triangulation? Where $\pi_{i}\in\tcomp(l_{i},\setp)$ and $l_{i}\in\L$. The analogous statement for triangulations was clear, however, pseudo-triangulations might require more explanation. The answer is yes, as long as the union $\bigcup_{1\leq i\leq n-1}\pi_{i}$ is pointed. To see this, just observe that if that union is pointed, then it can be completed to a pseudo-triangulation $S_{1}$ by adding edges, while keeping planarity and pointedness, see Theorem~\ref{c-tri:theorems:pointed-pt}. Assume there is another pseudo triangulation $S_{2}$ that can be obtained from the union of the PT-paths $\pi_{i}$ by adding edges in a different way. Observe that every PT-path $\pi_{i}$, $1\leq i\leq n-1$, keeps being a PT-path of $S_{1}, S_{2}$ since the additional edges do not break planarity or pointedness. Thus by Theorem~\ref{c-tri:theorems:pt-paths:2} there is no other option but $\bigcup_{1\leq i\leq n-1}\pi_{i} = \P(S_{1}) = \P(S_{2})$. But in the proof of that theorem we actually showed that \emph{every} edge of $S_{1}, S_{2}$ is in some PT-path in $\P(S_{1}), \P(S_{2})$ respectively, thus $S_{1} = S_{2}$.

Thus the number of pointed pseudo-triangulations of $\setp$ equals the number of different sets $\P(S)$ that we can find on $\setp$. The algorithm for counting pseudo-triangulations is the same as the algorithm for counting triangulations presented in the previous section, so we just have to define the sets the algorithm works on. Also, the proof of correctness will remain essentially the same, we will just point out what the differences are.

By previous discussions we know that a tuple $\{\pi_{1},\ldots, \pi_{n-1}\}$ of PT-paths, where $\pi_{i}\in\tcomp(l_{i}, \setp)$ and $l_{i}\in\L$, defines a pseudo-triangulations iff those PT-paths are pairwise non-crossing and their union is pointed. As before, we will use the term \emph{compatible} for such a pointed and pairwise non-crossing set of PT-paths. We can now define the following set:
\begin{align*}
	\T(\pi_{j}) &= \{\{\pi_{1},\ldots, \pi_{j-1}\}\ |\ \{\pi_{1},\ldots, \pi_{j}\}\ {\text{\color{black}is compatible, and}}\ \pi_{i}\in\tcomp(l_{i}, \setp),l_{i}\in\L\}
\end{align*}

By the discussion above we have that the number of pointed pseudo-triangulations of $\setp$ is exactly $|\T(\pi)|$, where $\pi$ is the \emph{unique} PT-path of $\setp$ w.r.t.~$l_{n-1}\in\L$.

Finally, and for completeness, for each $\pi^{\prime}\in\tcomp(l_{i+1}, \setp)$ and each $\pi\in\tcomp(l_{i},\setp)$, with $l_{i}, l_{i+1}\in\L$, we define: 
\begin{align*}
\lambda(\pi^{\prime}) &= \{\pi\in\tcomp(l_{i},\setp)\ |\ \pi {\textcolor{black}{\text{ is compatible with }}} \pi^{\prime}\}\\
\mu(\pi) &= \{\pi^{\prime}\in\tcomp(l_{i+1}, \setp)\ |\ \pi^{\prime}{\textcolor{black}{\text{ is compatible with }}} \pi\}
\end{align*}

The notation $\tcomp(\cdot, \cdot), \T(\cdot), \lambda(\cdot)$ and $\mu(\cdot)$ is the same as the one used in \S~\ref{c-tri:sections:t-paths} for T-paths, but the definitions here reflect that we are now dealing with PT-paths instead.

Since the sweep line algorithm for counting pseudo-triangulations is the same as the one for counting triangulations, we just have to show how to obtain $\tcomp(l_{i+1}, \setp)$, as well as $|\T(\pi^{\prime})|$ for every $\pi^{\prime}\in\tcomp(l_{i+1}, \setp)$, having stored $\tcomp(l_{i}, \setp)$ and $|\T(\pi)|$ for every $\pi\in\tcomp(l_{i}, \setp)$, where $l_{i}, l_{i+1}\in\L$ are two consecutive \emph{event points} of the sweep line algorithm. This, as for T-paths, will be accomplished by doing \emph{local changes} to every PT-path $\pi\in\tcomp(l_{i}, \setp)$, which we explain next. From this local changes we directly obtain $\tcomp(l_{i+1}, \setp)$ as well as $\lambda(\pi^{\prime})$ for each $\pi^{\prime}\in\tcomp(l_{i+1}, \setp)$. Thus, obtaining $|\T(\pi^{\prime})|$ is easy since $|\T(\pi^{\prime})| = \sum_{\pi\in\lambda(\pi^{\prime})}|\T(\pi)|$. We will later prove that $\lambda(\pi^{\prime})$ can be correctly computed in time $O\left(n^{6}\cdot t_{i}\right)$, where $t_{i} = |\tcomp(l_{i}, \setp)|$. Therefore the overall running time of the algorithm is $\sum_{l_{j}\in\L}O\left(n^{6}\cdot t_{j}\right)\leq O\left(n^{7}\cdot t\right)$, where $t = \max\{t_{j}\}$.

Let us now explain what the local changes in general look like. Let $p = p_{i+1}\in\setp$ be the point lying between lines $l_{i}, l_{i+1}\in\L$. As for T-paths, the only obstacle of \emph{every} PT-path $\pi$ of $\tcomp(l_{i}, \setp)$ to be a PT-path $\pi^{\prime}$ of $\tcomp(l_{i+1}, \setp)$ is $p$. The changes are mostly equivalent (in form) to the ones for T-paths but this time they are more complicated. We have two possibilities, depending on whether $\pi$ has $p$ as a vertex or not. Let us see each one in turn:

\begin{enumerate}[label=(\oldstylenums{\arabic*})]	
	\item If $\pi\in\tcomp(l_{i}, \setp)$ has $p$ as vertex we have more sub-cases depending on whether $p$ lies inside $\Conv(\setp)$ or on $\Conv(\setp)$, and whether $p$ is the convex vertex of an empty pseudo-triangle bounded by $l_{i}$ or not. Let us see:

	\begin{itemize}
		\item If $p$ lies strictly inside $\Conv(\setp)$ let us first assume that $p$ is also the convex vertex of an empty pseudo-triangle of $\pi$ bounded by $l_{i}$. This case is equivalent to the one for triangulations displayed in Figure~\ref{c-tri:sections:t-paths:figs:14} on page~\pageref{c-tri:sections:t-paths:figs:14}. The situation is as displayed in Figures~\ref{c-tri:figs:4:a} and~\ref{c-tri:figs:4:b} with solid lines. Let $e, f$ be the good edges of $\pi$ w.r.t.~$l_{i}$ right below and above $p$ respectively, and let $e^{\prime}, f^{\prime}$ be the good edges of $\pi$ w.r.t.~$l_{i}$ adjacent to $p$ such that $e, e^{\prime}, f^{\prime}, f$ are ordered bottom-up along $l_{i}$. Let $\triangle\ (\triangle^{\prime})$ be the empty pseudo-triangle of $\pi$ to the left of $l_{i}$ having $e, e^{\prime}\ (f, f^{\prime})$ as edges and bounded by $l_{i}$. If $e$ and $f$ share their right endpoint, then a PT-path $\pi^{\prime}\in\tcomp(l_{i+1}, \setp)$ can be produced using only adjacencies from the original PT-path $\pi\in\tcomp(l_{i}, \setp)$, see Figure~\ref{c-tri:figs:4:b}. This situation can easily be detected.
		\begin{figure}[!htb]
			\begin{center}
		%	\fbox{
				\begin{minipage}[b][5.8cm][t]{7cm}
					\begin{center}
						\includegraphics[height=4cm]{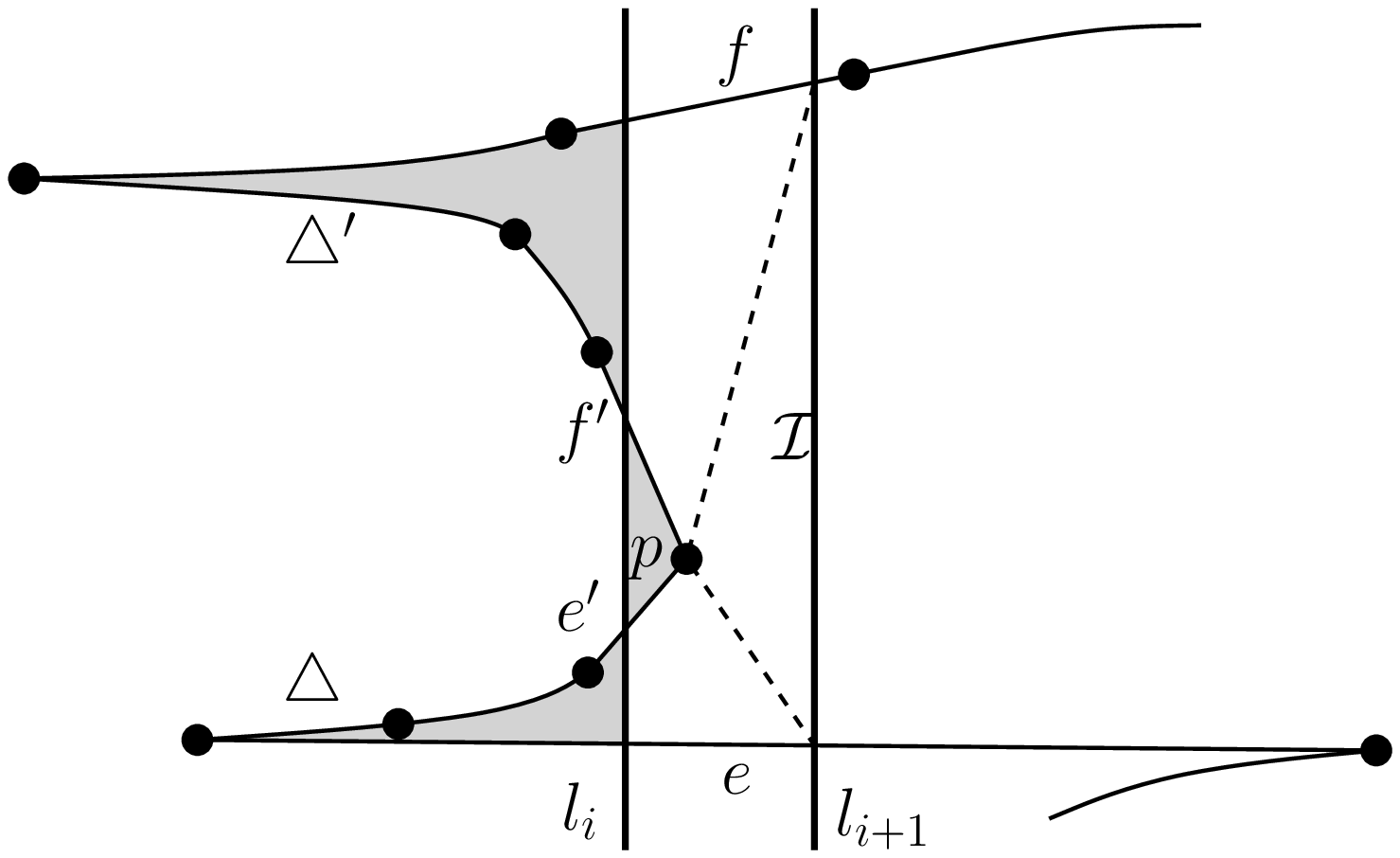}
						\caption{Here $e$ and $f$ do not share the right endpoint.}
						\label{c-tri:figs:4:a}
					\end{center}
				\end{minipage}
		%	}
			\quad
		%	\fbox{
				\begin{minipage}[b][5.8cm][t]{7cm}
					\begin{center}
						\includegraphics[height=4cm]{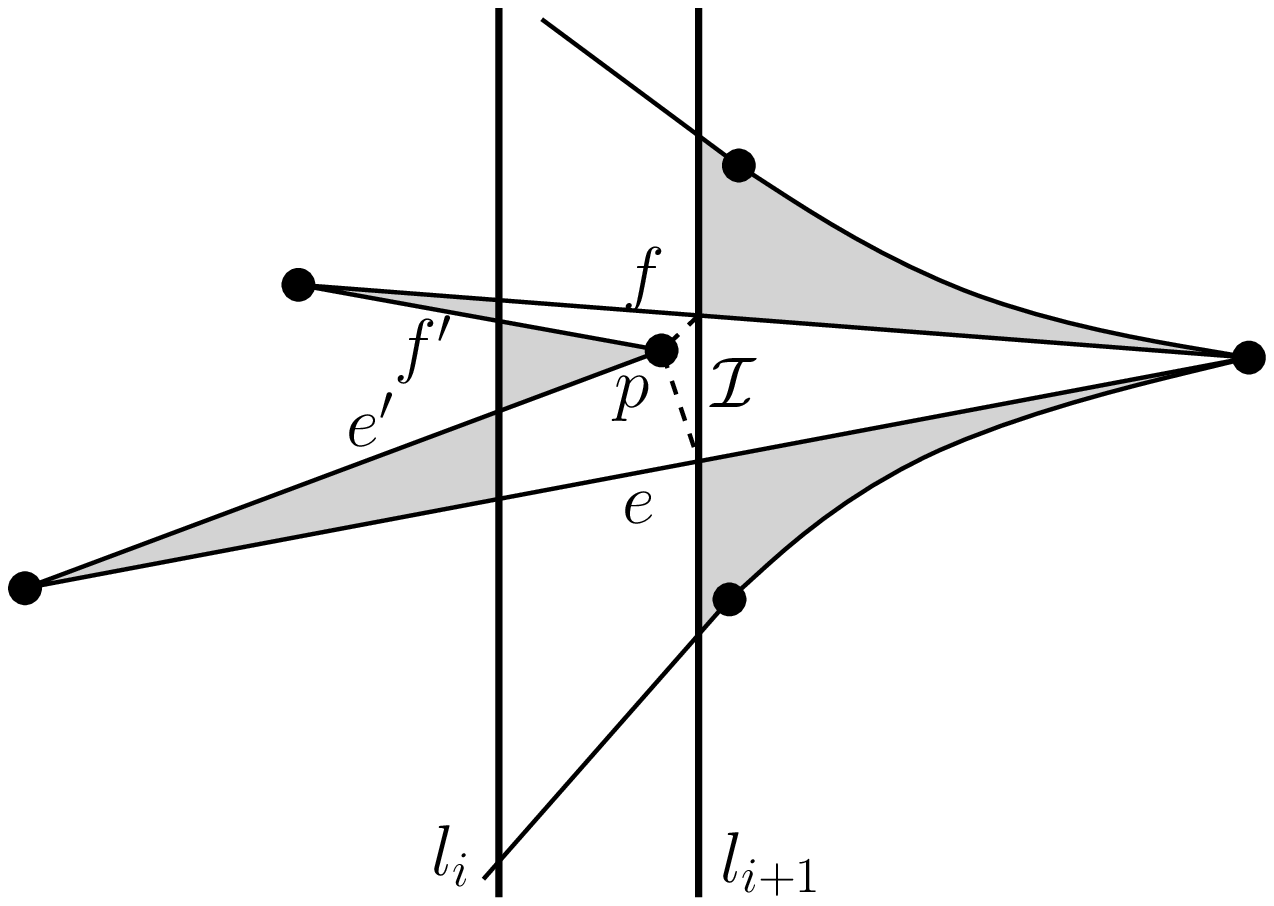}
						\caption{In this case a PT-path $\pi^{\prime}\in\tcomp(l_{i+1}, \setp)$ can be produced using only adjacencies from the original PT-path $\pi\in\tcomp(l_{i}, \setp)$.}
						\label{c-tri:figs:4:b}
					\end{center}
				\end{minipage}
			\end{center}
		%	}
		\end{figure}
		
		If $e$ and $f$ do not share their right endpoint, then the situation is in general as displayed in Figure~\ref{c-tri:figs:4:a}. The local changes we are looking for are produced by every point $\alpha\in\setp$ such that the dotted adjacencies shown in Figure~\ref{c-tri:figs:4:a:bis} produce a PT-path $\pi^{\prime}\in\tcomp(l_{i+1}, \setp)$ with the property that $\pi\cup\pi^{\prime}$ is pointed. 
		\begin{figure}[!htb]
			\begin{center}
						\includegraphics[height=4cm]{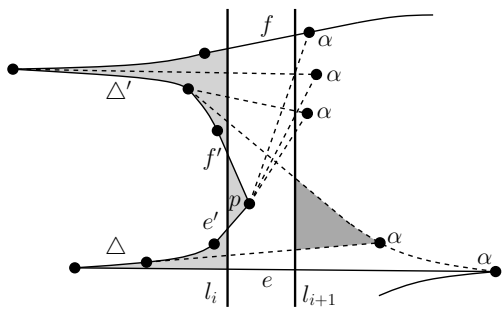}
						\caption{All points $\alpha$ can be used to produce a PT-path $\pi^{\prime}\in\tcomp(l_{i+1}, \setp)$.}
						\label{c-tri:figs:4:a:bis}
			\end{center}
		\end{figure}
		
		So, let us explain more carefully how these changes are really produced. Let $\mathcal{I}$ be the interval of $l_{i+1}$ seen by $p$ having the edges of $\pi$ as obstacles. The visibility cone of $p$ towards $l_{i+1}$ is shown dashed in Figures~\ref{c-tri:figs:4:a} and~\ref{c-tri:figs:4:b}. 	Observe that every $\alpha$ used for a change has a visibility ray to $\mathcal{I}$. So having the edges of $\pi$ as obstacles, obtain a list $A$ of all points to the right of $l_{i+1}$ having a visibility ray to $\mathcal{I}$. This can be done in total time $O\left(n^{2}\log(n)\right)$, see~\cite{ORourke:1987:AGT:40599}. Let $\alpha\in A$. We will assume that we have actually computed a visibility cone to $\mathcal{I}$ with apex at $\alpha$. We then regard $\alpha$ as the apex of an empty pseudo-triangle bounded by $\pi^{\prime}$ (to be constructed) and $l_{i+1}$, see the dark gray region to the right of $l_{i+1}$ with apex at one of the $\alpha$'s in Figure~\ref{c-tri:figs:4:a:bis}. The same $\alpha$ can give rise to different PT-paths of $\tcomp(l_{i+1}, \setp)$, see Figure~\ref{c-tri:figs:14}.
		
		\begin{figure}[!htb]
			\begin{center}
				\begin{minipage}[b][4.2cm][t]{7cm}
					\begin{center}
						\includegraphics[height=4cm]{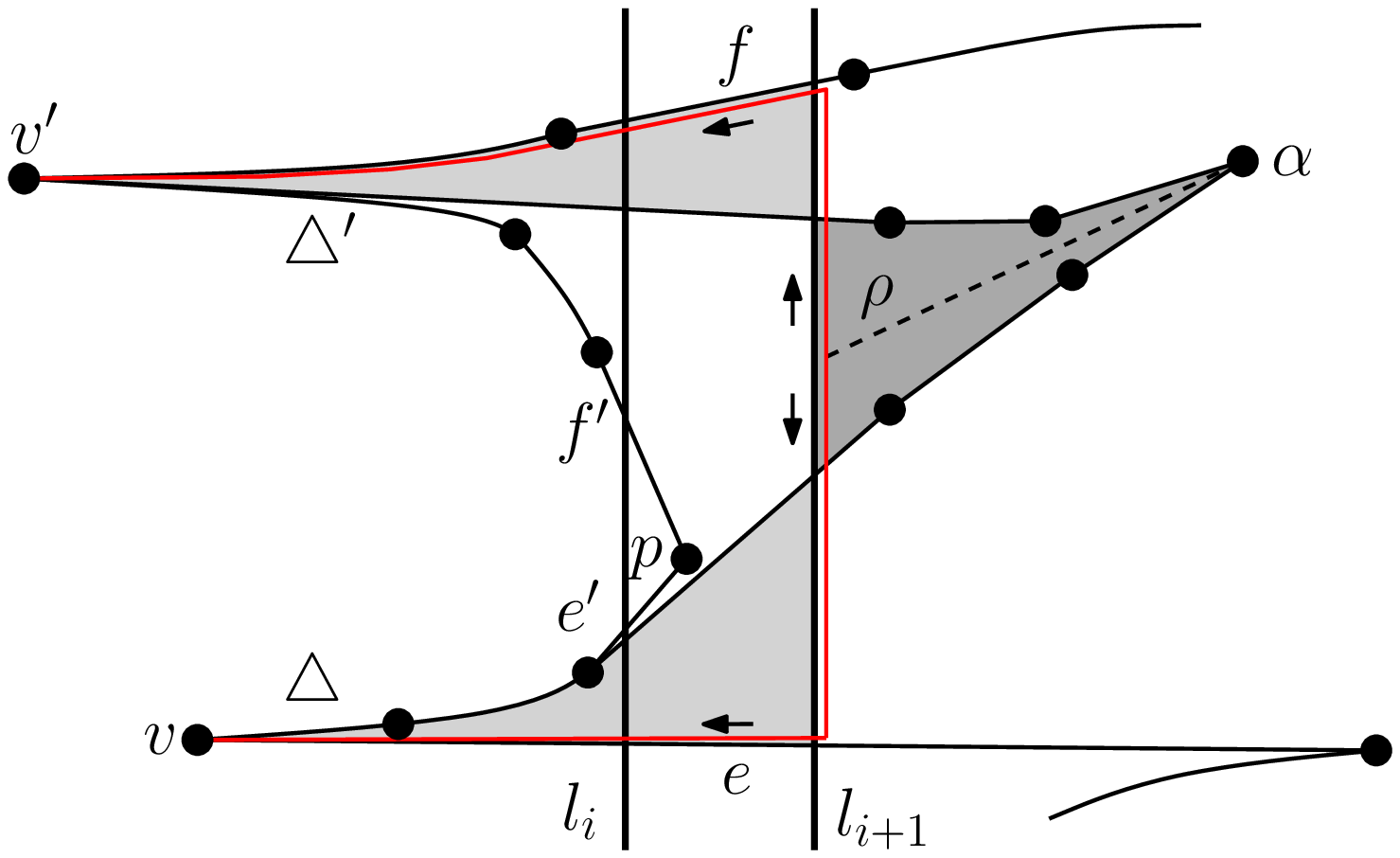}
					\end{center}
				\end{minipage}
				\quad
				\begin{minipage}[b][4.2cm][t]{7cm}
					\begin{center}
						\includegraphics[height=4cm]{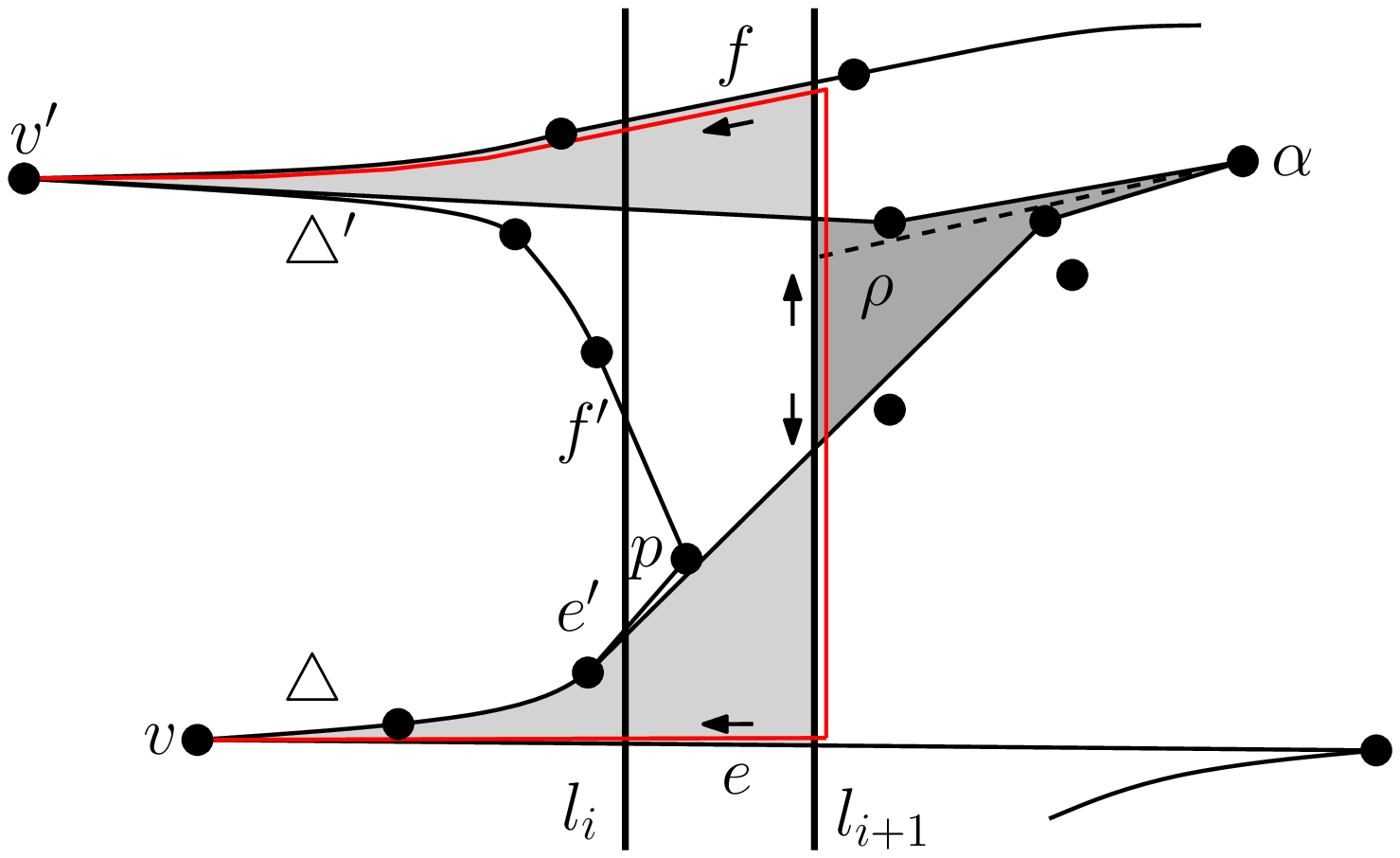}
					\end{center}
				\end{minipage}
			\caption{Two different possibilities for adjacencies connecting $\alpha$ to $\pi\in\tcomp(l_{i}, \setp)$. Each gives a different PT-path of $\tcomp(l_{i+1}, \setp)$.}
			\label{c-tri:figs:14}	
			\end{center}
		\end{figure}
		
		So the way we discern between all the PT-paths of $\tcomp(l_{i+1}, \setp)$ that can be obtained from a single $\alpha\in A$ is as follows: Shoot a visibility ray $\rho$ from $\alpha$ to $\mathcal{I}$ that is fully contained in the empty pseudo-triangle delimited by $l_{i+1}$ that $\alpha$ is apex of, the dashed lines of Figure~\ref{c-tri:figs:14}. From the intersection point between $\rho$ and $\mathcal{I}$ create two paths $\rho_{\downarrow}, \rho_{\uparrow}$ following $\mathcal{I}$ towards $e$ and $f$ respectively, so $\rho_{\downarrow}$ goes down, and $\rho_{\uparrow}$ goes up. Once $e$ and $f$ are reached, follow the adjacencies of $\pi$ towards the leftmost convex vertex $v$, $v^{\prime}$ of $\triangle$, $\triangle^{\prime}$ respectively. Paths $\rho_{\downarrow}, \rho_{\uparrow}$ are shown in red in Figure~\ref{c-tri:figs:14}. Now, the adjacencies that are joining $\pi\in\tcomp(l_{i}, \setp)$ with $\alpha$ are nothing but two shortest paths $\widetilde{\rho_{\downarrow}}, \widetilde{\rho_{\uparrow}}$ between $\alpha$ and $v$, $v^{\prime}$ respectively, the former homotopic to $\rho_{\downarrow}\cup\rho$ and the latter homotopic to $\rho_{\uparrow}\cup\rho$. Just imagine that if $\rho_{\downarrow}\cup\rho$ and $\rho_{\uparrow}\cup\rho$ are two strings between $\alpha$ and $v, v^{\prime}$ respectively, then pulling them as to make them of shortest length, having the points of $\setp$ as obstacles, will give the adjacencies connecting $\alpha$ to $\pi$, and thus complete the adjacencies of $\pi^{\prime}\in\tcomp(l_{i+1}, \setp)$.
		
		\begin{figure}[!htb]
			\begin{center}
				\begin{minipage}[b][5.5cm][t]{7cm}
					\begin{center}
						\includegraphics[height=4cm]{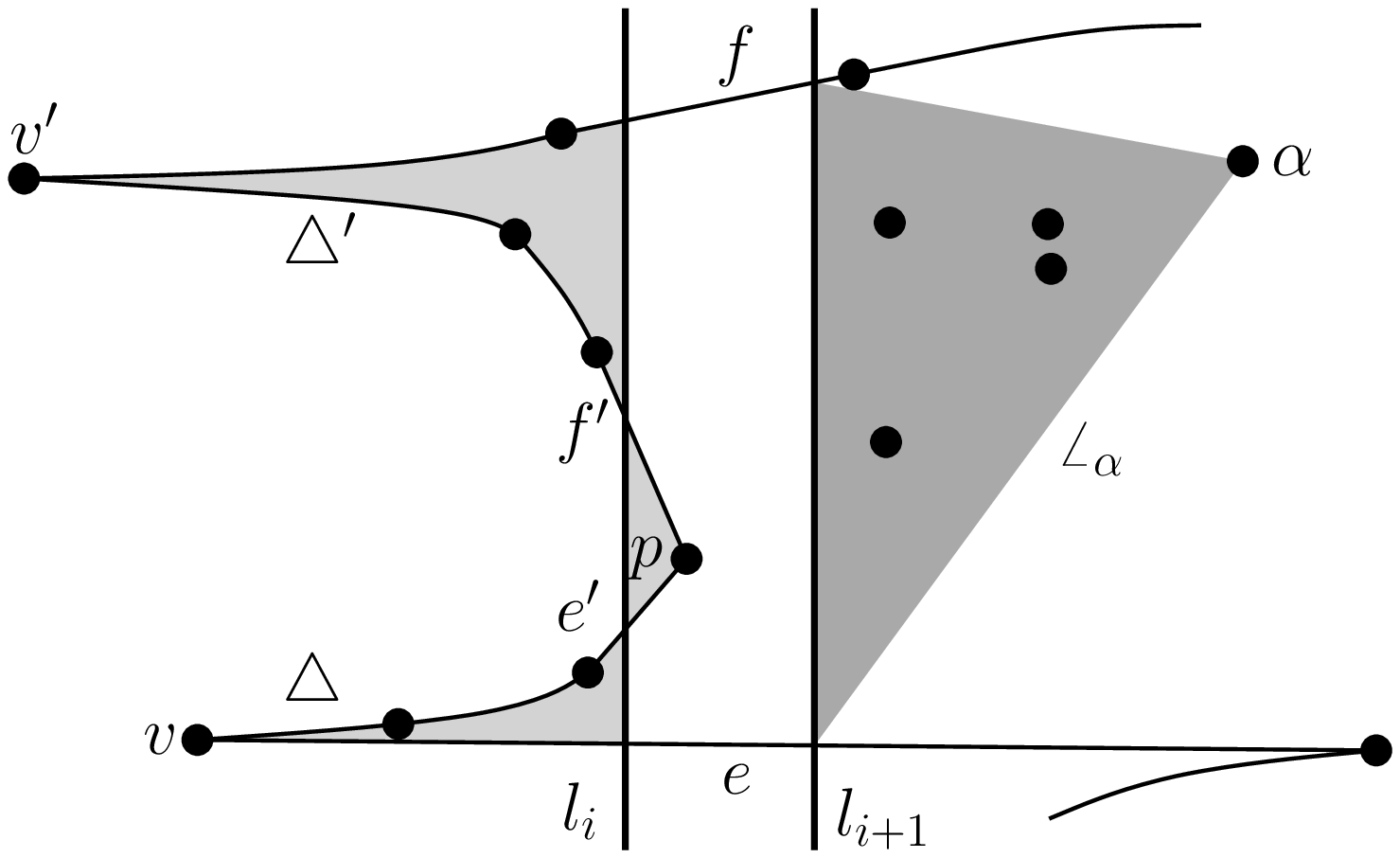}
						\caption{The visibility cone $\angle_{\alpha}$ (to the right of $l_{i+1}$) is shown in dark gray.}
						\label{c-tri:figs:15:a}
					\end{center}
				\end{minipage}
				\quad
				\begin{minipage}[b][5.5cm][t]{7cm}
					\begin{center}
						\includegraphics[height=4cm]{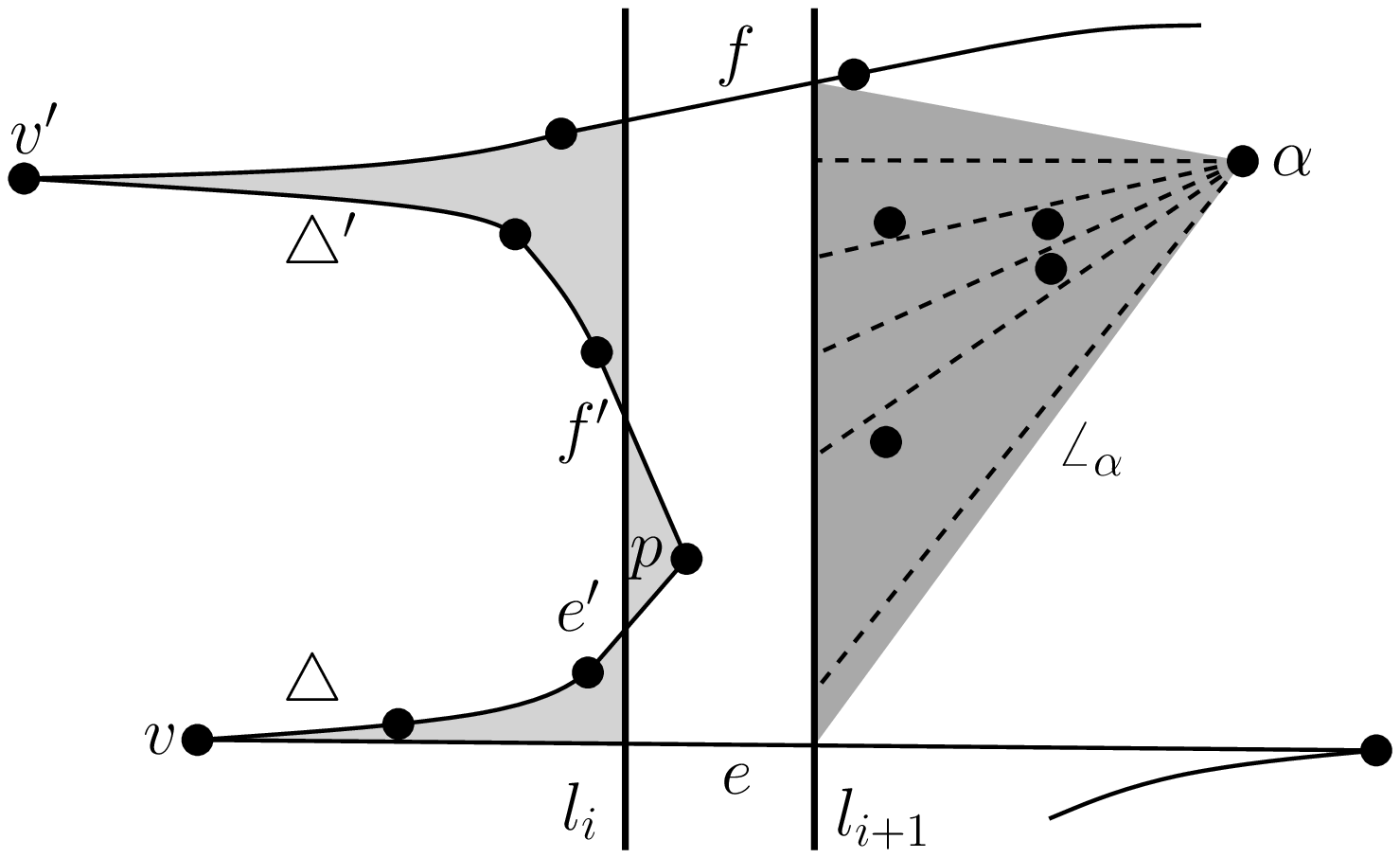}
						\caption{Each of the dashed lines defines an homotopy class.}
						\label{c-tri:figs:15:b}
					\end{center}
				\end{minipage}
			\end{center}
		\end{figure}

	Thus, in order to construct all PT-paths of $\tcomp(l_{i+1}, \setp)$ that can be obtained from $\alpha\in A$, we have to exhaust all its possibilities. This is done as follows: Consider the visibility cone  $\angle_{\alpha}$ to $\mathcal{I}$ with apex at $\alpha$, shown in dark gray in Figure~\ref{c-tri:figs:15:a}. If $\angle_{\alpha}$ is empty, then any visibility ray $\rho$ to $\mathcal{I}$ inside $\angle_{\alpha}$ will do to create $\rho_{\downarrow}$ and $\rho_{\uparrow}$. As a consequence of the emptiness of $\angle_{\alpha}$, point $\alpha$ will spawn only one PT-path of $\tcomp(l_{i+1}, \setp)$. Otherwise, sort the points of $\setp$ inside $\angle_{\alpha}$ angularly around $\alpha$ (clockwise). Now shoot visibility rays $\rho_{0}, \ldots, \rho_{k}$ from $\alpha$ to $\mathcal{I}$ such that between any two consecutive visibility rays there is exactly one point of $\setp$, and use each visibility ray $\rho = \rho_{i}$, $0\leq i\leq k$, to create paths $\rho_{\downarrow}$ and $\rho_{\uparrow}$ as described before. Since $\angle_{\alpha}$ is non-empty, ray $\rho$ defines the homotopy class that paths $\widetilde{\rho_{\downarrow}}, \widetilde{\rho_{\uparrow}}$ belong to. Thus, potentially, every ray $\rho_{i}$, $0\leq i\leq k$, could give a PT-path of $\tcomp(l_{i+1}, \setp)$. Figure~\ref{c-tri:figs:16} shows a configuration where a visibility ray does not produce a PT-path $\pi^{\prime}\in\tcomp(l_{i+1}, \setp)$ where $\alpha$ is a convex vertex of an empty pseudo-triangle of $\pi^{\prime}$ bounded by $l_{i+1}$.
	
		\begin{figure}[!htb]
			\begin{center}
				\includegraphics[height=4cm]{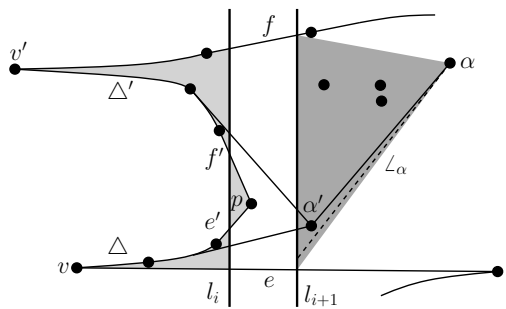}
				\caption{Visibility ray shown in dashed defines the homotopy that the adjacencies connecting $\alpha$ with $\pi$ should follow. In this case the created path is not a PT-path of $\tcomp(l_{i+1}, \setp)$ where $\alpha$ is a convex vertex. It would be nevertheless a PT-path of $\tcomp(l_{i+1}, \setp)$ where $\alpha^{\prime}$ is a convex vertex. This path will be detected when processing $\alpha^{\prime}$.}
				\label{c-tri:figs:16}
			\end{center}
		\end{figure}
		
		So, given $\alpha\in A$, obtaining the points of $\setp$ lying inside $\angle_{\alpha}$, and their sorted order around $\alpha$, can be done in $O(n\log(n))$ time. For each visibility ray $\rho\in\{\rho_{i}\}_{i = 0}^{k}$, we can construct the paths $\rho_{\downarrow}, \rho_{\uparrow}$ in $O(n)$ time, and the shortest homotopic paths $\widetilde{\rho_{\downarrow}}, \widetilde{\rho_{\uparrow}}$ can be computed in $O\left(n^{2}\right)$, see~\cite{DBLP:journals/jal/Bespamyatnikh03} and references therein. Thus, we spend $O\left(n^{3}\right)$ time to
exhaust all possibilities for $\alpha$, and it can spawn $O(n)$ different PT-paths of $\tcomp(l_{i+1}, \setp)$. Doing this for every element of $A$ takes $O\left(n^{4}\right)$ time in total, where also the total number of PT-paths produced is $O\left(n^{2}\right)$. Clearly, by construction, the union of each PT-path $\pi^{\prime}\in\tcomp(l_{i+1}, \setp)$ constructed this way from a PT-path $\pi\in\tcomp(l_{i}, \setp)$ is non-crossing and pointed.

		\begin{figure}[!htb]
			\begin{center}
				\begin{minipage}[b][7cm][t]{7cm}
					\begin{center}
						\includegraphics[height=4cm]{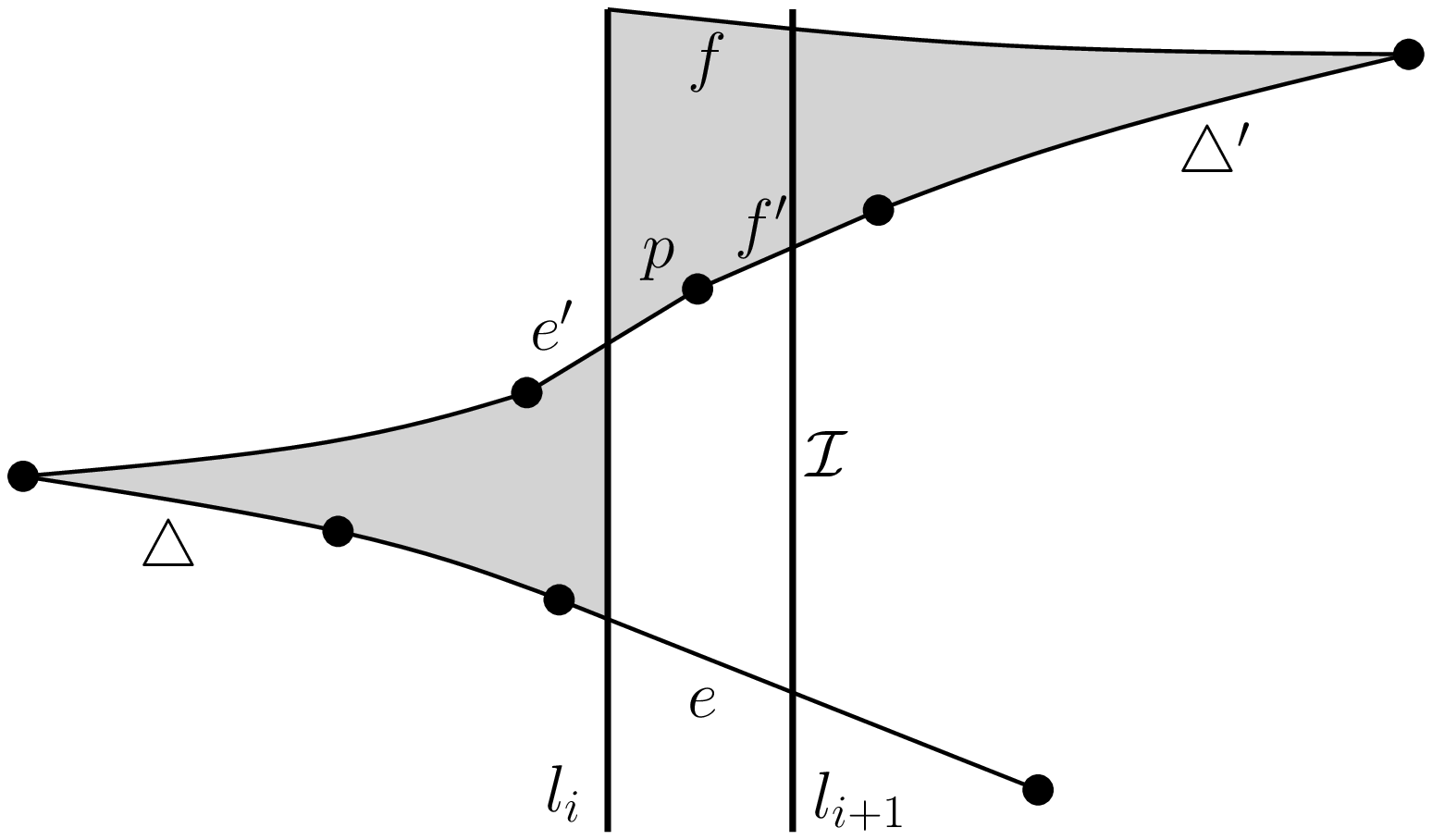}
						\caption{The symmetric configuration in which $\triangle$ and $\triangle^{\prime}$ lie on opposite sides is also possible.}
						\label{c-tri:figs:5}
					\end{center}
				\end{minipage}
			\quad
				\begin{minipage}[b][7cm][t]{7cm}
					\begin{center}
						\includegraphics[height=4cm]{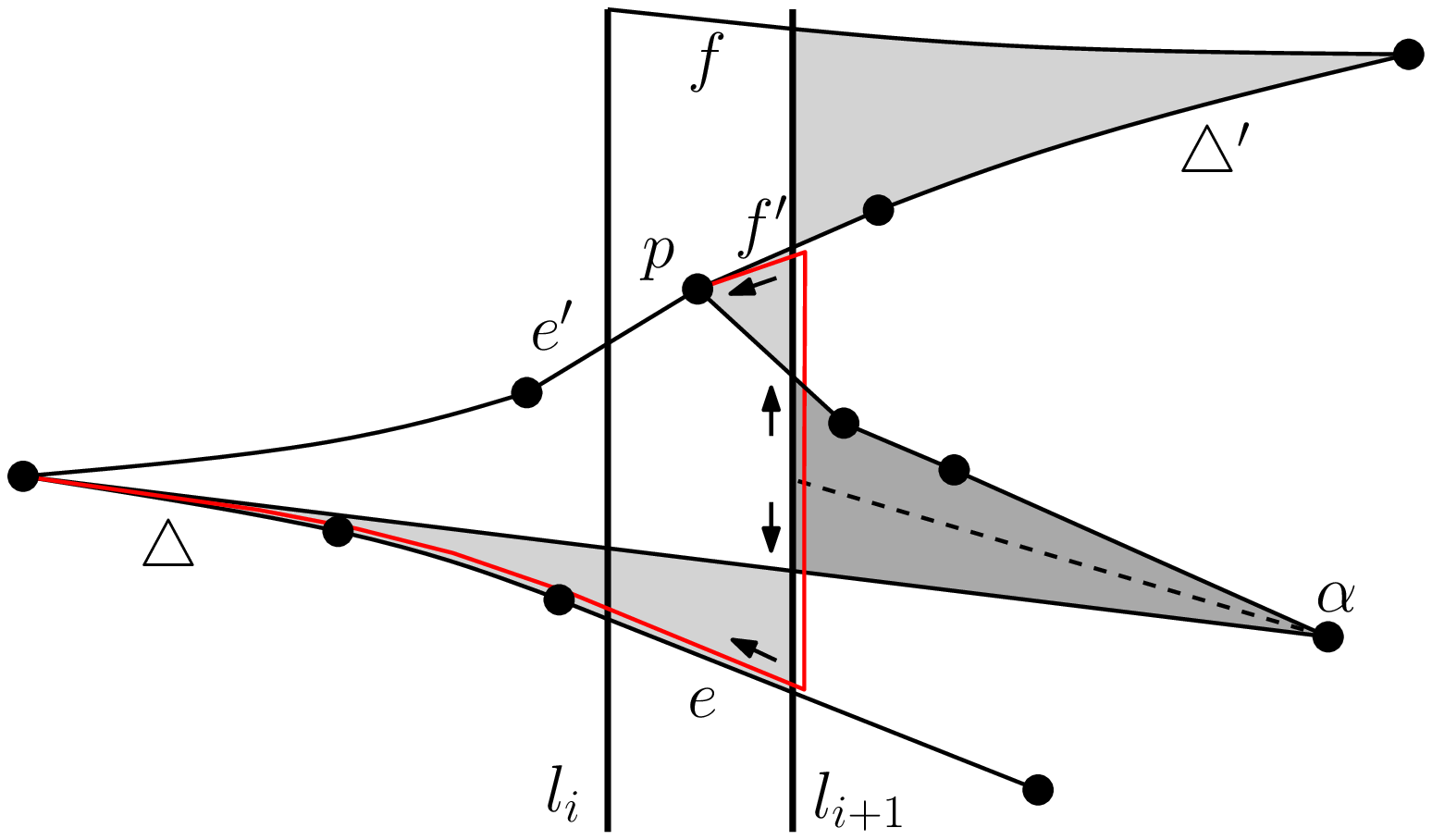}
						\caption{The red lines connect $\alpha$ to $p$ and to the leftmost convex vertex of $\triangle$ via the visibility ray shown dashed. These two paths define the homotopy the local changes must follow.}
						\label{c-tri:figs:6}
					\end{center}
				\end{minipage}
			\end{center}
		\end{figure}
		
		If $p$ is \emph{not} the convex vertex of an empty pseudo-triangle of $\pi$ bounded by $l_{i}$, then the situation is essentially like displayed in Figure~\ref{c-tri:figs:5}. A similar construction can be done that looks like mirror-reflected. Using the same notation as before, the empty pseudo-triangles $\triangle, \triangle^{\prime}$ lie on different sides of $l_{i}$ and $l_{i+1}$. Also, only the edges $e,f$ are good w.r.t.~$l_{i}$ and $l_{i+1}$, edge $e^{\prime}$ is good w.r.t.~$l_{i}$ only, and edge $f^{\prime}$ is good w.r.t.~$l_{i+1}$ only. In the ``mirror-reflected'' construction, edge $f^{\prime}$ is the one that is good w.r.t.~$l_{i}$, and edge $e^{\prime}$ is the one that is good w.r.t.~$l_{i+1}$.
		
		Observe that we cannot extend $\triangle$ to an empty pseudo-triangle bounded by $l_{i+1}$ since point $p$ would be a convex vertex of such extension, and thus that extension would be a pseudo-quadrilateral, see Figure~\ref{c-tri:figs:5}. No such a problem occurs with $\triangle^{\prime}$.
		
		The way we deal with this situation is very similar to the previous case. Let $\mathcal{I}$ and $A$ be as before. For every $\alpha\in A$ define again the visibility cone $\angle_{\alpha}$, and construct the set of rays $\{\rho_{i}\}_{i = 0}^{k}$ as well. For $\rho\in\{\rho_{i}\}_{i = 0}^{k}$, define the path $\rho_{\downarrow}$ just as before. This time, however, define $\rho_{\uparrow}$ as the path that connects the intersection point of $\rho$ and $l_{i+1}$ with $p$ by following $l_{i+1}$ up to edge $f^{\prime}$, and then $f^{\prime}$ to $p$. We now compute the two shortest paths $\widetilde{\rho_{\downarrow}}, \widetilde{\rho_{\uparrow}}$ homotopic to $\rho_{\downarrow}\cup\rho, \rho_{\uparrow}\cup\rho$ respectively. So again we exhaust all possibilities of every point in $A$. The time remains $O\left(n^{4}\right)$ in total, and again the number of PT-paths of $\tcomp(l_{i+1}, \setp)$ produced is $O\left(n^{2}\right)$. If $\alpha$ is the right endpoint of $e^{\prime}$ or of $f^{\prime}$, then one of the shortest homotopic paths overlaps with the adjacencies of $\pi$, and thus it must be ignored in the resulting PT-path of $\tcomp(l_{i+1}, \setp)$. The reader can use Figure~\ref{c-tri:figs:6} by imaging pulling $\alpha$ to the right endpoint of $e^{\prime}$. Another example of such a degeneracy will be seen later on. 
		
		Observe again that pointedness and planarity is kept.
				
		\item If $p$ lies on $\Conv(\setp)$ then one possible configuration is as the one shown in Figure~\ref{c-tri:figs:7:a}, in which $p$ is the last, or first, vertex of $\pi\in\tcomp(l_{i}, \setp)$. Another possibility arises when $p$ is the second, or second-to-the-last, vertex of $\pi$. Which shortest homotopic paths should be computed should be clear from the figure by now.
		
		\begin{figure}[!htb]
			\begin{center}
		%	\fbox{
				\begin{minipage}[b][6.5cm][t]{7cm}
					\begin{center}
						\includegraphics[height=4cm]{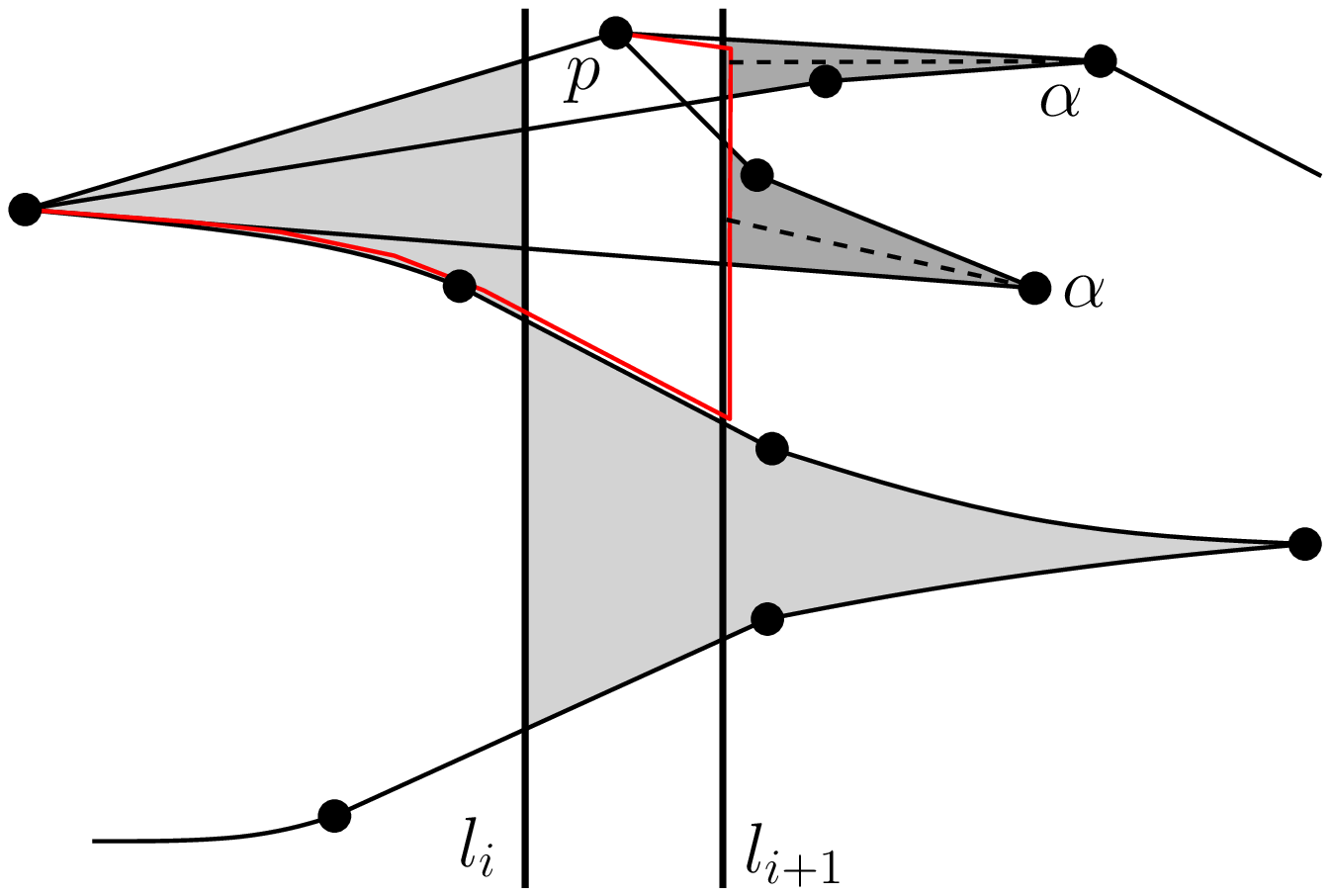}
						\caption{In this case $p$ lies on $\Conv(\setp)$ and its degree in $\pi\in\tcomp(l_{i}, \setp)$ is exactly one. Two possibilities using two different $\alpha$'s are shown.}
						\label{c-tri:figs:7:a}
					\end{center}
				\end{minipage}
		%	}
			\quad
		%	\fbox{
				\begin{minipage}[b][6.5cm][t]{7cm}
					\begin{center}
						\includegraphics[height=4cm]{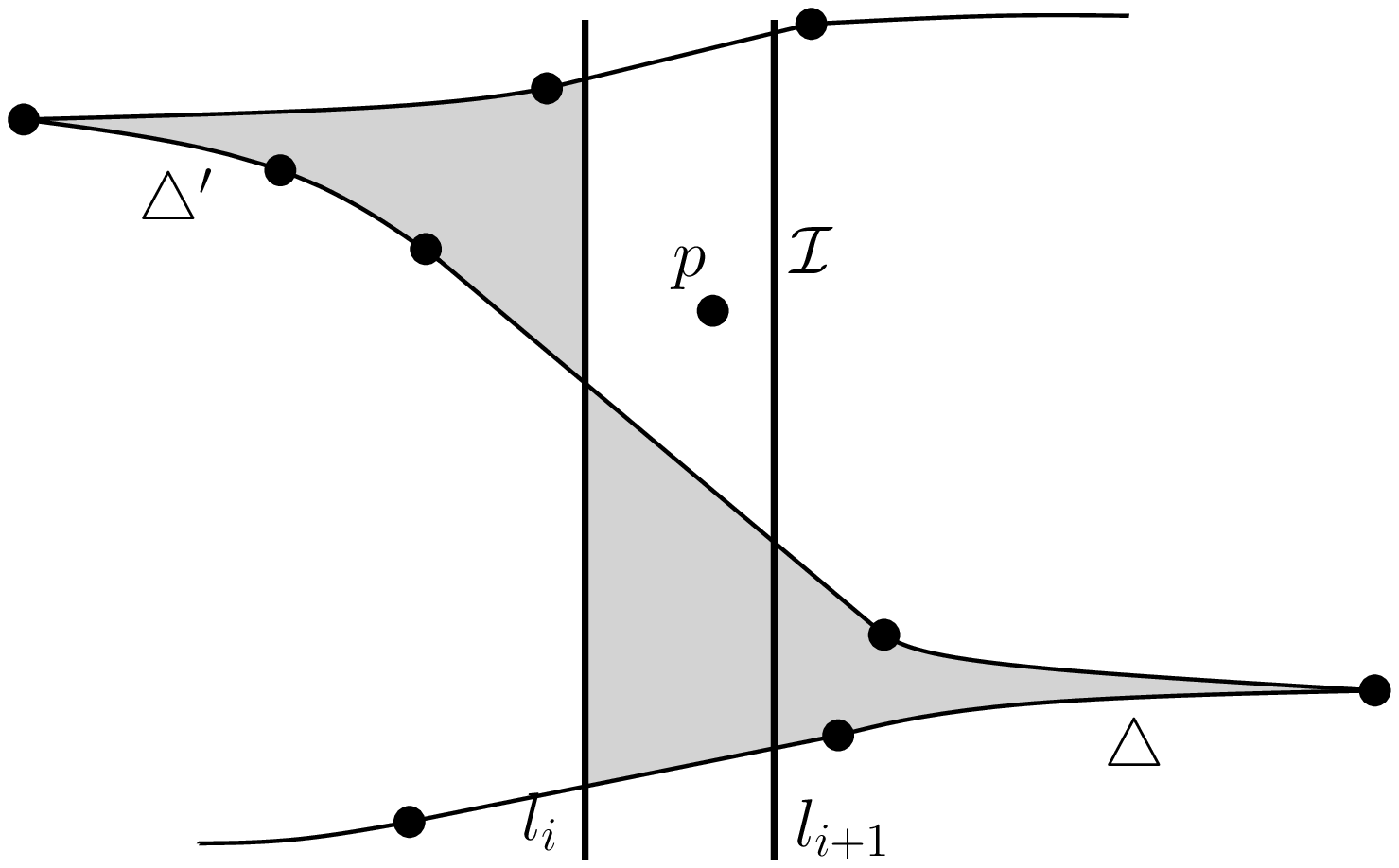}
						\caption{Although $p$ is not a vertex of $\pi\in\tcomp(l_{i}, \setp)$, it must be part of some $\pi^{\prime}\in\tcomp(l_{i+1}, \setp)$ since the empty pseudo-triangle $\triangle^{\prime}$ of $\pi$ cannot be extended further.}
						\label{c-tri:figs:7:b}
					\end{center}
				\end{minipage}
			\end{center}
		%	}
		\end{figure}
		
	\end{itemize}
	
	\item If $\pi$ \emph{does not} have $p$ as a vertex, then $p$ must necessarily lie inside $\Conv(\setp)$. The situation is in general as displayed in Figure~\ref{c-tri:figs:7:b}. In this case there are two kinds of local changes that can be made; one kind is produced by a single point $\alpha\in\setp$, and the other kind is produced by pairs of points $\alpha, \beta\in\setp$, see Figures~\ref{c-tri:figs:17:a} and~\ref{c-tri:figs:17:b} for a reference.
	
		\begin{figure}[!htb]
			\begin{center}
		%	\fbox{
				\begin{minipage}[b][5cm][t]{7cm}
					\begin{center}
						\includegraphics[height=4cm]{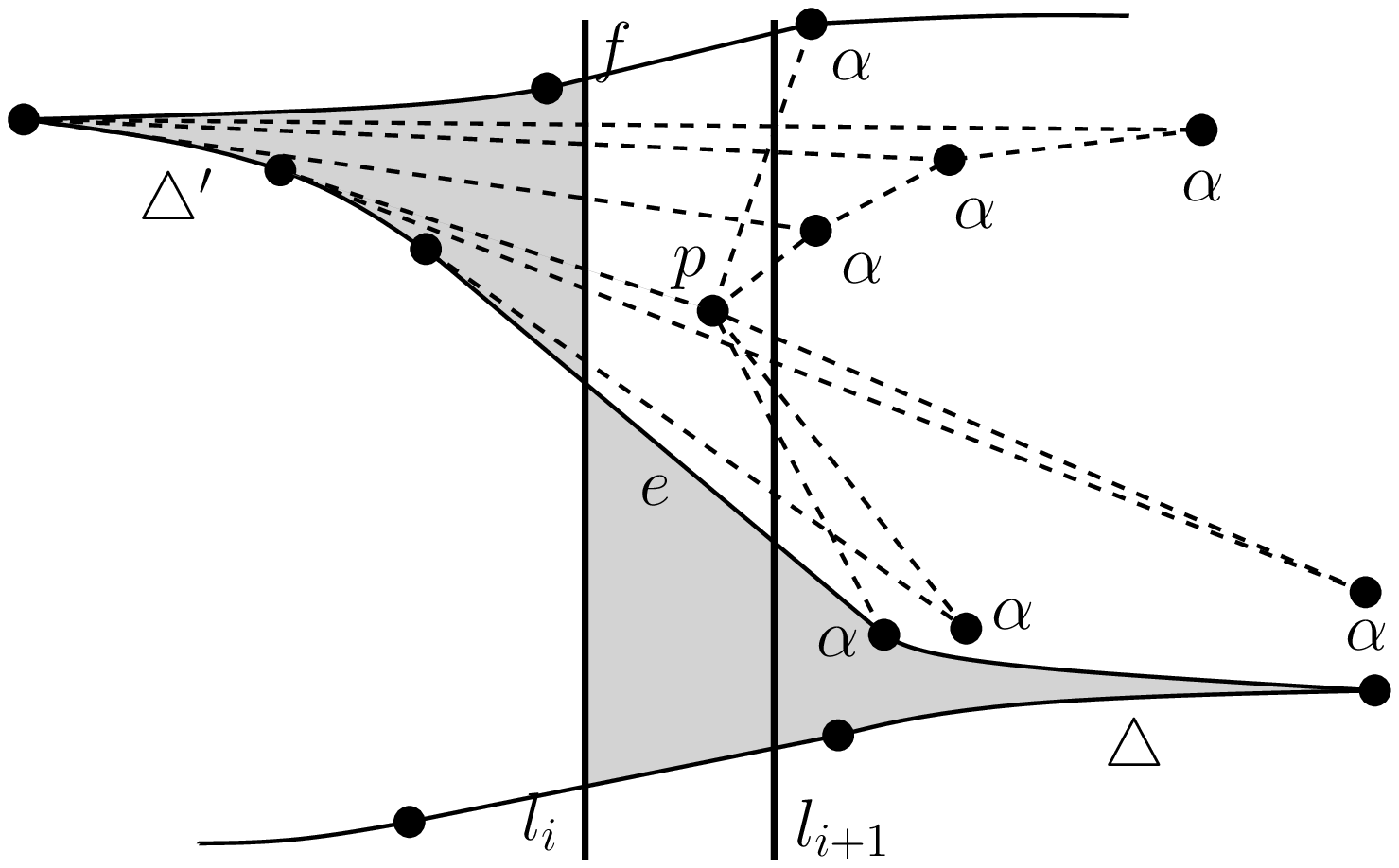}
						\caption{Changes are produced only by one point $\alpha$.}
						\label{c-tri:figs:17:a}
					\end{center}
				\end{minipage}
		%	}
			\quad
		%	\fbox{
				\begin{minipage}[b][5cm][t]{7cm}
					\begin{center}
						\includegraphics[height=4cm]{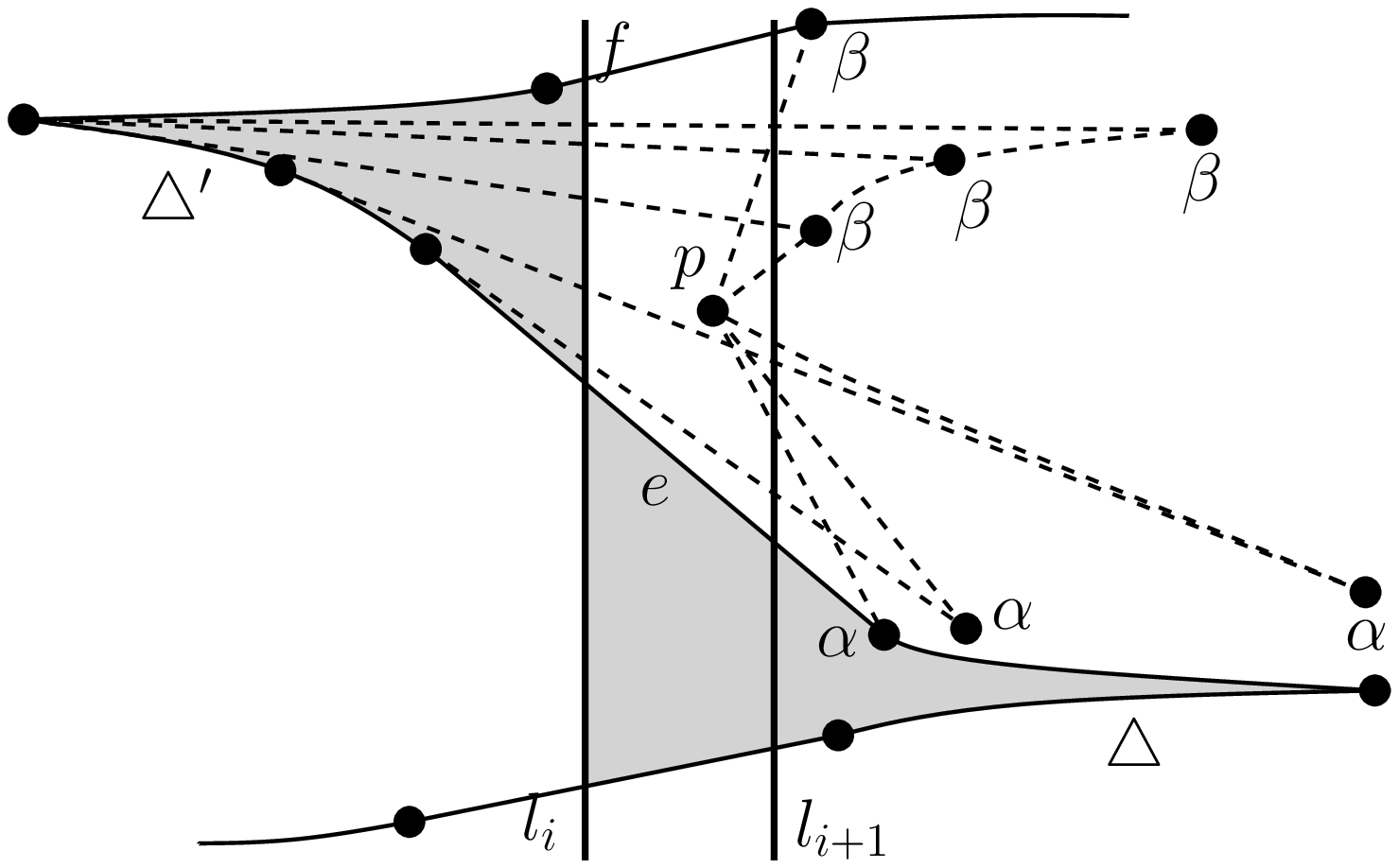}
						\caption{Changes are now produced by pairs of points $\alpha, \beta$.}
						\label{c-tri:figs:17:b}
					\end{center}
				\end{minipage}
			\end{center}
		%	}
		\end{figure}

		Let $\mathcal{I}, A$ be defined as before. Let us see each kind of local changes in turn. For the local changes produced by just one point $\alpha\in A\subset\setp$, the PT-paths of $\tcomp(l_{i+1}, \setp)$ produced look like the ones in Figure~\ref{c-tri:figs:18}.
		
		\begin{figure}[!htb]
			\begin{center}
		%	\fbox{
				\begin{minipage}[b][4.2cm][t]{7cm}
					\begin{center}
						\includegraphics[height=4cm]{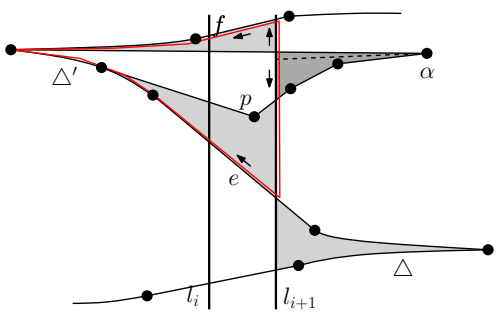}
					\end{center}
				\end{minipage}
		%	}
			\quad
		%	\fbox{
				\begin{minipage}[b][4.2cm][t]{7cm}
					\begin{center}
						\includegraphics[height=4cm]{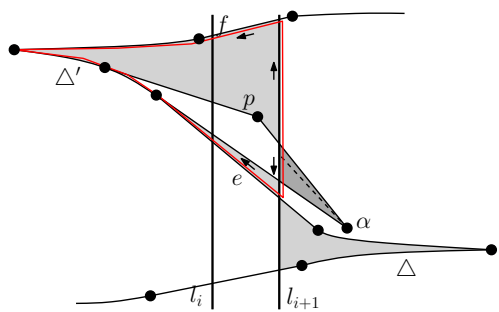}
					\end{center}
				\end{minipage}
			\caption{Two different PT-paths of $\tcomp(l_{i+1}, \setp)$ produced by two different points.}
			\label{c-tri:figs:18}
			\end{center}
		%	}
		\end{figure}
		
		Using the same ideas as before, of following red paths, the adjacencies of $\alpha$ in a PT-path of $\tcomp(l_{i+1}, \setp)$ are two shortest paths homotopic to the two red paths shown in Figure~\ref{c-tri:figs:18}, one going up and the other going down, and the visibility ray from $\alpha$ to $\mathcal{I}$, shown dashed in Figure~\ref{c-tri:figs:18}. Using the visibility cone $\angle_{\alpha}$ we can again exhaust all possibilities for $\alpha$ in $O\left(n^{3}\right)$ time, and thus we exhaust all of $A$ in $O\left(n^{4}\right)$ time, producing $O\left(n^{2}\right)$ PT-paths of $\tcomp(l_{i+1}, \setp)$ in total. As a remark, observe that if $\alpha$ is the right endpoint of edge $f$ or $e$, then one of the shortest homotopic paths overlaps completely with adjacencies of $\pi\in\tcomp(l_{i}, \setp)$, this path can be ignored, and then the produced PT-path of $\tcomp(l_{i+1}, \setp)$ would look like the one in Figure~\ref{c-tri:figs:19:a}, where the path of $\pi$ connecting $\alpha$ with the leftmost convex vertex of $\triangle^{\prime}$ is the one ignored.
		
		\begin{figure}[!htb]
			\begin{center}
		%	\fbox{
				\begin{minipage}[b][6.5cm][t]{7cm}
					\begin{center}
						\includegraphics[height=4cm]{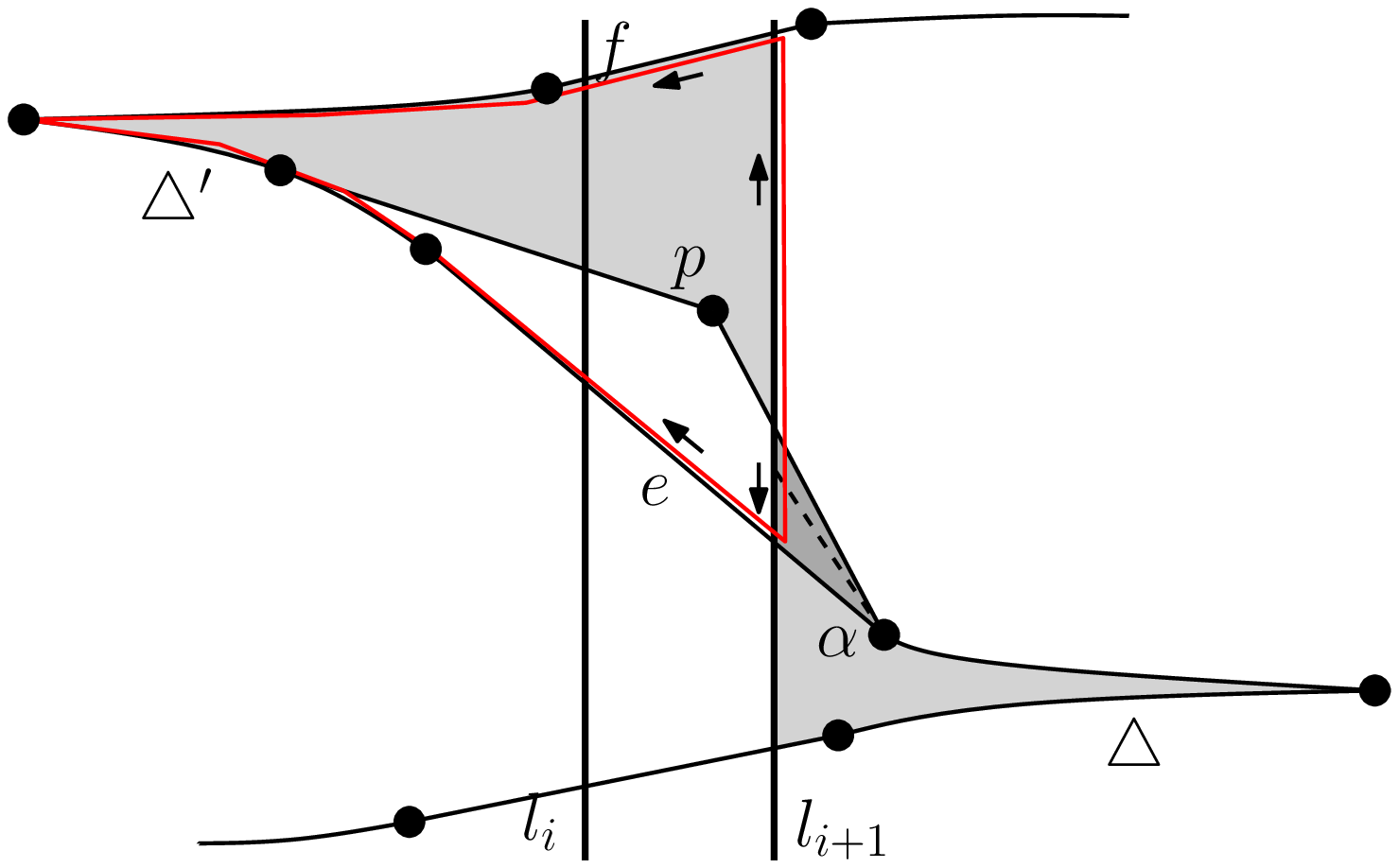}
						\caption{A particular case occurs if $\alpha$ coincides with an endpoint of $e$ or of $f$.}
						\label{c-tri:figs:19:a}
					\end{center}
				\end{minipage}
		%	}
			\quad
		%	\fbox{
				\begin{minipage}[b][6.5cm][t]{7cm}
					\begin{center}
						\includegraphics[height=4cm]{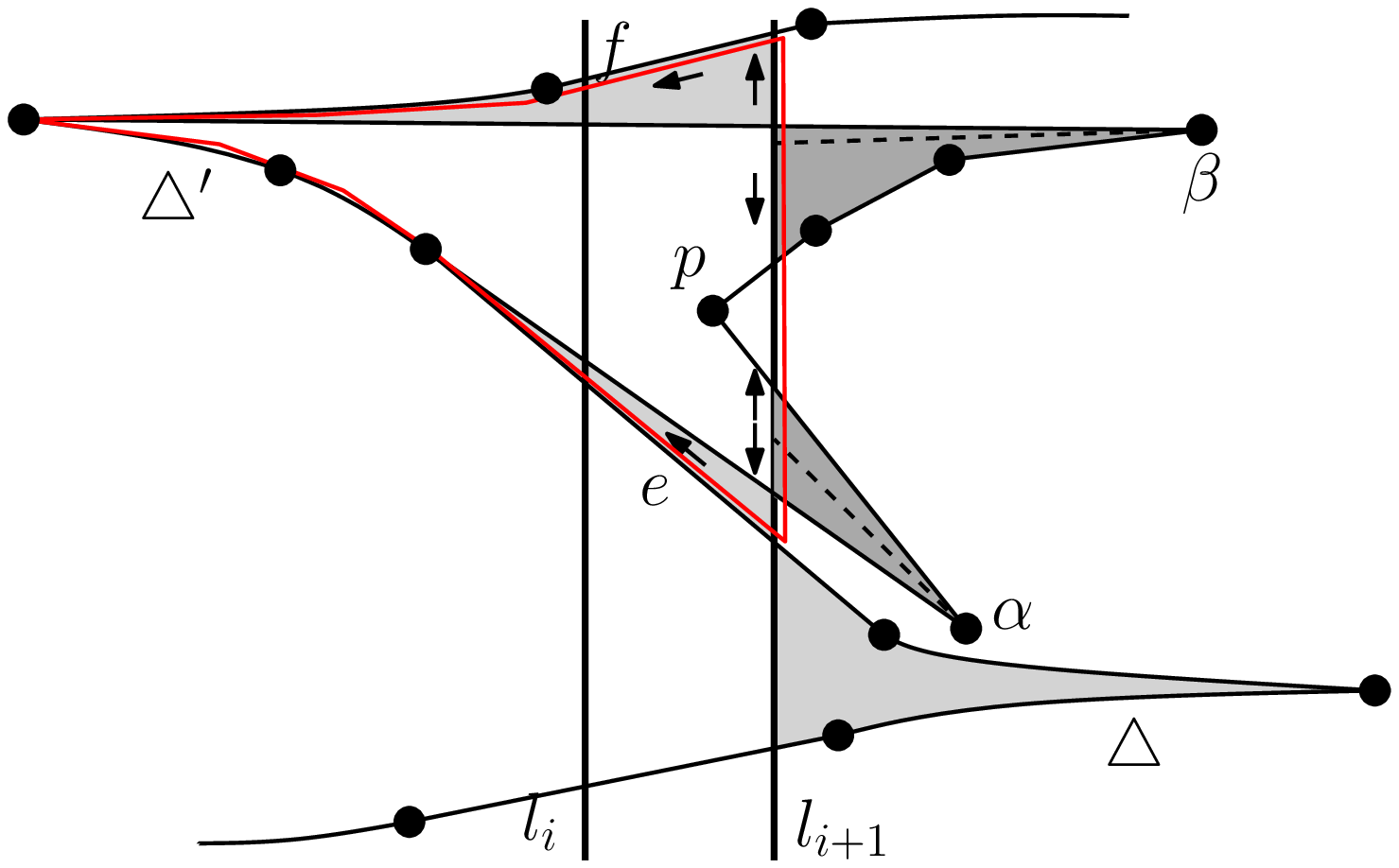}
						\caption{Combining the PT-paths shown in Figure~\ref{c-tri:figs:18} we obtained yet another PT-path of $\tcomp(l_{i+1}, \setp)$, we just had to remove the adjacencies of $p$ that make it non-pointed.}
						\label{c-tri:figs:19:b}
					\end{center}
				\end{minipage}
			\end{center}
		%	}
		\end{figure}
		
		As for the local changes produced by pairs of points $\alpha,\beta\in A\subset\setp$, the PT-paths of $\tcomp(l_{i+1}, \setp)$ produced look like the one shown in Figure~\ref{c-tri:figs:19:b}. If we have constructed the PT-paths produced by a single $\alpha\in A$, then we can construct the paths produced by pairs $\alpha,\beta\in A$ by combining the local changes applied to $\alpha$, with all the local changes applied to $\beta$. For example, the PT-path $\pi^{\prime}\in\tcomp(l_{i+1}, \setp)$ shown in Figure~\ref{c-tri:figs:19:b} is obtained from the PT-paths of Figure~\ref{c-tri:figs:18}, by removing the adjacencies at $p$ that do not make it pointed. So, when combining changes we have, of course, to be careful about pointedness and planarity of the construction, which takes not much more effort to verify.
		
		Since the total number of different PT-paths produced by $\alpha, \beta$ is $O\left(n \right)$, by combining them we will obtained no more than $O\left(n^{2}\right)$ PT-paths. Thus, by going through every pair $\alpha,\beta\in A$, the total number of PT-paths of $\tcomp(l_{i+1}, \setp)$ produced is $O\left(n^{4}\right)$, and all this can be achieved in $O\left(n^{6}\right)$ time, since combining a pair can be achieved in $O\left(n^{2}\right)$ time. 
		
		This concludes the explanation of the local changes that need to be made to PT-paths as we sweep.

\end{enumerate}

As for T-paths, the local changes of PT-paths can be seen in reverse order, as going from line $l_{i+1}$ to $l_{i}$, so we will use again the notation $\pi\leftrightarrow\pi^{\prime}$ to denote the fact that $\pi^{\prime}\in\tcomp(l_{i+1}, \setp)$ is produced from $\pi\in\tcomp(l_{i}, \setp)$ in one direction, so $\pi^{\prime}\in\mu(\pi)$, and $\pi$ is produced from $\pi^{\prime}$ in the reverse direction, so $\pi\in\lambda(\pi^{\prime})$.

We can now prove the following result which is the equivalent to Lemma~\ref{lemmas:local} on page~\pageref{lemmas:local} for T-paths:

\begin{lemma}\label{c-tri:lemmas:pt-paths:1}
	Given $\tcomp(l_{i}, \setp)$, every PT-path of $\tcomp(l_{i+1}, \setp)$ is produced by the local changes just explained. Moreover, for each $\pi\in\tcomp(l_{i}, \setp)$, the cardinality of $\mu(\pi)$ is $O\left(n^{4}\right)$, and we can correctly compute $\lambda(\pi^{\prime})$, for each $\pi^{\prime}\in\tcomp(l_{i+1}, \setp)$, in time $O\left(n^{6}\cdot t_{i}\right)$, where $t_{i} = |\tcomp(l_{i}, \setp)|$.
\end{lemma}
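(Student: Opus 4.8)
The plan is to mirror the proof of Lemma~\ref{lemmas:local} for T-paths, using the reversibility $\pi\leftrightarrow\pi'$ of the local changes to reduce everything to an analysis of how the canonical PT-path of a \emph{fixed} pseudo-triangulation behaves as the sweep line crosses the single point $p=p_{i+1}$. I split the argument into a completeness part (every $\pi'\in\tcomp(l_{i+1},\setp)$ is produced) and a counting/complexity part ($|\mu(\pi)|=O(n^4)$ and $\lambda(\pi')$ computable in $O(n^6\cdot t_i)$ total).

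For completeness, I fix $\pi'\in\tcomp(l_{i+1},\setp)$ and produce a $\pi\in\tcomp(l_i,\setp)$ with $\pi\leftrightarrow\pi'$ by an explicit reverse construction, exactly as in the T-path case. One identifies the empty pseudo-triangle of $\pi'$ (or the pair of good edges incident to $p$, in the sub-case where $p$ is a vertex of $\pi'$) that, while sweeping from $l_{i+1}$ back to $l_i$, can no longer be realized because $p$ moves into it; then one re-routes the chain through $p$ by the taut-string/geodesic argument, iteratively replacing a non-empty pseudo-triangle by the shortest homotopic detour around an interior point of $\setp$ until the region becomes empty. Finiteness of $\setp$ forces termination, the pulling-taut step keeps all bends at reflex vertices (hence pointed), and the resulting chain has empty pseudo-triangle regions and pointed reflex vertices, so it is a pointed edge set that extends — by Theorem~\ref{c-tri:theorems:pointed-pt} — to a pointed pseudo-triangulation $S'$ whose PT-path w.r.t.~$l_i$ it equals, by uniqueness (Theorem~\ref{c-tri:theorems:pt-pathsOswin}). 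Thus $\pi\in\tcomp(l_i,\setp)$, and it matches one of the local-change templates of case~(\oldstylenums{1}) or~(\oldstylenums{2}), so $\pi\leftrightarrow\pi'$.

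The bound $|\mu(\pi)|=O(n^4)$ is read off from the case analysis preceding the lemma: if $p$ is a vertex of $\pi$, each of the $O(n)$ apex candidates $\alpha\in A$ spawns $O(n)$ PT-paths, for $O(n^2)$ in all; if $p$ is not a vertex of $\pi$, the single-apex changes again give $O(n^2)$, but the two-apex changes range over $O(n^2)$ pairs $(\alpha,\beta)$ each yielding $O(n^2)$ combined candidates, for $O(n^4)$, which dominates. For $\lambda(\pi')$ I use $\pi\in\lambda(\pi')\iff\pi\leftrightarrow\pi'$: ``$\Leftarrow$'' holds because, by construction, every PT-path produced by a local change has a crossing-free, pointed union with $\pi$; ``$\Rightarrow$'' is the contrapositive that $\pi\not\leftrightarrow\pi'$ implies $\pi$ and $\pi'$ are not compatible. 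Assuming compatibility there is a pseudo-triangulation $S$ containing both chains, whence $\pi=\pt{l_i}{S}$ and $\pi'=\pt{l_{i+1}}{S}$ by uniqueness; isolating the constantly-many adjacencies of $p$ in $S$ on the $l_i$-side and checking they realize one of the local-change templates shows the reverse local change applied to $\pi'$ produces exactly $\pt{l_i}{S}=\pi$, contradicting $\pi\not\leftrightarrow\pi'$. Hence $\lambda(\pi')$ is obtained by running the forward local-change procedure for every $\pi\in\tcomp(l_i,\setp)$ and recording the $\pi'$ it hits; that procedure costs $O(n^6)$ per $\pi$ (the bottleneck being case~(\oldstylenums{2}): $O(n^2)$ pairs $(\alpha,\beta)$, each combining the $O(n)$ changes at $\alpha$ with the $O(n)$ changes at $\beta$ into $O(n^2)$ candidates, each verified — emptiness via the constant-time oracle, pointedness and planarity directly, realizability via an $O(n^2)$ shortest-homotopic-path computation — in $O(n^2)$ time), for $O(n^6\cdot t_i)$ overall.

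The main obstacle I anticipate is the ``$\Rightarrow$'' direction, i.e.\ a sweep-continuity statement for PT-paths: as the vertical line moves from $l_i$ to $l_{i+1}$ across the single point $p$, the canonical PT-path of a fixed pseudo-triangulation changes only in the controlled local ways enumerated before the lemma. For T-paths this was easy because the sole obstruction was a single empty wedge swallowing $p$; here one must instead track the good/signpost edges around $p$, the empty pseudo-triangles on each side of $p$, and the face of $S^{*}$ used to splice consecutive good edges, and verify in every sub-case of cases~(\oldstylenums{1}) and~(\oldstylenums{2}) that the before/after PT-paths are related by one of the listed substitutions, together with the bookkeeping that the union remains pointed. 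This is routine but lengthy; the $O(n^4)$ count and the $O(n^6\cdot t_i)$ running time then follow mechanically from the local-change description and the cited visibility and shortest-homotopic-path subroutines.
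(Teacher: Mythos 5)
Your proposal is correct and follows essentially the same route as the paper: completeness via reversing the local changes (the analogue of the first part of Lemma~\ref{lemmas:local}), correctness of $\lambda(\pi^{\prime})$ via ``$\pi\not\leftrightarrow\pi^{\prime}$ implies non-compatibility'' using the uniqueness of the PT-path of a pseudo-triangulation containing both chains (completed via Theorem~\ref{c-tri:theorems:pointed-pt}), and the $O(n^{4})$ and $O(n^{6}\cdot t_{i})$ bounds read off from the case analysis of the local changes. The case-by-case sweep-continuity verification you flag as ``routine but lengthy'' is likewise deferred to the preceding local-change discussion in the paper's own proof, so your level of detail matches it.
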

\begin{proof}
	For the first part of the lemma an argument as the one we used for the first part of Lemma~\ref{lemmas:local} can be used. We can check that given any PT-path $\pi^{\prime}\in\tcomp(l_{i+1}, \setp)$ we can always obtain a PT-path $\pi\in\tcomp(l_{i}, \setp)$ by locally changing $\pi^{\prime}$, and thus every PT-path of $\tcomp(l_{i+1}, \setp)$ is produced by the relation $\pi\leftrightarrow\pi^{\prime}$. The second part, the correct computation of $\lambda(\pi^{\prime})$ for every $\pi^{\prime}\in\tcomp(l_{i+1}, \setp)$, also follows by a similar argument as the one we did in Lemma~\ref{lemmas:local} in the corresponding part, that is, $\pi\not\leftrightarrow\pi^{\prime}$ implies that $\pi$ and $\pi^{\prime}$ properly cross. 
	
	Finally, the size of $\mu(\pi)$ and the time it takes to compute $\lambda(\pi^{\prime})$, for \emph{every} $\pi^{\prime}\in\tcomp(l_{i+1}, \setp)$, follows from the explanations done while explaining the local changes of PT-paths. Hence the lemma follows.
\end{proof}

This concludes the proof of Theorem~\ref{c-tri:theorems:pt-paths}.

\section{Discussion and conclusions}\label{c-tri:sections:conclusionsT-ST}

The problem of ``algorithmically'' counting crossing-free structures defined on given sets of points is directly related to the problem of generating random crossing-free structures. For example, we might be interested in producing a triangulation of a given set of points $\setp$ uniformly at random, that is, \emph{every} triangulation of $\setp$ must appear with probability $\frac{1}{|\F_{T}(\setp)|}$. This allows us to study structural properties of an ``average'' triangulation of $\setp$, for example, to check how many of its vertices have a given degree, or to verify what fraction of its vertices has a degree of certain parity. This could allow us to make conjectures on triangulations and to try to prove them using induction, for which the base cases can be checked by computer. 

Methods to produce random triangulations are known, for example, in~\cite{DBLP:conf/compgeom/Aichholzer99} a method is explained that produces random triangulations using the divide-and-conquer algorithm therein presented. For the sweep line algorithms that we just presented another method can be used (due to a different paradigm): Assume we want to generate a random triangulation, generating random pointed pseudo-triangulations is the same. Remember that we sweep from left to right, so we store for every event point $l_{i}$, $1\leq i\leq n-1$, and for every T-path $\pi$ found w.r.t.~$l_{i}$, the cardinality of $\T(\pi)$, which is the number of structures to the left of $l_{i}$ that are compatible with $\pi$. We construct a random triangulation by sweeping in reverse order once the algorithm has finished the counting. Since there is only one path w.r.t.~$l_{n-1}$ we choose it. Going from $l_{i+1}$ to $l_{i}$, $1\leq i < n-1$, and having fixed a path $\pi_{i+1}$ w.r.t.~$l_{i+1}$, we choose a path $\pi_{i}$ w.r.t.~$l_{i}$ with probability $\frac{|\T(\pi_{i})|}{|\T(\pi_{i+1})|}$. By the time we arrive at $l_{1}$ we have generated a triangulation with probability:

\begin{align*}
1\cdot\frac{|\T(\pi_{n-2})|}{|\T(\pi_{n-1})|}\cdot\frac{|\T(\pi_{n-3})|}{|\T(\pi_{n-2})|}\cdots\frac{|\T(\pi_{1})|}{|\T(\pi_{2})|} = \frac{|\T(\pi_{1})|}{|\T(\pi_{n-1})|} = \frac{1}{|\F_{T}(\setp)|}
\end{align*}

\noindent since there is only one T-path w.r.t.~$l_{1}$. The downside of this method is that we need to compute the number of triangulations of $\setp$ beforehand. 

There is nevertheless a different method that seems to be quite good in practice, this method works by randomly flipping edges of a triangulation (with a pseudo-triangulation it would be the same). It is known that this method leads to a random triangulation in polynomial time for sets of points in convex position, see~\cite{tri-path, catalan-structs}. Note, however, that since the number of triangulations of a convex polygon is a Catalan number, a triangulation generated uniformly at random can be obtained in optimal linear time, see~\cite{DBLP:journals/tomacs/EpsteinS94} and references therein. For general sets of points nothing is known about the convergence of the random flipping procedure. This is a very interesting and challenging open problem.

\subsection{Conclusions}

In this paper we have presented algorithms to compute the number of triangulations and pseudo-triangulations of a given set of points $\setp$. Both algorithms are rather simple and they are based on T-paths, PT-paths and the sweep line paradigm. We also provided the first non-trivial upper bound for the number of T-paths of $\setp$ w.r.t.~to a given separating line. Unfortunately, this number turned out to be rather large, $O\left(9^{n}\right)$. We believe that the real upper bound for this number is closer to $4^{n}$, which remains being very large nevertheless. However, we are not aware of any configuration of points, large enough, having as many T-paths as triangulations. This has previously been supported by experiments and proven for many known configurations of points. 

It seems that our T-path algorithm really is counting triangulations in time sub-linear in the number of triangulations, so we believe that this algorithm is still very interesting from the theoretical point of view. We suspect the same about our PT-path algorithm for counting pseudo-triangulation. An easy argument can be done to show that these algorithms are, in any case, no worse than enumeration algorithms. Although this sounds pessimistic, there are algorithms for which such an argument cannot be done.

The holy grail of counting triangulations is to prove polynomial time or \#P-hardness. So far we have failed to prove any of them. Thus, the most interesting open questions at this moment are (in ascending order of importance): (\oldstylenums{1}) For $n$ large enough, is it true that there are always asymptotically more triangulations (pseudo-triangulations) than T-paths (PT-paths) w.r.t.~a given separating line? (\oldstylenums{2}) Is it possible to count triangulations (pseudo-triangulations) in sub-exponential time? Or even count approximately in polynomial time? (\oldstylenums{3}) Is the problem of counting triangulations (pseudo-triangulations) in P, or is it \#P-complete? Each one of these questions looks very challenging.

\section{Acknowledgement}

We thank Raimund Seidel for valuable feedback and interesting discussions.

\small
\bibliographystyle{ieeetr}
\bibliography{bibliography-sweep-line-algorithm}
\end{document}